\newcommand{\algorithmfootnote}[2][\footnotesize]{%
  \let\old@algocf@finish\@algocf@finish
  \def\@algocf@finish{\old@algocf@finish
    \leavevmode\rlap{\begin{minipage}{\linewidth}
    #1#2
    \end{minipage}}%
  }%
}
\newtheorem{lemma}{Lemma}
\newtheorem{proposition}{Proposition}
\newtheorem{theorem}{Theorem}
\theoremstyle{definition}
\newtheorem{example}{Example}
\theoremstyle{remark}
\newtheorem{remark}{Remark}
\def\ind{\mathbbm{1}}   
\newcommand{\given}{\,|\,}
\def\bbR{\mathbb{R}}
\def\bbP{\mathbb{P}}
\def\gap{\mathrm{Gap}}
\def\bbN{\mathbb{N}}
\def\bbE{\mathbb{E}}
\def\bP{\mathbf{P}}
\def\bK{\mathbf{K}}
\def\MH{\mathbf{K}_{\mathrm{RW}}}
\def\MTM{\mathbf{P}_{\mathrm{MTM}}}
\def\cE{\mathcal{E}}
\def\cX{\mathcal{X}}
\def\cN{\mathcal{N}}
\def\tmix{t_{\mathrm{mix}}}
\DeclarePairedDelimiter{\TV}{\lVert}{\rVert_{\mathrm{TV}}}
\DeclarePairedDelimiter{\inner}{\langle}{\rangle_{\pi}}
\def\tcb{\textcolor{blue}}
\title{Rapidly Mixing Multiple-try Metropolis Algorithms for Model Selection Problems}
\author{%
  Hyunwoong Chang\thanks{Equal contribution} \\
  Department of Statistics\\
  Texas A\&M University\\
  \texttt{hwchang@stat.tamu.edu} \\
  \And
  Changwoo J. Lee$^*$ \\
  Department of Statistics\\
  Texas A\&M University\\
  \texttt{c.lee@stat.tamu.edu} \\
  \AND
  Zhao Tang Luo \\
  Department of Statistics\\
  Texas A\&M University\\
  \texttt{ztluo@stat.tamu.edu} \\
  \And
  Huiyan Sang\\
  Department of Statistics\\
  Texas A\&M University\\
  \texttt{huiyan@stat.tamu.edu} \\
  \And
  Quan Zhou\\
  Department of Statistics\\
  Texas A\&M University\\
  \texttt{quan@stat.tamu.edu} \\
}
\begin{document}
\maketitle
\begin{abstract}
The multiple-try Metropolis (MTM) algorithm is an extension of the Metropolis-Hastings (MH) algorithm by selecting the proposed state among multiple trials according to some weight function.
Although MTM has gained great popularity owing to its faster empirical convergence and mixing than the standard MH algorithm, 
its theoretical mixing property is rarely studied in the literature due to its complex proposal scheme. 
We prove that MTM can achieve a mixing time bound smaller than that of MH by a factor of the number of trials under a general setting applicable to high-dimensional model selection problems with discrete state spaces.  
Our theoretical results motivate a new class of weight functions called \textit{locally balanced weight functions} and guide the choice of the number of trials, 
which leads to improved performance over standard MTM algorithms.
We support our theoretical results by extensive simulation studies and real data applications with several Bayesian model selection~problems.
\end{abstract}

\section{Introduction}\label{sec:intro}

 The Markov chain Monte Carlo (MCMC) method has become a powerful and standard tool for Bayesian posterior computation. In particular, the Metropolis-Hastings (MH) algorithm is widely employed in many statistical and machine learning models to sample from posterior distributions with intractable normalizing constants. A prominent usage example of MH is the vast class of Bayesian model selection problems with discrete-valued high-dimensional parameters, whose posterior probabilities are infeasible to evaluate because the size of model space potentially grows (super-)exponentially with the number of parameters. Examples of Bayesian model selection problems and their applications are ubiquitous,  including Bayesian variable selection (BVS)~\citep[][and references therein]{tadesse2021handbook}, stochastic block model (SBM)~\citep{holland1983stochastic, nowicki2001estimation}, Bayesian structure learning~\citep{madigan1995bayesian,friedman2003being, chang2022order}, change-point detection model~\citep{barry1993bayesian,chib1998estimation}, spatial clustering models~\citep{li2019spatial,luo2021bayesian,lee2021t}, and many others.

Despite the popularity of MCMC algorithms, their convergence rate analysis is often challenging. Mixing time is a key concept of interest when analyzing the convergence of MCMC, which defines the number of iterations needed to converge to a stationary distribution within a small total variation distance~\citep{levin2017markov}. We say a chain is \textit{rapidly mixing} if the mixing time grows at most polynomially in some complexity parameters. 
Mixing time bound has been studied extensively for problems such as card shuffling~\citep{aldous1986shuffling}, random walks on groups~\citep{aldous1983random}, graph colorings~\citep{jerrum1995very}, and the Ising model~\citep{dobrushin1992wulff},  across the fields of probability, physics, and computer science~\citep{randall2006rapidly}.  
Although there is a rich literature on discrete-state-space mixing time analysis in many areas, the focus has been primarily on approximate counting and random generation, but only recently has research on the MCMC convergence in the context of statistical model selection begun to emerge. 
\cite{yang2016computational} proved rapid mixing of an MH algorithm in a high-dimensional BVS problem, and~\cite{pollard2019rapid} improved the bound when a ``warm start'' is available. Rapid mixing results of MH algorithms are established for Bayesian community detection problem with SBM~\citep{zhuo2021mixing} and for learning equivalence class in a high-dimensional Bayesian structure learning~ \cite{zhou2021complexity}. 
Recently, locally informed MH algorithms have been developed to incorporate local information about the target distribution. 
\cite{zanella2020informed} proposed a locally balanced proposal and proved their optimality in terms of Peskun ordering \citep{peskun1973optimum} for discrete space models without providing mixing time rate analysis. \cite{zhou2021dimension} also proposed an informed MCMC and proved a dimension-free mixing time bound but only for a BVS problem.

The multiple-try Metropolis (MTM) algorithm~\citep{liu2000multiple} is an extension of the MH algorithm, which uses a weight function to choose the proposed state among randomly sampled trials from a proposal distribution. The variety of weight functions and the number of trials enrich the algorithm class but result in a much more complex transition probability than standard MH algorithms, hampering the theoretical development on the mixing time of MTM. As a result, there is a lack of theoretical guidance for practitioners on the choices of weight functions and number of trials in MTM and also the choice between MTM and other MCMC algorithms.  
Only very recently, \cite{yang2021convergence} considered the MTM independence sampler (MTM-IS), a special case of MTM where the proposal does not depend on the current state. They conducted the convergence rate analysis of MTM-IS, which shows it is less efficient than the simpler approach of repeated Metropolised independent sampling at the same computational cost, but they did not discuss the choices of weight functions and the number of trials. As acknowledged in \cite{yang2021convergence}, MTM is significantly different from MTM-IS, and its convergence rate is more challenging to analyze. In this paper, we establish the mixing time bound of MTM for general model selection problems. Our main contributions are the following:
\begin{enumerate}[1)] 
    \item We show that MTM with existing popular choices of weight functions can have mixing issues and propose a new class of weight functions called \textit{locally balanced weight functions}.
    \item With locally balanced weight functions, we prove that the mixing time bound of the MTM algorithm is smaller than that of the MH algorithm by a factor of the number of trials $N$ in a general model selection setting, under some regularity conditions including a rate condition on $N$. 
    \item We provide theoretical justification for the counterintuitive phenomenon that an increase in the number of trials $N$ may not always lead to a proportionate improvement in mixing, and suggest a theoretically guided algorithm to choose $N$ for practical use.
    \item We validate our theoretical findings via extensive simulation studies and real data applications with various model selection problems: Bayesian variable selection, stochastic block models, spatial clustering models, and structure learning models.
\end{enumerate}

\section{Preliminary}\label{sec:prelim}

\subsection{Notation}\label{subsec:note}

For a positive integer $m \in \bbN$, let $[m] = \{1, \dots, m\}$ and $\lfloor m \rfloor$ be the largest integer no greater than $m$. Let $|\cdot|$ denote the cardinality of a set. Let $\cX = \cX_p$ denote a finite state space 
where $|\cX|$ grows in a complexity parameter $p \in \bbN$. We use $\bP: \cX \times \cX \rightarrow [0,1]$ to denote the transition probability matrix of an irreducible, aperiodic and reversible Markov chain. 
We denote by $\mathbf{I}$ an identity matrix. 
We call a probability distribution $\pi$ on $\cX$ a \textit{stationary distribution} or a \textit{target distribution} of a chain $\bP$ if it satisfies $ \sum_{z \in \cX} \pi (z) \bP(z,x) = \pi(x)$ for all $x \in \cX$. 
In Bayesian inference, $\pi$ is a posterior distribution. Let $\cN \colon \cX \rightarrow 2^\cX$ be a set-valued map called a neighborhood relation, which maps a state to a set of states. We say the neighborhood relation $\cN$ is symmetric if $x' \in \cN(x)$ iff $x \in \cN(x')$ for all $x \neq x'$. We define the random walk proposal matrix as $\MH(x,y) = \ind_{\cN(x)}(y) /|\cN(x)|$ where $\ind_{\cN(x)}$ denotes the indicator function of $\cN(x)$.
We say  $\mathbf{K}$ is symmetric if $\mathbf{K}(x,y) = \mathbf{K}(y,x)$. 
We denote a graph by $(V, E)$ where $V$ is the vertex set and $E \subset V \times V$ is the set of edges. We say $\delta(x,y) = (v_0, \dots, v_m)$ is an $E$-path (with length $m$) from $x$ to $y$ if $e_i = (v_{i-1}, v_{i}) \in E$ for $i \in [m]$, $v_0 = x, v_m = y$, and all vertices are distinct.

\subsection{Multiple-try Metropolis algorithm}\label{subsec:MTM}
We present the MTM algorithm~\citep{liu2000multiple} on a discrete state space in Algorithm~\ref{alg:MTM}. 
Here $N$ trial states $y_1,\dots,y_N$ are first sampled from the proposal $\MH(x,\cdot)$, and then the proposed state $y$ is selected among the trials using the weight function
\begin{equation}
\label{eq:weightft}
    w(y\given x) = \pi(y)\MH(y,x)\lambda(y,x), 
\end{equation}
where $\lambda(y,x)$ can be any symmetric function that is positive whenever $\MH(y,x)>0$. Note that Algorithm~\ref{alg:MTM} becomes the standard random walk MH algorithm when the number of trials $N$ is one. 

There have been efforts to generalize the MTM algorithm~\citep{craiu2007acceleration,casarin2013interacting, martino2018review}. 
\cite{pandolfi2010generalization} extends the weight function class by proposing a general form of acceptance probability to achieve the reversibility of a chain when the weight function does not belong to \eqref{eq:weightft}. 
While one can easily find a weight function satisfying~\eqref{eq:weightft}, 
there is little research on the most appropriate form of weight functions.
\cite{liu2000multiple} investigated some particular class of weight functions and claimed the performance is insensitive to the choice of weight functions within this class;   $w(y \given x) = \pi(y)$ (when $\MH$ is symmetric) and $w(y \given x) = \pi(y)/\MH(x,y)$ are popular choices in this class~\citep{bedard2012scaling,martino2013flexibility}. 

Another important component of the MTM method is the number of trials $N$. 
It has been reported with empirical evidence that increasing $N$ does not necessarily result in a corresponding improvement in mixing under the random walk proposal~\citep{martino2017issues}. They illustrate a counterintuitive phenomenon that the acceptance probability for a proposal move to higher probability regions can be extremely low. As a solution, they suggest randomly selecting $N$ at each iteration. 
Despite the anecdotal observation that certain choices of $N$ make the algorithm less efficient, the principled way to choose $N$ is rarely discussed in the literature.

\begin{algorithm}[t]
\caption{Multiple-try Metropolis (MTM) algorithm with proposal $\MH$}
\label{alg:MTM}
\KwInput{An initial state $x_0$, a neighborhood relation $\cN$, the number of trials $N$, a weight function $w$ defined in \eqref{eq:weightft}, the number of Markov chain iterations $T$.}

\For{$t = 0, \dots, T-1$}{
    Step 1. Draw $y_1, \dots, y_N$ uniformly at random from $\cN(x_{t})$, and compute $w(y_j \given x_{t})$ for $j = 1 \dots, N$.\\
    Step 2. Select $j \in [N]$ with probability proportional to $w(y_j \given x_{t})$ and define $y = y_j$.\\
    Step 3. Sample $x^\star_1, \dots, x^\star_{N-1}$ uniformly at random from $\cN(y)$ and define
    \begin{align}\label{eq:acceptance_probability}
        \alpha = \min \left\{1, \frac{w(y \given x_{t}) + \sum_{l \in [N] \setminus \{j\}} w( y_l \given x_{t})}{w( x_{t} \given y) + \sum_{l \in [N-1]} w(  x_l^\star \given y)} \right\}.
    \end{align}
    \\
    Step 4. With probability $\alpha$, accept $y$ and let $x_{t+1} = y$; otherwise, let $x_{t+1} = x_t$.\\
}
\KwOutput{A set of Markov chain samples $\{x_t\}_{t=1}^T$.}
\end{algorithm}

\subsection{Analysis on mixing time via geometric tools}\label{subsec:mixing}

A theoretical foundation for employing MCMC methods to sample from a posterior distribution on a finite state space is the well-known convergence theorem, which states that an irreducible and aperiodic Markov chain with a stationary distribution converges to the stationary distribution regardless of the initial distribution~\citep[Theorem 4.9]{levin2017markov}.
However, since it is impossible to simulate a chain infinitely long, our main interest is to analyze how well the chain approximates the target distribution after a certain number of steps. To this end, we introduce  $\epsilon$-mixing time $\tmix(\epsilon) = \max_{x \in \cX} \min \left \{t \in \bbN \colon \TV{\bP^t_{\rm{lazy}} (x, \cdot)- \pi(\cdot)} \leq \epsilon   \right\}$,
where $\TV{\cdot}$ is the total variation (TV) distance, $\epsilon$ is a small positive constant, and $\bP_{\rm{lazy}} = (\bP + \mathbf{I})/2$ denotes the lazy version of $\bP$,  which is introduced merely for theoretical convenience since all eigenvalues of $\bP_{\rm{lazy}}$ are non-negative (the use of lazy Markov chains is standard in mixing time analysis). 
In words, $\tmix$ quantifies the necessary number of steps for the chain to have a small TV distance from the stationary distribution. Among several techniques for characterizing the mixing property~\citep{bubley1997path, saloff1997lectures, levin2017markov, guruswami2016rapidly}, a path method is employed to derive our main result~\citep{diaconis1991geometric,sinclair1992improved}.  We also refer readers to \cite[Section 13.4]{levin2017markov} and \cite[Section 3]{saloff1997lectures} for more details about the path method. 
Intuitively, the mixing time depends on how well the states ``communicate'',  especially among those which have high values of $\pi$.
In contrast to samplers defined on a continuous state space with a Euclidean topology, posterior landscape and modality on a finite state space are more difficult to envision since it highly depends on how one defines the neighborhood relation. 
Given a transition probability matrix $\bP$, we define the neighborhood of a state $x$ by $\cN(x) = \{x' \in \cX \setminus \{x\}: \bP(x,x') > 0\}$, the set of reachable states by one-step transition from $x$, and we say $x$ is a mode if $\pi(x) > \max_{x' \in \cN(x)} \pi(x')$.
This specification of the neighborhood enables us to define a graph $(\cX, E)$ where $E = \{(x,x') \in \cX \times \cX : x' \in \cN(x)\}$ is an (undirected since we have a symmetric neighborhood relation by the reversibility of $\bP$) edge set. For each $(x,y) \in \cX \times \cX$ with $x \neq y$, we choose only one $E$-path from $x$ to $y$ and denote it as $\delta(x,y)$. A \textit{path ensemble} is defined as a collection of the paths $\Delta = \{\delta(x,y) : x,y \in \cX, x \neq y\}$.
A path ensemble is a ``configuration of path network'' that helps quantify the quality of the communication within the state space $\cX$. The following proposition, which serves as a base of our main result, gives a mixing time bound by means of a path ensemble.
\begin{proposition}[\cite{sinclair1992improved}]\label{prop:path}
For any path ensemble $\Delta$, 
\begin{align*}
    \tmix(\epsilon) \leq 2 \rho(\Delta) \ell(\Delta) \left[\log (1/ \epsilon) + \log \left \{  \min_{x \in \cX} \pi(x) \right\}^{-1}\right],
\end{align*}
where $\ell(\Delta) = \max_{x,y} |\delta(x,y)|$ and
\begin{align}\label{eq:congestion}
    \rho(\Delta) = \max_{(u,v) \in E} \frac{1}{\pi(u)\bP(u,v)} \sum_{x,y : \delta(x,y) \ni (u,v) }\pi(x)\pi(y). 
\end{align}
\end{proposition}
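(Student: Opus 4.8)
The plan is to derive the bound from the canonical-paths (Poincar\'e) estimate for the spectral gap of $\bP_{\mathrm{lazy}}$, combined with the standard spectral-gap bound on mixing time. Write $\pi_{\min} = \min_{x \in \cX}\pi(x)$ and let $\gamma_\star = 1 - \lambda_2(\bP_{\mathrm{lazy}})$ denote the spectral gap of $\bP_{\mathrm{lazy}}$. Since the eigenvalues of $\bP_{\mathrm{lazy}}$ are $(1+\lambda_i)/2$ and all lie in $[0,1]$, its smallest eigenvalue is nonnegative, so the absolute spectral gap of $\bP_{\mathrm{lazy}}$ also equals $\gamma_\star$; this is exactly the purpose of passing to the lazy chain. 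A standard consequence of the spectral representation of a reversible chain then gives $\tmix(\epsilon) \le \gamma_\star^{-1}\bigl(\log(1/\epsilon) + \log\{\min_{x \in \cX}\pi(x)\}^{-1}\bigr)$ (see, e.g., \cite{levin2017markov}), so it suffices to prove $\gamma_\star \ge \{2\,\rho(\Delta)\,\ell(\Delta)\}^{-1}$.

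I would obtain this spectral-gap lower bound from the variational characterization $\gamma_\star = \min\{\mathcal{E}_{\mathrm{lazy}}(f,f)/\mathrm{Var}_\pi(f) : f \colon \cX \to \bbR, \ \mathrm{Var}_\pi(f) > 0\}$, where $\mathcal{E}_{\mathrm{lazy}}(f,f) = \langle (\mathbf{I} - \bP_{\mathrm{lazy}})f,\, f \rangle_\pi$ is the Dirichlet form. Because $\mathbf{I} - \bP_{\mathrm{lazy}} = (\mathbf{I} - \bP)/2$ and the diagonal entries of $\bP_{\mathrm{lazy}}$ do not contribute, $\mathcal{E}_{\mathrm{lazy}}(f,f) = \tfrac14 \sum_{x \neq y} \{f(x) - f(y)\}^2 \pi(x)\bP(x,y)$. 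The whole claim thus reduces to the Poincar\'e inequality $\mathrm{Var}_\pi(f) \le \tfrac12\,\rho(\Delta)\,\ell(\Delta) \sum_{x \neq y} \{f(x) - f(y)\}^2 \pi(x)\bP(x,y)$ for every $f$, which together with the Dirichlet-form identity gives $\mathrm{Var}_\pi(f) \le 2\,\rho(\Delta)\,\ell(\Delta)\,\mathcal{E}_{\mathrm{lazy}}(f,f)$, i.e.\ $\gamma_\star \ge \{2\,\rho(\Delta)\,\ell(\Delta)\}^{-1}$.

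Establishing this Poincar\'e inequality is the combinatorial core of the proof, and the step where essentially all of the (routine) work lies. Starting from the identity $\mathrm{Var}_\pi(f) = \tfrac12 \sum_{x,y} \{f(x) - f(y)\}^2 \pi(x)\pi(y)$, for each ordered pair $(x,y)$ with $x \neq y$ I would telescope $f$ along the chosen path $\delta(x,y) = (v_0, \dots, v_m)$, writing $f(x) - f(y) = \sum_{i=1}^{m} \{f(v_{i-1}) - f(v_i)\}$, and apply Cauchy-Schwarz to obtain $\{f(x) - f(y)\}^2 \le |\delta(x,y)| \sum_{(u,v) \in \delta(x,y)} \{f(u) - f(v)\}^2$. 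Substituting this, bounding $|\delta(x,y)| \le \ell(\Delta)$, and interchanging the sum over pairs $(x,y)$ with the sum over edges $(u,v) \in E$ yields $\mathrm{Var}_\pi(f) \le \tfrac12\,\ell(\Delta) \sum_{(u,v) \in E} \{f(u) - f(v)\}^2 \sum_{x,y \,:\, \delta(x,y) \ni (u,v)} \pi(x)\pi(y)$; the definition \eqref{eq:congestion} of $\rho(\Delta)$ then bounds the inner sum by $\rho(\Delta)\,\pi(u)\bP(u,v)$, which is exactly the desired inequality (the sum over $E$ and the sum over pairs $x \neq y$ agree since terms with $\bP(u,v) = 0$ vanish). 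Feeding $\gamma_\star \ge \{2\,\rho(\Delta)\,\ell(\Delta)\}^{-1}$ back into the mixing-time bound of the first paragraph finishes the argument. I do not anticipate a genuine obstacle here --- this is the canonical-paths method of \cite{diaconis1991geometric, sinclair1992improved} --- so the only real care required is in tracking the factor $2$ from laziness together with the factor from Cauchy-Schwarz so that they combine into the stated constant, and in disposing of the degenerate cases (constant $f$, $x = y$).
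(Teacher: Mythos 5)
Your proposal is correct and follows essentially the same route as the paper: pass to the lazy chain, bound its spectral gap from below via the canonical-path (telescoping + Cauchy--Schwarz + congestion) Poincar\'e inequality, and feed that into the standard spectral bound on total-variation mixing. The only cosmetic difference is that you absorb the laziness factor $2$ into the lazy Dirichlet form, whereas the paper proves the Poincar\'e inequality for $\bP$ itself and uses $\gap(\bP_{\mathrm{lazy}}) \geq \gap(\bP)/2$ (via the Diaconis--Stroock TV bound); the constants and conclusion are identical.
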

For completeness, we provide a proof of Proposition~\ref{prop:path} with a detailed discussion in Appendix~\ref{subsec:pf_path}. The quantity $\rho(\Delta)$ is called the \textit{congestion parameter}, and $\ell(\Delta)$ is the length of the longest path in $\Delta$. To interpret $\rho(\Delta)$ and $\ell(\Delta)$, some terminologies from transportation networks in graph theory are often useful. Consider $\pi(u)\bP(u,v)$ as the capacity of an edge $(u,v)$, and $\pi(x) \pi(y)$ as the unit flow of a path $\delta(x,y)$. Then $\rho(\Delta)$ quantifies the maximum loading of an edge under the configuration $\Delta$. If one edge with a small capacity is traversed by a huge number of paths with a high unit flow, it results in a constriction. Consider $\ell(\Delta)$ as the diameter of the configuration $\Delta$. If the diameter is extremely large, it causes inefficient communication, especially among the states lying on the margin of the configuration.
If it is possible to design the path system in a way that no edge is overloaded by paths and no two states are too far apart, then the chain can have a good mixing behavior. One of the key steps in the proof of the main result is to identify a good path ensemble $\Delta^*$ (see Appendix~\ref{subsec:pf_thm1}).

\section{Main result}\label{sec:main}

In this section, we prove the rapid mixing of the MTM algorithm for general model selection problems on a finite model space $\cX$, where we assume that there is only one state with the highest posterior mass, denoted by $x^*$. 
Throughout the section, we may assume that $|\cX|$ grows with the complexity parameter $p$ (e.g., the number of variables in BVS), and the posterior distribution $\pi$ and other related objects are implicitly parameterized by $p$.

\subsection{A mixing time bound with locally balanced weight functions}\label{subsec:weight}

While extending the possible form of weight functions has been the main focus in the literature~\citep{pandolfi2010generalization, martino2013flexibility}, the choice of weight function is rarely discussed. We introduce a class of weight functions by using a function $h \colon \bbR_{> 0} \rightarrow \bbR_{> 0}$ that satisfies $h(u) = uh(1/u)$ for all $u > 0$; such a function $h$ is called a \textit{balancing function}~\citep{martino2017metropolis,zanella2020informed}. The class of balancing functions is very broad: we can choose an arbitrary non-negative function $h$ on $(0,1]$, and then $h(u)$ on $u \in (1, \infty)$ is defined by the ``balancing rule'', $h(u) = uh(1/u)$. 
Typical choices of balancing functions include $\sqrt{u}$, $\min \{1, u\}$, $\max \{1, u\}$, $u/(u+1)$, and $u+1$. 
The following proposition states that Algorithm~\ref{alg:MTM} with the proposed weight function induces a reversible Markov chain with $\pi$ as its stationary distribution.

\begin{proposition}[Locally balanced weight functions]\label{prop:weight}
Suppose that a weight function is defined as
\begin{align}\label{eq:weight}
    w(y \given x) = h\left( \frac{\pi(y)\MH(y,x)}{\pi(x)\MH(x,y)}\right),
\end{align}
where $h$ is a balancing function. Then, a Markov chain $\MTM$ induced by Algorithm~\ref{alg:MTM} has a stationary distribution $\pi$.
\end{proposition}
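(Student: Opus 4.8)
The plan is to show that the weight function in \eqref{eq:weight} is merely a special case of the general form \eqref{eq:weightft}, and then to invoke the reversibility argument of \cite{liu2000multiple}, which guarantees that any MTM chain built from a weight function of the form \eqref{eq:weightft} admits $\pi$ as its stationary distribution (reversibility implies stationarity). Concretely, I would set
\begin{equation*}
  \lambda(y \given x) \;=\; \frac{w(y\given x)}{\pi(y)\MH(y,x)} \;=\; \frac{1}{\pi(y)\MH(y,x)}\, h\!\left(\frac{\pi(y)\MH(y,x)}{\pi(x)\MH(x,y)}\right),
\end{equation*}
so that $w(y\given x) = \pi(y)\MH(y,x)\,\lambda(y \given x)$ holds by construction, and the only thing left to verify is that $\lambda$ is symmetric and strictly positive whenever $\MH(y,x) > 0$.

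For the symmetry check, write $u = u(y,x) = \pi(y)\MH(y,x)/\{\pi(x)\MH(x,y)\}$ and note that $u(x,y) = 1/u(y,x)$. Since $\cN$ is a symmetric neighborhood relation, $\MH(x,y) > 0$ if and only if $\MH(y,x) > 0$, so $u$ is well defined, finite, and positive on exactly the pairs at which $w$ is ever evaluated in Algorithm~\ref{alg:MTM} (the trial states are always drawn from the current state's neighborhood, and likewise for the reference set). Applying the balancing identity $h(u) = u\,h(1/u)$ then gives
\begin{equation*}
  \lambda(y \given x) \;=\; \frac{h(u)}{\pi(y)\MH(y,x)} \;=\; \frac{u\,h(1/u)}{\pi(y)\MH(y,x)} \;=\; \frac{h(1/u)}{\pi(x)\MH(x,y)},
\end{equation*}
and the rightmost expression is invariant under swapping $x$ and $y$ (which sends $u \mapsto 1/u$), so $\lambda(y \given x) = \lambda(x \given y)$. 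Positivity is immediate from $h > 0$ and $\pi > 0$ on $\cX$; on pairs with $\MH(y,x) = 0$ the value of $\lambda$ is irrelevant since the corresponding weight vanishes regardless.

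Given this reduction, the conclusion follows from the known reversibility of MTM with weights of the form \eqref{eq:weightft}. For completeness one can also verify detailed balance $\pi(x)\MTM(x,y) = \pi(y)\MTM(y,x)$ directly: expand $\MTM(x,y)$ as a sum over the index $j$ of the selected trial and over the auxiliary trial/reference draws, substitute the acceptance ratio \eqref{eq:acceptance_probability}, and observe that the $\min\{1,\cdot\}$ together with the uniform sampling probabilities recombines so that the symmetry of $\lambda$ forces the two sides to coincide; summing over $y$ yields stationarity. I do not anticipate a genuine obstacle: the single substantive point is the symmetry of $\lambda$, which is exactly the content of the balancing rule, and everything else is the bookkeeping already carried out in \cite{liu2000multiple}.
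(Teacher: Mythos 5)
Your proposal is correct and follows essentially the same route as the paper: it reduces the locally balanced weight to the general form \eqref{eq:weightft} by setting $\lambda(x,y) = \frac{1}{\pi(y)\MH(y,x)}\,h\bigl(\frac{\pi(y)\MH(y,x)}{\pi(x)\MH(x,y)}\bigr)$ and then invokes the reversibility result of \cite{liu2000multiple}. The only difference is that you spell out the symmetry and positivity verification (via the balancing identity $h(u)=u\,h(1/u)$) that the paper leaves as "easy to check," which is a fine and correct elaboration.
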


The proof is deferred to Appendix~\ref{subsec:pf_prop1}. We call weight functions of the form~\eqref{eq:weight} \textit{locally balanced weight functions}; though we only consider random walk proposals in this paper, $\MH$ in Proposition~\ref{prop:weight} can be replaced by any other  proposal matrix.
Compared with the popular choice of weight function in the existing MTM literature such as $w(y \given x) = \pi(y)$, the locally balanced weight function can be viewed as a transformed and scaled version with respect to the posterior probability at current state $x$. For example, if $h(u)=\sqrt{u}$ and $\MH$ is symmetric, then $w(y \given x)=\sqrt{\pi(y)/\pi(x)}$. 

Before providing a rigorous mixing time bound of MTM under our new choice of $w(y\given x)$, we give an intuitive explanation for why the existing unscaled weight function may have poor mixing properties. 
First, it is reasonable to assume that the true data-generating model, $x^{*}$, will have the largest posterior probability when the posterior contraction or the model selection consistency holds asymptotically. Indeed, such results have been established for a variety of model selection problems~\citep{yang2016computational, zhou2021complexity, zhuo2021mixing}. 
On the other hand, a model space $\mathcal{X}$ is often endowed with a natural neighborhood relation such that the degree of ``model fit'' (measured by goodness of fit and penalization on model size) between neighboring models tends to have a small difference. This implies that a small modification of a model may lead to a marginal change in the posterior probability. 
For example, in a Bayesian variable selection problem with the number of variables $p$, 
the model space can be represented as a set of binary vectors $\gamma\in\{0, 1\}^p$ where $\gamma_j = 1$ (resp. $\gamma_j = 0$) indicates $j$-th variable is included (resp. excluded) in the model. Let us assume that $\gamma^*$ is the true data generating model and has the highest posterior probability, and define a Hamming distance $d_{\mathrm{H}}(\gamma, \gamma') = \sum_{j=1}^p \ind \{\gamma_j \neq \gamma'_j\} $ for $\gamma, \gamma' \in \{0,1\}^p$. 
\begin{wrapfigure}{r}{0.3\textwidth}
  \vspace{-1em}
  \begin{center}
    \includegraphics[width=0.3\textwidth]{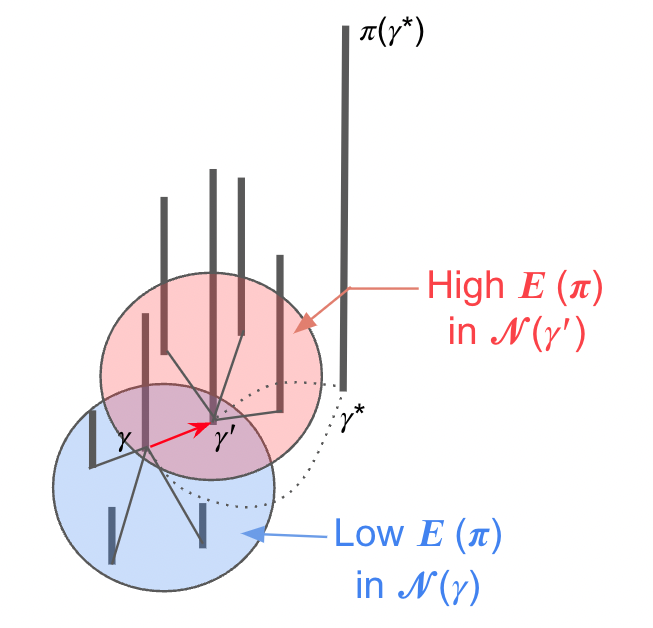}
  \end{center}
  \caption{An illustration of the reason that the  weight function $w(y \given x) = \pi(y)$ may fail. Each vertical bar represents the posterior probability on the corresponding state, and thin lines connect neighboring states. The expected value of the posterior probability of the neighborhood of $\gamma'$ (red) is typically greater than that of $\gamma$ (blue).}
  \vspace{-1em}
  \label{fig:weight}
\end{wrapfigure}
One often defines the neighborhood $\mathcal{N}(\gamma)$ as the set of states with a small Hamming distance to $\gamma$~\citep{hans2007shotgun, yang2016computational,zhou2021dimension}, and models in the neighborhood tend to have similar posterior probabilities. 
A state with a large Hamming distance from $\gamma^*$ usually has a small posterior probability because its structure is largely different from the true data-generating model. 
Figure~\ref{fig:weight} depicts such a relationship between posterior probabilities and structural similarities among the models. 
Given that $\gamma$ is the current state, Step 2 of Algorithm~\ref{alg:MTM} tends to propose a neighboring state $\gamma'$ with a higher posterior probability, so $\gamma'$ tends to be closer to $\gamma^*$ in Hamming distance, i.e. $d_{\mathrm{H}}(\gamma', \gamma^*) < d_{\mathrm{H}}(\gamma, \gamma^*)$. On average, the Hamming distance between a neighboring state of $\gamma'$ and $\gamma^*$ will be smaller than that between a neighboring state of $\gamma$ with $\gamma^*$. Therefore, the expected value of posterior probability of a neighboring state of $\gamma'$ (in the denominator part in \eqref{eq:acceptance_probability}, shown in red in Figure \ref{fig:weight}) will be typically larger than that of a neighboring state of $\gamma$ (in the numerator part in \eqref{eq:acceptance_probability}, shown in blue in Figure \ref{fig:weight}). 
If we resort to the unscaled weight function $w(\gamma' \given \gamma) = \pi(\gamma')$, this can cause a significantly low acceptance probability, especially when the number of trials $N$ is large (due to the law of large numbers, the acceptance ratio converges to the ratio of average posterior probabilities of the two neighborhoods); see also Appendix~\ref{subsec:ex} for a concrete example.
A locally balanced weight function~\eqref{eq:weight}, however, is able to mitigate this counterintuitive phenomenon by ``scaling" the weight function with respect to the current state. We shall demonstrate the effectiveness of locally balanced weight functions in the simulation studies. Although the scope of our study is limited to finite state spaces, employing \eqref{eq:weight} also improves the algorithm on continuous state spaces. We provide the relevant discussion in Appendix~\ref{subsec:state_space}.

In order for the MTM procedure to efficiently sample states from a high posterior region, it is sensible to assign a larger weight to those states with a higher  posterior probability in the proposal, i.e. Step 2 of Algorithm~\ref{alg:MTM}. To this end, we only require $h$ to be non-decreasing. Now we introduce the main result, which states that the mixing time bound can be improved linearly in the number of trials $N$ for our locally balanced weighting scheme.  

\begin{theorem}\label{thm:mixing}
Let $\MTM$ denote a Markov chain induced by Algorithm~\ref{alg:MTM} with a locally balanced weight function~\eqref{eq:weight} with some non-decreasing balancing function $h$. Define
\begin{align}
    \mathcal{S}(x) =  \{ x' \in \mathcal{N}(x) \colon \pi(x') / \pi(x) > p^{t_1} \},
\end{align}
 for some constant $t_1$ and $s_0 =  \max_{x \in \cX}|\mathcal{S}(x)|$. Suppose that the following conditions hold with $p \geq 2$ and $t_2,t_3,t_4 \geq 0$ being some constants that satisfy $t_1 < t_2$, $t_3\leq t_4 < t_2$.
\begin{enumerate}[(i)]
    \item There exists $x' \in \cN(x)$ such that $\pi(x') / \pi(x) \geq p^{t_2}$ for every $x \neq x^*$.
    \item For any $x \in \cX$, $ p^{t_3} \leq |\mathcal{N}(x)| \leq p^{t_4}$.
    \item $N = o\left(\min \left\{ \frac{h(p^{t_2+t_3-t_4})}{h(p^{t_1+t_4-t_3})}, \frac{p^{t_3}}{s_0}\right \}\right)$. 
\end{enumerate}
Then, we have the $\epsilon$-mixing time of $\MTM$
\begin{align}\label{eq:mixing}
    \tmix(\epsilon) = O\left( \frac{1}{N} (p^{-t_4} - p^{-t_2})^{-1} \ell(\Delta^*)\log \left (  \min_{x \in \cX} \pi(x) \right)^{-1}\right), 
\end{align}
where $\Delta^*$ is a path ensemble defined in Appendix~\ref{subsec:pf_thm1}.
\end{theorem}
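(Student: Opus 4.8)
The plan is to apply the path-method bound of Proposition~\ref{prop:path} to a carefully chosen path ensemble $\Delta^*$, so that the argument reduces to three sub-tasks: (a) constructing $\Delta^*$ and noting that its length $\ell(\Delta^*)$ already appears on the right-hand side of~\eqref{eq:mixing} and so needs no separate estimate; (b) lower bounding the one-step transition probability $\MTM(u,v)$ along every edge used by $\Delta^*$; and (c) upper bounding the congestion $\rho(\Delta^*)$ in~\eqref{eq:congestion}.

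For (a), Condition~(i) supplies, for every $x \neq x^*$, a neighbor with posterior ratio at least $p^{t_2}$; declaring one such ``uphill'' neighbor to be the parent of $x$ turns $\cX$ into a tree rooted at the unique mode $x^*$ along which $\pi$ strictly increases toward the root (strict increase, since $t_2 > t_4 \geq 0$, rules out cycles and forces termination at $x^*$). I would then let $\delta(x,y)$ be the unique tree path from $x$ to $y$ --- equivalently, concatenate the uphill path from $x$ to $x^*$ with the reversed uphill path from $y$ to $x^*$ and cancel the shared suffix above the least common ancestor. Every edge of $\Delta^*$ is then a tree edge, and across each such edge the two endpoints differ in $\pi$ by a factor of at least $p^{t_2}$.

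Step (b) is the technical heart. Fix a tree edge with $v = \mathrm{parent}(u)$, so $\pi(v)/\pi(u) \geq p^{t_2}$; by reversibility of $\MTM$ it suffices to bound $\MTM(u,v)$, which I decompose into (i) $v$ appearing among the $N$ trials drawn from $\cN(u)$, (ii) $v$ being selected in Step~2, and (iii) the move being accepted in Step~3. Since $|\cN(u)| \leq p^{t_4}$ and $N = o(p^{t_4})$ (a consequence of the bound $N = o(p^{t_3}/s_0)$ in Condition~(iii) together with $t_3 \le t_4$), event (i) has probability at least $c\,N p^{-t_4}$. For (ii), a first-moment bound gives that with probability $1-o(1)$ none of the other $N-1$ trials lies in $\mathcal{S}(u)$ (here Condition~(iii) and $|\cN(u)| \geq p^{t_3}$ are used); on that event every competing trial $y_l$ satisfies $\pi(y_l)/\pi(u) \leq p^{t_1}$, so, since $h$ is non-decreasing and the $\MH$-ratios lie in $[p^{t_3-t_4}, p^{t_4-t_3}]$, one gets $w(v\given u) \geq h(p^{t_2+t_3-t_4})$ and $w(y_l \given u) \leq h(p^{t_1+t_4-t_3})$, hence a selection probability of at least $h(p^{t_2+t_3-t_4})/\{h(p^{t_2+t_3-t_4}) + N h(p^{t_1+t_4-t_3})\} = \Omega(1)$, precisely because $N = o(h(p^{t_2+t_3-t_4})/h(p^{t_1+t_4-t_3}))$. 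For (iii), the numerator of~\eqref{eq:acceptance_probability} is at least $w(v\given u) \geq h(p^{t_2+t_3-t_4})$, and again with probability $1-o(1)$ none of $x^\star_1,\dots,x^\star_{N-1}\sim\cN(v)$ falls in $\mathcal{S}(v)$, so the denominator is at most $h(1) + N h(p^{t_1+t_4-t_3}) = O(h(p^{t_2+t_3-t_4}))$ (note $t_2+t_3-t_4 > 0$, so $h(p^{t_2+t_3-t_4}) \geq h(1)$); thus $\alpha = \Omega(1)$. Multiplying the three bounds yields $\MTM(u,v) \geq c' N p^{-t_4}$ uniformly over tree edges.

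For (c), deleting a tree edge with $v = \mathrm{parent}(u)$ splits $\cX$ into the subtree $C_u$ below $u$ and its complement, and a pair $(x,y)$ routes through that edge only if exactly one of $x,y$ lies in $C_u$, so the flow is at most $\sum_{x\in C_u}\pi(x)$. Since each child $c$ of a node $w$ obeys $\pi(c)\leq p^{-t_2}\pi(w)$ and $w$ has at most $p^{t_4}$ children, the subtree masses satisfy $\sum_{x\in C_u}\pi(x) \leq \pi(u)\sum_{k\geq 0}p^{k(t_4-t_2)} = \pi(u)/(1-p^{t_4-t_2})$, a geometric series that converges because $t_4 < t_2$. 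Plugging this and the bound from (b) into~\eqref{eq:congestion},
\[
\rho(\Delta^*) \;\leq\; \max_{(u,v)} \frac{\pi(u)}{(1-p^{t_4-t_2})\,\pi(u)\,c' N p^{-t_4}} \;=\; \frac{1}{c' N\, p^{-t_4}(1-p^{t_4-t_2})} \;=\; O\!\left(\frac{1}{N}(p^{-t_4}-p^{-t_2})^{-1}\right),
\]
the clean form arising from $p^{-t_4}(1-p^{t_4-t_2}) = p^{-t_4}-p^{-t_2}$. Combining with the length factor through Proposition~\ref{prop:path} (absorbing $\log(1/\epsilon)$ into $\log\{\min_x\pi(x)\}^{-1}$) gives~\eqref{eq:mixing}. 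I expect step (b) to be the main obstacle: making the ``no trial falls in $\mathcal{S}(\cdot)$'' concentration estimates and the selection/acceptance lower bounds simultaneously rigorous is exactly where all three regularity conditions and the \emph{locally balanced, non-decreasing} form of $h$ are consumed, whereas step (c) is then routine geometric-series bookkeeping once the tree from Condition~(i) is in hand.
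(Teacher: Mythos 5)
Your proposal is correct and follows essentially the same route as the paper's proof: the path method of Proposition~\ref{prop:path} applied to an uphill-tree path ensemble built from Condition~(i), a lower bound $\MTM(u,\mathrm{parent}(u)) \gtrsim N p^{-t_4}$ obtained by combining the event that the good neighbor is drawn with the event that no other trial (forward or backward) lands in $\mathcal{S}(\cdot)$, and a geometric-series congestion bound giving the $(p^{-t_4}-p^{-t_2})^{-1}/N$ factor. The only deviations (routing through the least common ancestor rather than through $x^*$, taking an arbitrary rather than the maximal uphill neighbor as parent, and slightly looser conditional bookkeeping that the paper formalizes via its events $F$, $F_k$, and $G$) are cosmetic and do not change the argument.
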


\begin{remark}
By the recent result of~\cite[Lemma 3]{zhou2022rapid} based on a refined path argument, we can remove the term $\ell(\Delta^*)$ from~\eqref{eq:mixing} without changing the order of the bound. 
\end{remark}

\textit{Sketch of the proof.} The complete proof is deferred to Appendix~\ref{subsec:pf_thm1}. 
The first key step of our proof is to identify a suitable path ensemble $\Delta^*$ by defining a function $g: \cX \rightarrow \cX$ such that $g(x)$ has the highest posterior probability in the neighborhood of the state $x$. With $\pi$ being unimodal and $\cX$ being finite, repeated composition of $g$ for any $x$ will always lead to the mode $x^*$, that is $g^m(x) = x^*$ for some $m\in\mathbb{N}$, which we utilize to construct $\delta^*(x,y)$ for any $x,y \in \cX, x \neq y$. See Appendix~\ref{subsec:ex_construction_path} for an toy example. 
The other key step is to bound the congestion parameter $\rho(\Delta^*)$ defined in \eqref{eq:congestion}. To this end, we prove that the lower bound of the transition probability $\mathbf{P}(x, g(x))$ gets closer to $Np^{-t_4}$ asymptotically, for any state $x \in \cX\backslash \{x^*\}$.

Condition (i) implies that the posterior distribution is unimodal with the peak at the model $x^*$. We emphasize that the unimodality is with respect to the neighborhood relation $\cN$ and is usually satisfied with some appropriate choice of a large enough neighborhood. For example, for Bayesian variable selection, it is shown in~\cite{yang2016computational} that their proposed posterior may not be unimodal if a neighborhood is defined as a set of models within 1-Hamming distance, but expanding to the 2-Hamming distance neighborhood satisfies Condition (i) with $t_2 = 2$. Similarly, the community detection problem considered in~\cite{zhuo2021mixing} and the structure learning problem in~\cite{zhou2021complexity} satisfy (i) with some positive constant $t_2$ by choosing an appropriate $\mathcal{N}$.
Next, Condition (ii) states that the neighborhood size $|\cN(x)|$ should neither be too huge nor vary much from state to state. 
The purpose of this condition is to control the ratio $\MH(y,x) / \MH(x,y)$ in~\eqref{eq:weight} so that the posterior ratio $\pi(y) / \pi(x)$ dominates the term. 
Last but not least, Condition (iii) implies that the number of trials $N$ should not be arbitrarily large. This is consistent with  the empirical observation of~\cite{martino2017issues} that additional trials do not always result in better mixing. 
Condition (iii) also implies that $s_0 = o(p^{t_3})$ since $N\geq 1$. This seems to be a strong condition, since the number of neighboring states with a relatively large posterior probability with respect to the current state can be very large if we consider a state with the smallest posterior that is surrounded by states with greater posterior. Yet, it is a common practice to restrict the support of the prior to rule out unrealistic models by imposing some high-dimensional regularity assumptions, such as sparsity. 
For example, \cite{yang2016computational} introduced the parameter of the maximum number of important covariates in variable selection, and \cite{zhou2021complexity} considered the maximum in-degree and out-degree in structure learning to restrict the model space. In the same context, \cite{zhuo2021mixing} suggested the use of a feasible set for the initial partition in the community detection problem. In this regard, we may consider $\cX$ as the restricted space so that $s_0$ can be controlled. 
Most importantly, Condition (iii) provides a key idea on how to select the number of trials $N$, which will be discussed in Section~\ref{subsec:choice}.

\begin{remark}[Problem-specific applications]
Theorem~\ref{thm:mixing}
can be used to prove rapid mixing of the MTM algorithm for Bayesian variable selection~\citep{yang2016computational} and structure learning~\citep{zhou2021complexity} with order $N$ improvement of the mixing bound, given that the choice of $N$ satisfies the condition (iii).
\end{remark}

\begin{remark}[Computational benefits]
Since MTM with $N$ trials requires calculating $2N-1$ weight functions at each iteration, improving the mixing time bound by the factor of $N$ leads to the same overall computational complexity as the single-try MH algorithm until the convergence of the Markov chain. However, compared to the single-try MH with $N$ times longer sequential iterations, the calculation of $N$ weight functions can be efficiently done in parallel. 
For example, calculating multiple weight functions can often be converted into a series of matrix multiplication problems, and optimized linear algebra libraries such as BLAS \citep{blackford2002updated} can be used to exploit multiple execution units and pipe-lining. 
As examples, we outline parallelization strategies for the Bayesian variable selection (Section~\ref{subsec:simul:BVS}) and stochastic block model (Section~\ref{subsec:simul:SBM}) in Appendix \ref{subsec:parallelization}. 
Indeed, Table~\ref{table:th} in our simulation study results shows that MTM with moderate choice of $N$ significantly reduces the wall-clock time until the convergence of the Markov chain.

\end{remark}

\subsection{Choice of the number of trials}\label{subsec:choice}

Motivated by Condition (iii) in Theorem~\ref{thm:mixing}, we propose an algorithm to choose the number of trials $N$ under a general setting
applicable to high-dimensional model selection problems. For conciseness, we shall only focus on illustrating the case when we have balancing function $h(u) = \sqrt{u}$ and symmetric proposals (known $t_3=t_4$) so that Condition (iii) becomes $N = o\left(\min\{p^{(t_2-t_1)/2},p^{t_3}/s_0\}\right)$, although the algorithm can be easily generalized to other choices of $h$ and non-symmetric proposals.  

\begin{algorithm}[h]
\caption{The choice of the number of trials, when $h(u) = \sqrt{u}$ and known $t_3=t_4$.}
\label{alg:N}
\KwInput{An initial state $x_0$, neighborhood $\cN$, constant $\psi\in (0,1)$.}
Step 1. Calculate $\log_p (\pi(y_j)/ \pi(x_0))$ for all $y_j\in \cN(x_0)$, $j=1,\dots,|\cN(x_0)|$.\\
Step 2. Run k-means algorithm with k=2 on log-ratios $\{\log_p (\pi(y_j)/ \pi(x_0))\}$ to obtain a partition, $\mathcal{C}_1$ (set of small log-ratios) and $\mathcal{C}_2$ (set of big log-ratios).\\
Step 3.  Let $\hat{t}_2 = \min \mathcal{C}_2$, $\hat{t}_1 = \max \mathcal{C}_1$ and $\hat{s}_0 = |\mathcal{C}_2|$. $^\dagger$\algorithmfootnote{$^\dagger$ If $\hat{t}_2<t_4$, redefine $\mathcal{C}_1$,  $\mathcal{C}_2$ such that $\hat{t}_1 = \max \mathcal{C}_1 \le t_4<\min \mathcal{C}_2 = \hat{t}_2$. If no such $\mathcal{C}_2$ exists, set $\hat{t}_2=t_4, \hat{s}_0=1$ and $\hat{t}_1=\max \mathcal{C}_1$. }\\
\While{$ p^{(\hat{t}_2-\hat{t}_1)/2} < p^{t_3}/\hat{s}_0$}{Update $\mathcal{C}_1 \leftarrow \mathcal{C}_1\backslash \{\hat{t}_1\}$, $\hat{t}_1 \leftarrow \max \mathcal{C}_1$ and $\hat{s}_0 \leftarrow \hat{s}_0 + 1$. }
\KwOutput{The number of trials $N = \lfloor (p^{t_3}/\hat{s}_0)^\psi \rfloor$.}
\end{algorithm}

\begin{wrapfigure}{r}{0.4\textwidth}
  \vspace{-1.5em}
  \begin{center}
    \includegraphics[width=0.4\textwidth]{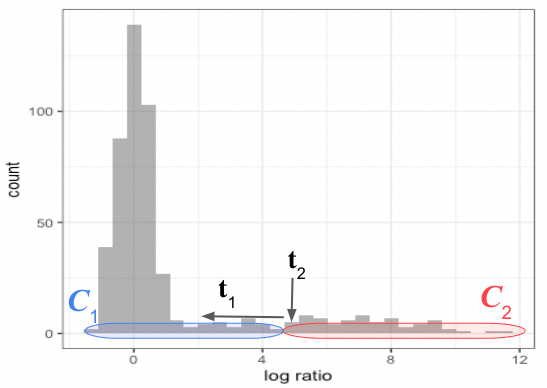}
  \end{center}
  \vspace{-1em}
  \caption{An illustration of Algorithm~\ref{alg:N}.}
  \vspace{-1em}
  \label{fig:N}
\end{wrapfigure}
Algorithm~\ref{alg:N} is depicted in Figure~\ref{fig:N}.
In Step 1, the log probability ratio is calculated for all neighboring states of the initial state $x_0$. Step 2 identifies two subsets $\mathcal{C}_1$ and $\mathcal{C}_2$, corresponding to ``bad moves'' and ``good moves'' respectively. In Step 3, motivated by Condition (i), we set $\hat{t}_2 = \min \mathcal{C}_2$, a lower bound of log probability ratios among good moves. 
After $t_2$ is fixed, we first set $\hat{t}_1 = \max \mathcal{C}_1$, $\hat{s}_0 = |\mathcal{C}_2|$ and gradually decrease $\hat{t}_1$ (so that $\hat{s}_0$ increases) until the ``crossing'' $p^{(\hat{t}_2-\hat{t}_1)/2} \ge p^{t_3}/\hat{s}_0$ happens.
The choice of $N = \lfloor (p^{t_3}/\hat{s}_0)^\psi \rfloor$ immediately after the crossing point approximates the worst-case scenario of $\min\{p^{(t_2-t_1)/2},p^{t_3}/s_0\}$, hence providing a conservative estimate of $N$.
We set $\psi=0.9$ to match with the asymptotic dominance condition~(iii).

We run Algorithm~\ref{alg:N} only once for an initial state $x_0$. It is based on the insight that for model selection problems, log probability ratios can often be partitioned into ``bad moves'' and ``good moves''. Although we considered any $x\in \mathcal{X}\backslash \{x^*\}$ in Condition (i), analyzing the initial state $x_0$ is not only computationally simple but also gives a good guess of $t_2$ since as chain proceeds to the highest posterior state $x^*$, the number of good moves and the magnitude of ratios generally decreases, for example in the Bayesian variable selection problem \citep{zhou2021dimension}. 
Another possible way of using Algorithm~\ref{alg:N} is to re-evaluate $N$ after a certain number of MCMC iterations. 
We note that another clustering algorithm can be substituted for the k-means algorithm in Step 2.

\section{Simulation studies}\label{sec:simulation}

In this section, we present the results for two Bayesian model selection problems, BVS and SBM. We defer the result for spatial clustering models to Appendix~\ref{sec:append.SCM} due to the page limit. 
Throughout the simulation studies, we use symmetric proposal schemes. We run single-try MH and MTM with different choices of $N \in \{ 5,10,50,100,500,1000,2000,5000\}$ and four different choices of weight functions, one with (unscaled) ordinary weight function $w_{\mathrm{ord}}(y\given x) = \pi(y)$, and three with the proposed locally balanced weight functions $w_{\mathrm{sqrt}}(y\given x) = \sqrt{\pi(y)/\pi(x)}$, $w_{\mathrm{min}}(y\given x) = \min\{1, \pi(y)/\pi(x)\}$, and $w_{\mathrm{max}}(y\given x) = \max\{1, \pi(y)/\pi(x)\}$. 
We consider the scenarios where the data-generating model $x^*$ receives the highest posterior probability to verify our main theoretical results. Similarly to \cite{zhou2021dimension}, two performance measures are considered: 1) $H = \min\{t:x_t = x^*\}$ (hitting iteration): the number of MCMC iterations until the Markov chain $(x_0,x_1,\dots)$ hits $x^*$, 2) $T_H$ (wall-clock hitting time): the wall-clock time taken until the chain hits $x^*$.

\subsection{Bayesian variable selection (BVS)}\label{subsec:simul:BVS}
Consider a high-dimensional linear model with response vector $\bm{y}\in\mathbb{R}^n$ and design matrix $\bm{X}\in\mathbb{R}^{n\times p}$, where the number of predictors $p$ is much larger than the sample size $n$. 
BVS seeks to find the best subset of predictors, denoted as a binary vector $\gamma\in\{0,1\}^p$ described in Section \ref{subsec:weight}, using the sparsity-inducing prior. We adopt the prior introduced by \cite{yang2016computational}:
\begin{align*}
\text{Linear model:}& \qquad \bm{y}=\bm{X}_{\gamma} \bm\beta_{\gamma}+\bm\epsilon, \quad \bm\epsilon \sim \mathsf{N}\left(\bm{0}, \phi^{-1} \bfI_{n}\right)\\
\text{Prior on $\bm\beta_\gamma$ and $\phi$:}&\qquad \bm\beta_{\gamma} \mid \phi,\gamma \sim \mathsf{N}\left(\bm{0}, \mathscr{G}\phi^{-1}(\bm{X}_{\gamma}^\top \bm{X}_{\gamma})^{-1}\right), \quad  \pi(\phi) \propto 1/\phi \\
\text{Sparsity prior:}&\qquad \pi(\gamma) \propto p^{-\kappa|\gamma|} \ind\left[|\gamma| \leq s_{max}\right],
\end{align*}
where $\bm{X}_\gamma$ is a submatrix of $\bm{X}$ consisting of all $j$-th columns with $\gamma_j = 1$, $\bm\beta_\gamma\in\bbR^{|\gamma|}$ is the subvector of $\bm\beta\in\bbR^{p}$ with nonzero coefficients, where $|\gamma| = \sum_{j=1}^p\gamma_j$. The hyperparameters $\mathscr{G}$, $\kappa$ and $s_{\mathrm{max}}$ control 
the weight of prior effect \citep{zellner1986assessing}, sparsity strength and the maximum model size, respectively.
We sample $\gamma$ from the posterior distribution $\pi(\gamma\given \bm{y})$ where other parameters $\bm\beta_\gamma$ and $\phi$ are marginalized out; see Appendix \ref{sec:append.BVS} for details. 
We consider the following random walk proposal: 
\begin{equation*}
    \MH(\gamma,\gamma') = \begin{cases}
    \frac{1}{p}\ind_{\cN_1(\gamma)} (\gamma') \text{ (single flip)}\quad &\text{ if $|\gamma|<s_{\mathrm{max}}$,}\\
    \frac{1}{2p}\ind_{\cN_1(\gamma)} (\gamma')+\frac{1}{2|\gamma|(p-|\gamma|)}\ind_{\cN_2(\gamma)} (\gamma')\text{ (single or double flip)} &\text{ if $|\gamma|=s_{\mathrm{max}}$,}
    \end{cases}
\end{equation*}
where $\mathcal{N}_1(\gamma)= \{\gamma': d_{\mathrm{H}}(\gamma,\gamma') =1\}$ and $\mathcal{N}_2(\gamma)= \{\gamma'= (\gamma\cup \{j\})\backslash \{\ell\} : j\not\in \gamma, \ell\in \gamma \}$.
With $n=1000$ and $p=5000$, we consider different settings of design matrix where each row $\bm{x}_i$ is i.i.d. sampled from $\bm{x}_i \sim \mathsf{N}(\bm{0},\bfI_p)$ (independent) or $\bm{x}_i \sim \mathsf{N}(\bm{0}, \bm\Sigma)$, $\Sigma_{jj'} = \exp(-|j-j'|)$ (dependent). The response vector $\bm{y}$ is generated from $\bm{y} \sim \mathsf{N}( \bm{X} \bm\beta, \bfI_n)$, with $\bm\beta$ having only 10 nonzero entries at the first 10 coordinates: 
$
\bm\beta = \text{SNR}\sqrt{(\log p) /n}(2,-3,2,2,-3,3,-2,3,-2,3)
$
where SNR $\in\{2,4\}$ is a signal-to-noise ratio. For each of the four  settings, we simulate 50 replicate datasets.
Per each replicate, algorithms are randomly initialized with state $\gamma_0$ such that $\gamma_0 \cap \gamma^* =\emptyset$ and $d_{\mathrm{H}}(\gamma_0,\gamma^*) = 20$ which implies $H=20$ is the minimum required hitting iteration.
All hyperparameter setups, the number of MCMC iterations, and other details are deferred to Appendix~\ref{sec:append.BVS}.

\begin{figure}[h]
    \centering
    \begin{subfigure}{0.34\textwidth}
    \includegraphics[trim={0 0 0 0.5cm}, clip, width=\textwidth]{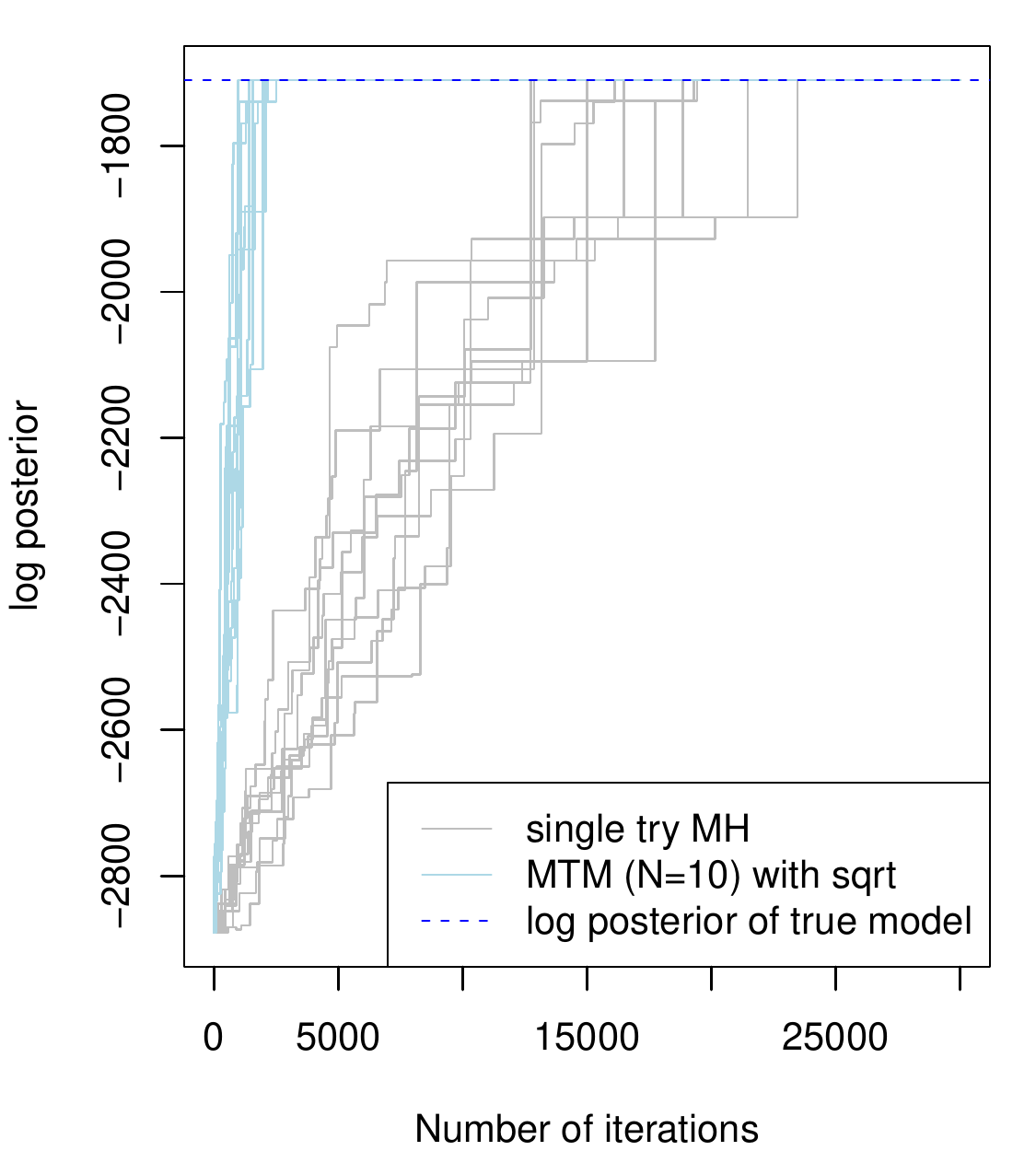}
    \end{subfigure}
    \hfill
    \begin{subfigure}{0.64\textwidth}
    \includegraphics[trim={0 0.5cm 0 0}, clip, width=\textwidth]{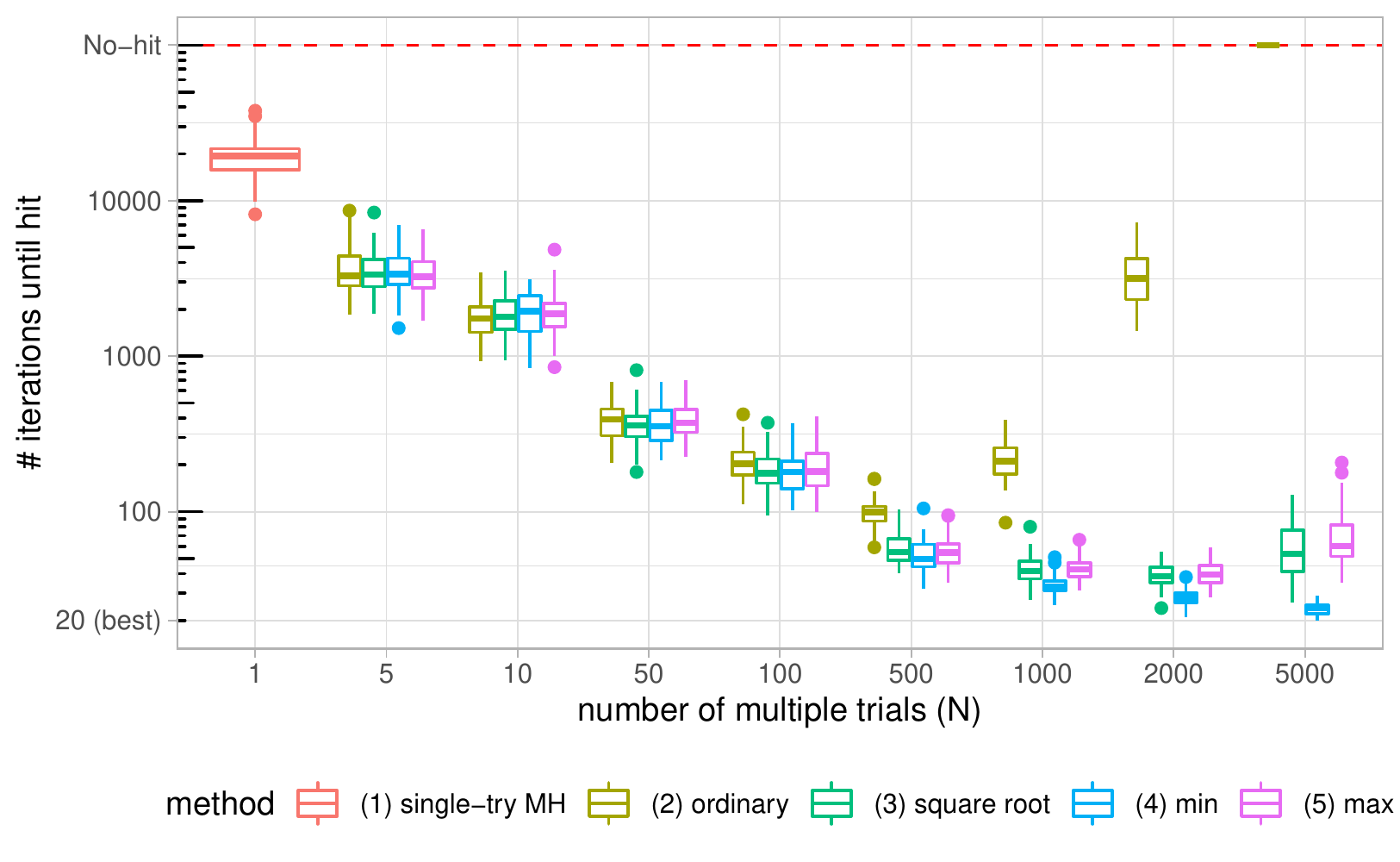}
    \end{subfigure}
    \caption{BVS simulation results for independent design and SNR $= 4$. (Left) Trace plots of unnormalized log-posterior probability using single-try MH and MTM $(N=10)$ with square root weighting function. (Right) Boxplot of $H$, the number of iterations until hit, against different choices of weight function and $N$ based on 50 replicates.}
    \label{fig:bvs}
\end{figure}

Results under independent design with SNR $=4$ are summarized in Figure~\ref{fig:bvs}. The trace plot shows that single-try MH reaches the true state $\gamma^*$ at around 20,000 iterations, whereas the MTM with $N=10$ reaches the true state $\gamma^*$ at around 2,000 iterations, smaller by a factor of 10. 
The boxplot confirms our findings that a larger number of trials $N$ is not always desirable, especially when using the ordinary weight function $w_{\text{ord}}$. The proposed locally balanced weight functions using three different choices of $h$ perform much better than $w_{\mathrm{ord}}$ for large $N$. 
Especially, the performance of $w_{\mathrm{min}}$ seems to be robust to the choice of large $N$ in the high SNR case. 
This can be explained by the fact that the denominator in \eqref{eq:acceptance_probability} is often larger in high SNR data, and the choice of $\min\{1,\pi(y)/\pi(x)\}$ limits its growth; see also Section~\ref{subsec:weight}. 
The choice of $N$ suggested from Algorithm~\ref{alg:N} with $\psi = 0.9$ is overall reasonable, which has a median of 349 over 50 datasets.
The detailed results for other settings are available in Appendix~\ref{sec:append.BVS}.

\subsection{Stochastic block model (SBM)}\label{subsec:simul:SBM}
SBM \citep{holland1983stochastic} is a popular generative model of an undirected graph which assumes a block structure of edge connection probabilities $\bm{Q}$ to describe community structure.
We consider Bayesian SBM \citep{nowicki2001estimation} by assigning prior on $\bm{Q}$ and partition $\mathbf{z}=(z_1,\dots,z_p)\in\{1,\ldots,K\}^p$ with fixed number of blocks $K$, where $z_i$ indicates membership label of $i$-th node. The main goal is to find partition $\mathbf{z}$ (up to a label permutation) that best describes the graph, denoted as an adjacency matrix $\bm{A}\in\{0,1\}^{p\times p}$.
We follow the prior of \cite{zhuo2021mixing}: \vspace{-1mm}
 \begin{align*}
    \text{Edge appearance: }&\quad  A_{ij} \mid \bmQ, \mathbf{z} \stackrel{\text{ind}}{\sim} \mathsf{Bernoulli}(Q_{z_iz_j}), 1\le i<j\le p \\
    \text{Prior on blockwise probabilities: }&\quad Q_{uv} \stackrel{\text{iid}}{\sim} \mathsf{Beta}(\kappa_1,\kappa_2), 1 \le u \le v \le K\\
    \text{Prior on partition:} &\quad  \pi(\mathbf{z}) \propto \ind (\mathbf{z} \in S_\alpha),
\end{align*} 
with restricted partition space $S_\alpha= \{\mathbf{z}: \sum_{i=1}^p \ind (z_i = u) \in \left[\frac{p}{\alpha K},\frac{\alpha p }{K}\right] \text{ for all } u = 1,\dots,K\}$ for some $\alpha>0$, which excludes partitions whose block sizes differ too much. 
We sample $\mathbf{z}$ from the posterior  $\pi(\mathbf{z}\given \bm{A})$ where $\bm{Q}$ is marginalized out; see Appendix \ref{sec:append.SBM} for details. We consider the following proposal where $\tilde{d}_\mathrm{H}$ is a permutation-invariant Hamming distance \citep{zhang2016minimax}:
\begin{equation*}
    \MH(\mathbf{z},\mathbf{z}') = 1/(p(K-1))\ind_{\cN(\mathbf{z})} (\mathbf{z}'), \quad \mathcal{N}(\mathbf{z}) = \{\mathbf{z}': \tilde{d}_\mathrm{H}(\mathbf{z},\mathbf{z}')=  1\}  \quad \text{(single flip).}
\end{equation*}
With the number of nodes $p=1000$, we consider two different settings of the blocks $K\in\{2,5\}$ with true partition $\mathbf{z}^*=(1,\dots,1,\cdots,K,\dots,K)$ being balanced ($p/K$ times each) which ensures $\mathbf{z}^*\in S_\alpha$. In each setting of $K$, we generate a graph from the homogeneous SBM \citep{geng2019probabilistic,zhuo2021mixing}, where $Q_{uv} = a$ if $u=v$ and $Q_{uv} = b$ otherwise.
 We choose probability pairs $(a,b)$ based on the Chernoff-Hellinger divergence ($\mathrm{CH}$)~\citep{abbe2017community}, defined as $\mathrm{CH} \coloneqq p(\sqrt{a}-\sqrt{b})^2/(K\log p)$ for $a,b\asymp (\log p) / p$, which can be interpreted as a signal-to-noise ratio. Our choices are $\mathrm{CH} \approx 2$ (weak signal) and $\mathrm{CH} \approx 10$ (high signal), both being greater than 1 which is necessary to satisfy posterior consistency~\citep{zhang2016minimax,abbe2017community,zhuo2021mixing}. 
For each of the the four combinations of $(K,\mathrm{CH})$, we simulate 50 replicate datasets. For each replicate, algorithms are randomly initialized with state $\mathbf{z}_0$ such that $\tilde{d}_\mathrm{H}(\mathbf{z}_0,\mathbf{z}^*)=400$ which implies $H=400$ is the minimum required hitting iteration. All hyperparameter setups, the number of MCMC iterations, and other details are deferred to Appendix~\ref{sec:append.SBM}.

\begin{figure}[h]
    \centering
    \begin{subfigure}{0.30\textwidth}
    \includegraphics[width=\textwidth]{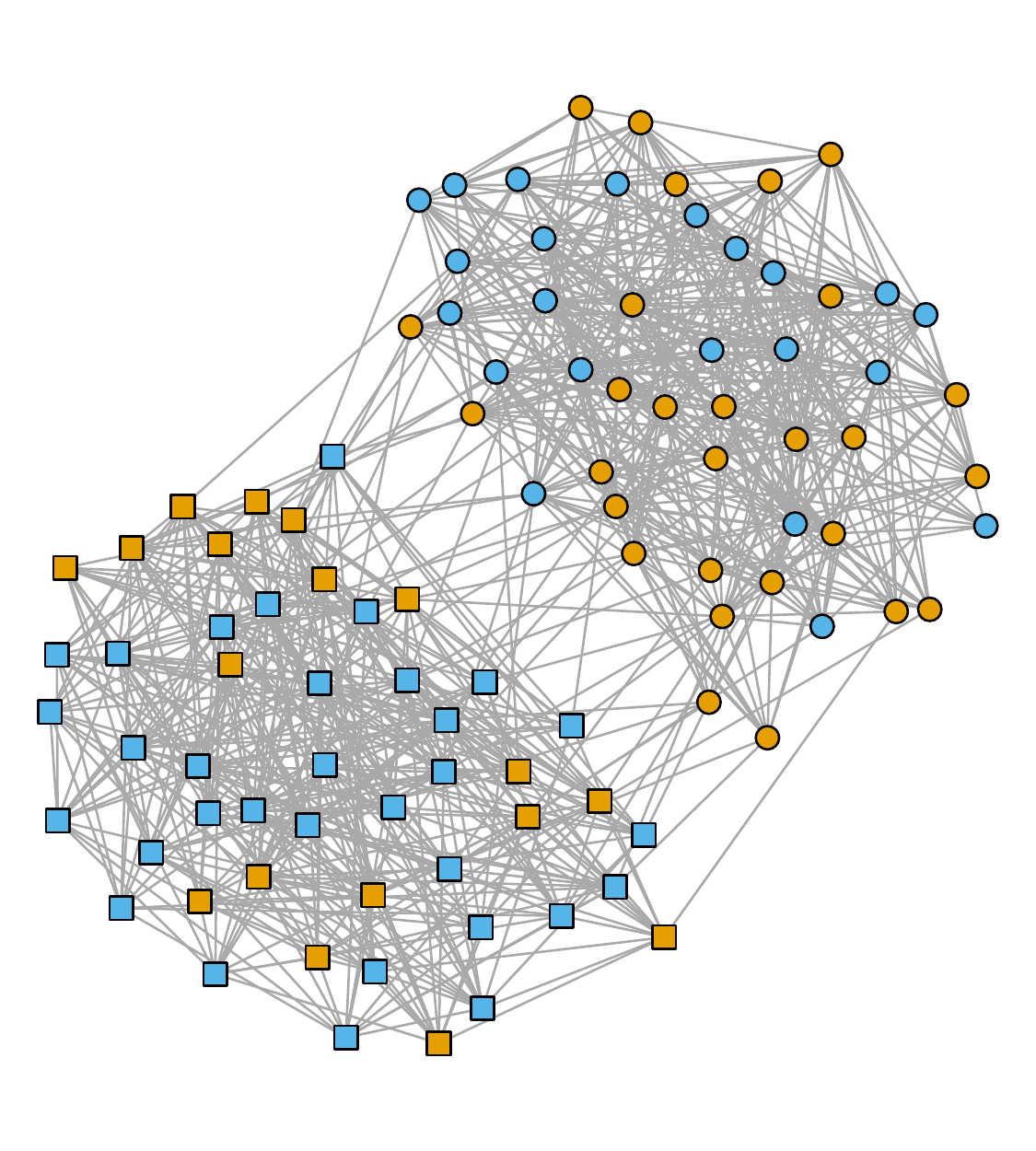}
    \end{subfigure}
    \hfill
    \begin{subfigure}{0.60\textwidth}
    \includegraphics[width=\textwidth]{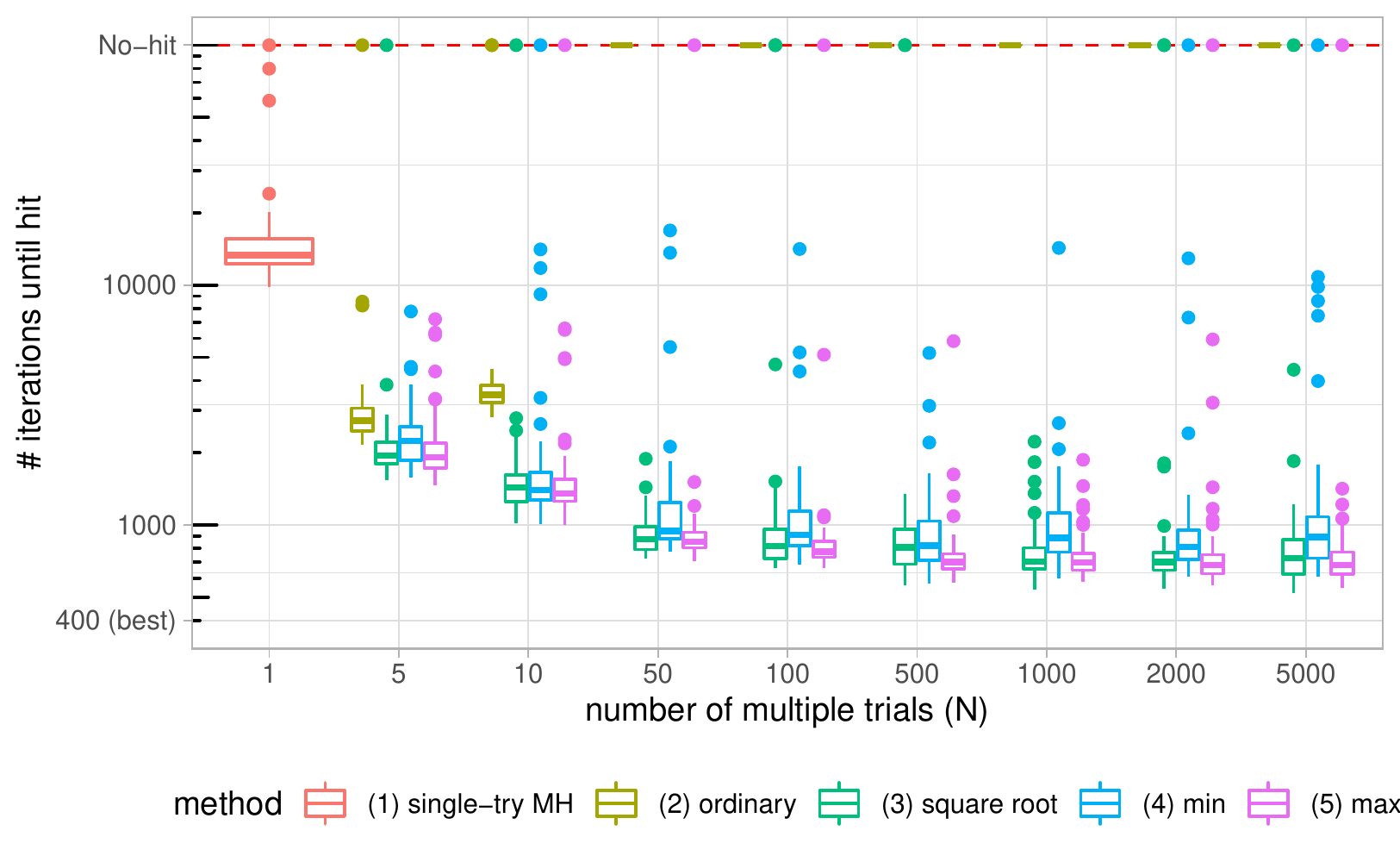}
    \end{subfigure}
    \caption{(Left) An example graph generated from SBM with $p=100$, $K=2, a=0.32$, and $b=0.02$ so that $\mathrm{CH} \approx2$. Node shape (square, circle) denotes the true partition. Node color (blue, yellow) represents the initial state $\mathbf{z}_0$ satisfying $\tilde{d}_\mathrm{H}(\mathbf{z}_0,\mathbf{z}^*)=40$. (Right) Boxplot of $H$ against different weight functions and $N$ based on 50 replicates when $p=1000, K=2$, and $\mathrm{CH} \approx2$.} 
    \label{fig:sbm}
\end{figure}

Results for $K=2$ and $\mathrm{CH}\approx 2$ are summarized in Figure~\ref{fig:sbm}. 
The boxplot of $H$ is similar to that of Figure~\ref{fig:bvs}: $w_{\mathrm{ord}}$ quickly deteriorates as $N$ increases, but the locally balanced weight functions mostly converge with a larger choice of $N$, although some Markov chains fail to converge due to the low $\mathrm{CH}$.
Especially, now the performance of $w_{\mathrm{max}}$ or $w_{\mathrm{sqrt}}$ is slightly better than $w_{\mathrm{min}}$ when $N$ is large. 
This can be explained by the fact that in low signal case ($\mathrm{CH}\approx 2$), there are few neighboring states that have high probability ratios. It is important to catch such states to increase the acceptance ratio \eqref{eq:acceptance_probability}, and the $w_\mathrm{min}$ cannot do this efficiently as it is upper bounded by $1$.
The choice of $N$ suggested from Algorithm~\ref{alg:N} with $\psi=0.9$ has a median of 15 over 50 datasets.
The detailed simulation results for other settings are available at Appendix~\ref{sec:append.SBM}.

Finally, Table~\ref{table:th} provides the summary of wall-clock hitting time $T_H$ for the previous two examples, which shows the clear computational benefit of MTM over the single-try MH thanks to the parallelism. As $N$ increases, computation becomes more demanding compared to the decreasing rate of $H$, so the optimal choice of $N$ with respect to $T_H$ is often less than the minimizer of $H$.
\begin{table}[t]
    \small
  \caption{Summary of $T_H$, median wall-clock time (in seconds) until the Markov chain hit the highest posterior state, for the previous two examples using the square root weight function over 50 replicates.}
  \centering
\label{table:th}
  \begin{tabular}{c c c c c c c c c c c}
    \toprule
    $N$&  & 1 & 5 & 10 & 50 & 100 & 500 & 1000 & 2000 & 5000 \\
    \midrule
    \multirow{2}{*}{$T_H$} & BVS, indep, SNR$=4$  & 1.30 & 0.81 & 0.46 & 0.11 & 0.07 & \textbf{0.04} & 0.05 & 0.09 & 0.30 \\
    \cmidrule(lr){2-11}
     & SBM, $K=2$, $\mathrm{CH}\approx 2$ &  0.82 & 0.40 & \textbf{0.35} & 0.48 & 0.75 & 1.49 & 2.15 & 3.78 & 9.30\\
    \bottomrule
  \end{tabular}
\end{table}

\textbf{Spatial clustering models and additional information.} 
In Appendix \ref{sec:append.SCM}, we present simulation study results with spatial clustering models where the key findings align with those from BVS and SBM. In Appendix \ref{sec:append.multimodal}, we also analyze the behavior of MTM algorithms on multimodal target distributions. The code is available at \url{https://github.com/changwoo-lee/rapidMTM}.

\section{Real data applications and discussion}\label{sec:realanddiscussion}

\textbf{Real data applications. }
We carry out two real data application analyses to corroborate our findings beyond the scope of the simulation settings. The performance measures suggested in Section~\ref{sec:simulation}, the hitting iteration $H$ and the corresponding wall-clock hitting time $T_H$, are no longer available for real data analysis since we cannot identify the true data generating state $x^*$. Instead, we evaluate the performance based on the acceptance rate and the number of unique states visited. 
Both real data analyses confirm our theory: the performance deteriorates significantly as the number of trials $N$ grows when using $w_{\mathrm{ord}}$, but it does not when using $w_{\mathrm{sqrt}}$, $w_{\mathrm{min}}$, or $w_{\mathrm{max}}$.

The first real data application is to find a subset of genetic variants from a genome-wide association study (GWAS) dataset that best explains the cup-to-disk ratio averaged over two eyes, which is used to assess the risk of glaucoma. We employ BVS model described in Section~\ref{subsec:simul:BVS}, with the number of samples $n=5418$ and the number of gene variants $p = 7255$ which are selected after the preliminary screening procedure described in~\cite[Section 6]{zhou2021dimension}. We tabulate the acceptance rate and the number of unique states visited along the different number of trials $N$, and also report the posterior inclusion probabilities of the top 10 genetic variants in Appendix~\ref{subsec:real:BVS} and \ref{sec:addtable} with discussion.

The other application is to learn the underlying directed acyclic graph (DAG) model for the single-cell RNA dataset on Alzheimer’s disease~\cite{jiang2020scread}. We preprocess the dataset as~\cite[Section 6]{chang2022order}, which yields the sample size $n=1666$ and 73 genes, and we utilize the DAG model described in~\cite[Section 2.2]{chang2022order}, whose model size is equal to $73! \approx 4.5 \times 10^{105}$. We defer the performance result and other details (including the log-posterior trace plots for 4 different weight functions) to Appendix~\ref{subsec:real:structure_learning}.

\textbf{Discussion and future work. }
We prove that, under some assumptions, the mixing time bound of the MTM algorithm is smaller than that of the MH algorithm by a factor of the number of trials. Motivated by the observation that popular choices of weight function can cause mixing problems, we propose locally balanced weight functions for which a mixing time bound is proven. 
We also suggest a theoretically guided choice of $N$ for practical use.
Future research may investigate the mixing time of other generalizations of MTM such as the case with correlated weight functions. It will also be of interest to extend MTM and analyze its mixing time to better handle the potential multimodality in the target distribution, by combining it with techniques such as annealing or tempering \cite{casarin2013interacting}.

\begin{ack}
We thank James Cai for providing us with the processed single-cell data set on Alzheimer's disease.  
We thank the submitters and participants of the two dbGaP studies (phs000308.v1.p1 and phs000238.v1.p1), which were funded by NIH. 
We thank Florian Maire, Giacomo Zanella, and Philippe Gagnon for the helpful discussion at ISBA 2022. The research of Changwoo Lee, Zhao Tang Luo, and Huiyan Sang was partially supported by NSF DMS-1854655 and DMS-2210456. 
\end{ack}

\bibliography{arxiv_cameraready.bib}
\bibliographystyle{plain}

\newpage 

\begin{center}
\LARGE{\textbf{Appendices}}
\end{center}

\setcounter{section}{0}
\renewcommand\thesection{\Alph{section}}
\renewcommand\thesubsection{\Alph{section}.\arabic{subsection}}

\setcounter{equation}{6}
\setcounter{figure}{4}
\setcounter{table}{1}
In Appendix~\ref{sec:proof}, we provide proofs of Proposition~\ref{prop:path}, Proposition~\ref{prop:weight}, and Theorem~\ref{thm:mixing} in the main text.
In Appendix~\ref{sec:append.simulation}, we provide more details of the Bayesian variable selection (BVS) and stochastic block model (SBM) in Section~\ref{sec:simulation} as well as a detailed simulation study on the spatial clustering model (SCM).
In addition, we study the performance of multiple-try Metropolis for the case with multimodal target distributions, following the BVS simulation setting of~\cite{zhou2021dimension}. 
In Appendix~\ref{sec:appendix:real}, we present details of two real data application analyses.
In Appendix~\ref{sec:adddiscussion}, we add a more detailed discussion on parallelization, state space of interest, and the behavior of MTM on continuous state space.
Finally, we provide additional tables on the real data analysis results in Appendix~\ref{sec:addtable}.

\section{A path method for proving the mixing time bound for multiple-try Metropolis algorithm}\label{sec:proof}

\subsection{Proof of Proposition~\ref{prop:path}}\label{subsec:pf_path}

This section aims to provide a summary of the existing results on proving mixing time bound via path methods. We refer readers to~\cite{levin2017markov, saloff1997lectures, diaconis1991geometric, sinclair1992improved} for more details.
Let $\bP(x,y)$ denote the transition probability for an irreducible, aperiodic chain on the finite state space $\cX$.
Assume $\bP$ satisfies the detailed balance condition with respect to the probability distribution $\pi$, that is, $\pi(x)\bP(x,y) = \pi(y) \bP(y,x)$ for $x, y \in \cX$, which leads to $\pi$ being stationary for $\bP$~\citep[Proposition 1.20]{levin2017markov}. $\bP$ may be thought as a $|\cX| \times |\cX|$ stochastic matrix, which means $\bP(x, y) \geq 0$ and $\sum_{z \in \cX} \bP(x, z) = 1$ for all $x,y \in \cX$, and $\pi$ can be regarded as a $|\cX|$-dimensional stochastic vector since $\sum_{x \in \cX} \pi(x) = 1$. By the spectral decomposition, we can sort the eigenvalues of $\bP$ as
\begin{align*}
    1 =  \lambda_0 > \lambda_1 \geq \cdots \geq \lambda_{|\cX|-1} > -1, 
\end{align*}
due to $\bP$ being stochastic, irreducible and aperiodic~\citep[Lemma 12.1]{levin2017markov}. 
Let $\lambda_{\mathrm{max}} = \max \{ \lambda_1, |\lambda_{|\cX|-1} |\}$. 
We say $ \gap (\bP) = 1-\lambda_{\mathrm{max}}$ is the \textit{spectral gap} of the chain $\bP$. Intuitively, if the spectral gap is close to zero, the chain requires a large number of steps to be close to the stationary distribution in total variation distance. The following lemma draws the connection between the spectral gap and mixing time defined in Section~\ref{subsec:mixing}.
\begin{lemma}\label{lemma:mix}
Let $\tmix(\epsilon)$ denote $\epsilon$-mixing time defined in Section~\ref{subsec:mixing}, then
\begin{align*}
    \tmix(\epsilon) \leq  2\{\gap (\bP)\}^{-1} \left[\log (1/ \epsilon) + \log \left \{  \min_{x \in \cX} \pi(x) \right\}^{-1}\right].
\end{align*}
\end{lemma}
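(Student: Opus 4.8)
The plan is to prove Lemma~\ref{lemma:mix} by relating the total variation distance at step $t$ to the $\ell^2(\pi)$-distance, then controlling the latter by the spectral gap. First I would recall that for a reversible chain, the eigenfunctions $\{f_i\}$ of $\bP$ form an orthonormal basis of $\ell^2(\pi)$ with $f_0 \equiv 1$, so that writing the density $\bP^t(x,\cdot)/\pi(\cdot)$ in this basis and using orthogonality gives the standard identity
\begin{align*}
    4\,\TV{\bP^t(x,\cdot)-\pi(\cdot)}^2 \leq \left\| \frac{\bP^t(x,\cdot)}{\pi(\cdot)} - 1 \right\|_{\ell^2(\pi)}^2 = \sum_{i=1}^{|\cX|-1} \lambda_i^{2t} f_i(x)^2,
\end{align*}
where the first inequality is Cauchy--Schwarz. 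Bounding $|\lambda_i| \leq \lambda_{\mathrm{max}} = 1 - \gap(\bP)$ and $\sum_i f_i(x)^2 \leq 1/\pi(x)$ (evaluating the basis-expansion of the point mass at $x$) yields
\begin{align*}
    4\,\TV{\bP^t(x,\cdot)-\pi(\cdot)}^2 \leq \frac{(1-\gap(\bP))^{2t}}{\pi(x)} \leq \frac{e^{-2t\,\gap(\bP)}}{\min_{x\in\cX}\pi(x)}.
\end{align*}

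Next I would solve for $t$: the right-hand side is at most $\epsilon^2$ (equivalently $4\,\TV{}^2 \le 4\epsilon^2$, absorbing the factor into constants as is standard, or more carefully requiring $e^{-t\,\gap}(\min\pi)^{-1/2} \le \epsilon$) as soon as
\begin{align*}
    t \geq \frac{1}{\gap(\bP)}\left[\log(1/\epsilon) + \tfrac{1}{2}\log\left(\min_{x\in\cX}\pi(x)\right)^{-1}\right],
\end{align*}
and a factor of $2$ comfortably covers the bookkeeping constants, giving the stated bound. Since the statement is for $\bP$ itself but $\tmix$ in Section~\ref{subsec:mixing} is defined via the lazy chain $\bP_{\rm lazy}$, I would note that laziness only makes eigenvalues non-negative and changes the gap by at most a factor of $2$, which is again absorbed; alternatively one states the lemma directly for $\bP_{\rm lazy}$ whose eigenvalues lie in $[0,1)$ so $\lambda_{\mathrm{max}} = \lambda_1$ and no two-sided bound on negative eigenvalues is needed.

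The main obstacle is not any single estimate but assembling the standard spectral-theoretic facts cleanly: the spectral decomposition of a reversible stochastic matrix in $\ell^2(\pi)$, the identity $\sum_i f_i(x)^2 = 1/\pi(x)$, and the Cauchy--Schwarz step linking $\TV$ to the $\chi^2$/$\ell^2$ distance. None of these is deep, but each requires care with the weighting by $\pi$ and with constants; I would cite \cite[Section 12.2 and Lemma 12.18]{levin2017markov} for the $\ell^2$ contraction bound and \cite[Section 3]{saloff1997lectures} for the TV-to-$\ell^2$ comparison, and present the short derivation above for completeness. A secondary point to handle is that $\lambda_{|\cX|-1} > -1$ strictly (guaranteed by aperiodicity), so $\lambda_{\mathrm{max}} < 1$ and the gap is genuinely positive, making the geometric decay valid.
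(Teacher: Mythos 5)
Your proof is correct and takes essentially the same route as the paper: where the paper imports the inequality $4\max_{x}\TV{\bP_{\mathrm{lazy}}^t(x,\cdot)-\pi(\cdot)}^2 \leq \frac{1-\pi(x)}{\pi(x)}\,e^{-2t\,\gap(\bP_{\mathrm{lazy}})}$ from Diaconis--Stroock, you re-derive it via the spectral decomposition and the identity $\sum_i f_i(x)^2 = 1/\pi(x)$, and both arguments then conclude with the same algebra, the factor $2$ arising from $\gap(\bP_{\mathrm{lazy}}) \geq \gap(\bP)/2$. The only point to state cleanly is that this single factor of $2$ is attributable entirely to the lazy-gap comparison (not also to "bookkeeping"), which is fine since your non-lazy bound $t \geq \{\gap(\bP)\}^{-1}[\log(1/\epsilon)+\tfrac12\log\{\min_{x\in\cX}\pi(x)\}^{-1}]$ already sits below the lemma's constant with no multiplicative slack consumed.
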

\begin{proof}
We consider $\bP_{\mathrm{lazy}} = (\mathbf{I} + \bP)/2$ so that all eigenvalues of $\bP_{\mathrm{lazy}}$ are positive and the spectral gap becomes $\gap(\bP_{\mathrm{lazy}}) = 1 - \lambda_1$.
By (1.10) of~\cite[Proposition 3]{diaconis1991geometric}, 
\begin{align*}
    4\max_{x \in \cX}\TV{\bP_{\mathrm{lazy}}^t(x,\cdot) - \pi(\cdot)}^2 \leq \max_{x \in \cX} \{(1 - \pi(x)) /\pi(x)\} \exp[-2t \gap(\bP_{\mathrm{lazy}})],
\end{align*}
for $t \in \bbN$. We set $ \max_{x \in \cX}  \{(1 - \pi(x)) /\pi(x)\} \exp[-2t \gap(\bP_{\mathrm{lazy}})] \leq 4 \epsilon^2$, and solving the inequality with $t$ gives
\begin{align*}
    & t \geq \{ \gap(\bP_{\mathrm{lazy}})\}^{-1} \left( \frac{1}{2} \max_{x \in \cX}  \log(\pi(x)^{-1} -1) - \log2 + \log (1/\epsilon) \right).
\end{align*}
Since
\begin{align*}
    &\{ \gap(\bP_{\mathrm{lazy}})\}^{-1} \left( \frac{1}{2} \max_{x \in \cX}  \log(\pi(x)^{-1} -1) - \log2 + \log (1/\epsilon) \right) \\ & \leq  \{\gap (\bP_{\mathrm{lazy}})\}^{-1} \left[\log (1/ \epsilon) + \log \left \{  \min_{x \in \cX} \pi(x) \right\}^{-1}\right] \\
    & \leq  2 \{\gap (\bP)\}^{-1} \left[\log (1/ \epsilon) + \log \left \{  \min_{x \in \cX} \pi(x) \right\}^{-1}\right],
\end{align*}
it follows that to achieve ``$\epsilon$-mixing'', it suffices to choose 
\begin{align*}
    t \geq 2 \{\gap (\bP)\}^{-1} \left[\log (1/ \epsilon) + \log \left \{  \min_{x \in \cX} \pi(x) \right\}^{-1}\right]. 
\end{align*}
This concludes the proof.
\end{proof}
Our next interest is to find the lower bound of the spectral gap, which leads to the upper bound of the mixing time. It is often the case that the functional analysis tool is useful to establish such bound. Let $\bbR^{\cX} = \{f \colon \cX \rightarrow \bbR \}$ and 
$\ell^2(\pi) \subset \bbR^{\cX} $ be the vector space equipped with an inner product $\inner{\cdot, \cdot}$, which is defined by $\inner{f_1,f_2} = \sum_{x \in \cX} f_1(x)f_2(x)\pi(x)$ for all $f_1, f_2 \in \ell^2(\pi)$. We can regard the transition probability $\bP$ as a function operator in $\ell^2(\pi)$, which can be defined as $\bP f (x) = \sum_{y \in \cX} \bP(x,y) f(y)$. We define the \textit{Dirichlet form} associated to the pair $(\bP, \pi)$ by
\begin{align*}
    \cE(f_1,f_2) = \inner{(\mathbf{I}-\bP)f_1, f_2} \quad \text{ for } f_1, f_2 \in \ell^2(\pi).
\end{align*}
By the reversibility of $\bP$, we can easily check that $\cE(f) = \cE(f,f) = \frac{1}{2} \sum_{x,y \in \cX} [f(x) - f(y)]^2 \pi(x) \bP(x,y)$. The spectral gap can be defined 
using the Dirichlet form as follows \citep[Remark 13.8]{levin2017markov}:
\begin{align*}
    \gap (\bP) = \min_{\stackrel{f \in \ell^2(\pi)}{\mathrm{Var}_{\pi}(f) \neq 0}} \frac{\cE(f)}{\mathrm{Var}_{\pi}(f)},
\end{align*}
where $\mathrm{Var}_{\pi}(f) = \sum_{x\in \cX} (f(x) - \bbE_{\pi} f)^2 \pi(x)$. This definition also has a link to the famous \textit{Poincar\'e inequality}~\cite{bakry2014analysis}, that is,  $\mathrm{Var}_{\pi}(f) \leq C \cE(f)$ for all $f \in \ell^2(\pi)$, because the smallest constant $C$ is equal to $\{\gap (\bP)\}^{-1}$. The next lemma uses Poincar\'e inequality and an arbitrary path ensemble $\Delta$ defined in the main text. 

\begin{lemma}[Corollory 6, \cite{sinclair1992improved}]\label{lemma:gap}
For an arbitrary path ensemble $\Delta$,
\begin{align*}
    \gap (\bP) \geq \frac{1}{\rho(\Delta)  l(\Delta)}.
\end{align*}
where $\ell(\Delta) = \max_{x,y} |\delta(x,y)|$ and
\begin{align*}
    \rho(\Delta) = \max_{(u,v) \in E} \frac{1}{\pi(u)\bP(u,v)} \sum_{x,y : \delta(x,y) \ni (u,v) }\pi(x)\pi(y) 
\end{align*}
\end{lemma}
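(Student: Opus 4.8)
The plan is to prove, equivalently, the Poincar\'e inequality $\mathrm{Var}_{\pi}(f) \leq \rho(\Delta)\ell(\Delta)\,\cE(f)$ for every $f \in \ell^2(\pi)$. By the variational characterization recalled just above — namely that $\{\gap(\bP)\}^{-1}$ is precisely the smallest admissible constant $C$ in $\mathrm{Var}_{\pi}(f)\leq C\,\cE(f)$ — establishing this inequality with constant $\rho(\Delta)\ell(\Delta)$ immediately yields $\gap(\bP) \geq 1/(\rho(\Delta)\ell(\Delta))$. So the whole argument reduces to bounding the variance of an arbitrary test function by its Dirichlet energy, with the path ensemble $\Delta$ serving as the bridge between them.

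First I would rewrite the variance in a ``pairwise'' form that mirrors the Dirichlet form. Expanding $\bbE_{\pi}f = \sum_x f(x)\pi(x)$ gives the standard identity
$$\mathrm{Var}_{\pi}(f) = \frac{1}{2}\sum_{x,y \in \cX}[f(x)-f(y)]^2\,\pi(x)\pi(y),$$
which places the variance on the same footing as $\cE(f) = \frac{1}{2}\sum_{x,y}[f(x)-f(y)]^2\pi(x)\bP(x,y)$. The difference is that the variance directly connects \emph{every} pair $(x,y)$, whereas the Dirichlet form only charges genuine edges; the path ensemble is exactly the device that routes each direct connection through a sequence of edges.

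Next, for each ordered pair $x\neq y$ I would telescope $f$ along the chosen path $\delta(x,y)=(v_0,\dots,v_m)$, writing $f(x)-f(y)=\sum_{(u,v)\in\delta(x,y)}[f(u)-f(v)]$ as a sum over its consecutive edges, and apply Cauchy--Schwarz to get $[f(x)-f(y)]^2 \leq |\delta(x,y)|\sum_{(u,v)\in\delta(x,y)}[f(u)-f(v)]^2 \leq \ell(\Delta)\sum_{(u,v)\in\delta(x,y)}[f(u)-f(v)]^2$. Substituting this into the variance identity and then exchanging the order of summation — summing over each edge $(u,v)\in E$ first and collecting all pairs whose path traverses it — converts the double sum over pairs into a single sum over edges weighted by the congestion $\sum_{x,y:\delta(x,y)\ni(u,v)}\pi(x)\pi(y)$. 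By the definition of $\rho(\Delta)$ this is at most $\rho(\Delta)\pi(u)\bP(u,v)$, so the expression collapses to $\rho(\Delta)\ell(\Delta)\cdot\frac{1}{2}\sum_{(u,v)}[f(u)-f(v)]^2\pi(u)\bP(u,v)=\rho(\Delta)\ell(\Delta)\,\cE(f)$, which is the desired bound.

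The step demanding the most care is the summation exchange together with the bookkeeping of edge orientations: both the congestion parameter and the Dirichlet form are written over directed edges with a compensating factor $\tfrac12$, and I would lean on reversibility $\pi(u)\bP(u,v)=\pi(v)\bP(v,u)$ to verify that the direct-versus-reverse traversal conventions align and that no factor is dropped or double-counted. Everything else — the variance identity and the Cauchy--Schwarz telescoping — is routine; the crux is ensuring that the ``loading'' each edge receives under $\Delta$ is accounted for exactly once and compared against the correct capacity $\pi(u)\bP(u,v)$.
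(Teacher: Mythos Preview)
Your proposal is correct and follows essentially the same route as the paper's proof: rewrite $\mathrm{Var}_\pi(f)$ in pairwise form, telescope $f(x)-f(y)$ along $\delta(x,y)$, apply Cauchy--Schwarz to pick up the path-length factor, swap the order of summation so that edges are summed first, and then invoke the definition of $\rho(\Delta)$ to recover $\cE(f)$. The only cosmetic difference is that the paper keeps $|\delta(x,y)|$ inside the sum and pulls out $\ell(\Delta)$ at the very end, whereas you bound $|\delta(x,y)|\le\ell(\Delta)$ immediately after Cauchy--Schwarz; the arguments are otherwise identical.
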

\begin{proof}
We follow the proof given in~\cite[Theorem 3.2.1]{saloff1997lectures}. 
For each $(x,y) \in \cX \times \cX$ and for any function $f \in \bbR^\cX$, we can write $f(y) - f(x) = \sum_{(u,v) \in \delta(x,y)} f(v) - f(u)$. By using Cauchy-Schwarz, multiplying $\pi(x)\pi(y) /2$, and summing over $x$ and $y$,
\begin{gather*}
 |f(y) - f(x)|^2 \leq |\delta(x,y)| \sum_{(u,v) \in \delta(x,y)} |f(v) - f(u)|^2 \\ 
\Longrightarrow \quad  \underbrace{\frac{1}{2} \sum_{x,y} |f(y) - f(x)|^2 \pi(x) \pi(y)}_{= \mathrm{Var}_\pi(f)} \leq \frac{1}{2} \sum_{x,y} |\delta(x,y)| \sum_{(u,v) \in \delta(x,y)} |f(v) - f(u)|^2 \pi(x) \pi(y),
\end{gather*}
where the right-hand side becomes
\begin{align*}
    & \frac{1}{2} \sum_{(u,v) \in E} \left \{ \frac{1}{\pi(u) \bP(u,v)} \sum_{x,y : \delta(x,y) \ni (u,v)} |\delta(x,y)| \pi(x) \pi(y) \right\} |f(v) - f(u)|^2 \pi(u) \bP(u,v) \\
    \leq & \underbrace{\max_{(u,v) \in E}  \left \{ \frac{1}{\pi(u) \bP(u,v)} \sum_{x,y : \delta(x,y) \ni (u,v)} \pi(x) \pi(y) \right\}}_{ \rho(\Delta)}  \underbrace{\left(\max_{x,y} |\delta(x,y)|\right)}_{ \ell(\Delta)}  \underbrace{\left(\frac{1}{2}\sum_{(u,v) \in E} |f(v) - f(u)|^2 \pi(u) \bP(u,v) \right)}_{ \cE(f)}.
\end{align*}
This satisfies the Poincar\'e inequality, which yields the conclusion. 
\end{proof}
By combining the results of Lemma~\ref{lemma:mix} and Lemma~\ref{lemma:gap}, we get the conclusion of Proposition~\ref{prop:path}. 

\subsection{Proof of Proposition~\ref{prop:weight}}\label{subsec:pf_prop1}

\begin{proof}
Recall that the form of the weight function suggested in~\cite{liu2000multiple} is given by
\begin{align*}
    w(y \given x) = \pi(y) \MH (y,x) \lambda(y,x),
\end{align*}
where $\lambda(x,y)=\lambda(y,x)$ is a non-negative symmetric function in $x$ and $y$, and satisfies $\lambda(x,y) > 0$ whenever $\MH(x,y) > 0$.
If we put
\begin{align*}
\lambda(x,y) = \frac{1}{ \pi(y)\MH(y,x)}  h\left( \frac{\pi(y)\MH(y,x)}{\pi(x)\MH(x,y)}\right),
\end{align*}
it is easy to check that the conditions are met. 
\end{proof}

\subsection{Proof of Theorem~\ref{thm:mixing}}\label{subsec:pf_thm1}

In this section, we prove our main result by using Proposition~\ref{prop:path}. The main step of the proof is identifying the path ensemble $\Delta^*$ that makes the mixing time bound tight. To this end, we need to choose exactly one path $\delta^*(x,y)$ for each tuple $(x,y) \in \cX \times \cX$. (Note that $x$ and $y$ cannot be identical by the definition of path). From the intuition described in Section~\ref{subsec:mixing}, an edge $(u,v)$ has a large capacity if $\pi(u)$ and $\pi(v)$ are large. For example, if an edge contains the highest posterior state $x^*$, we can let the edge be traversed by a large number of paths. Given an edge with a small capacity, however, we need to ensure that the edge overlies with a small number of paths.
Importantly, we do not let a path $\delta^*(x,y)$ pass through an edge with a small capacity if both $\pi(x)$ and $\pi(y)$ are large, so that the edge can maintain a small unit flow size.
We may envision the topography of the path ensemble; $x^*$ becomes the hub, while states with low posterior probability are located on the outskirt. 

We construct the path ensemble $\Delta^*$ according to the description above. The construction of $\Delta^*$ is similar to that of~\cite{yang2016computational} and~\cite{zhou2021complexity}. With a neighborhood relation $\cN$ that satisfies the conditions in Theorem~\ref{thm:mixing},  $g \colon \cX \rightarrow \cX$\footnotetext[1]{If multiple states tie, we randomly pick one of them.}, 
\begin{align}\label{eq:g}
    g(x)= 
    \begin{cases}
    \arg \max_{x' \in \cN(x)} \pi(x') ^1
     & \text{ if } x \neq x^*, \\
    x^* & \text{ otherwise.}
    \end{cases}
\end{align}
By Condition (i) in Theorem~\ref{thm:mixing}, there exists $m \in \bbN$ such that $g^m(x) = (\overbrace{g\circ\dots\circ g}^{m\text{ times}})(x)=x^*$ for any $x \in \cX$ and $x^*$ is the only fixed point of $g$. ($x^*$ can be thought as an attractor in dynamic systems.) For all $x,y \in \cX$ with $x \neq y$, we have three cases; (i) $g^m(x) = y$ for some $m \in \bbN \setminus \{0\}$, (ii) $g^m(y) = x$ for some $m \in \bbN \setminus \{0\}$, or (iii) neither (i) nor (ii). If $(x,y) \in \cX \times \cX$ belongs to (i), we define $\delta^*(x,y) = (x, g(x), \dots, g^m(x) = y)$. Similarly, if $(x,y) \in \cX \times \cX$ belongs to (ii), let $\delta^*(x,y) = (x = g^m(y), \dots, g(y) , y)$. For the case (iii), if $m_1, m_2 \in \bbN \setminus \{0\}$ are the minimum numbers that satisfy $g^{m_1}(x) = x^*, g^{m_2}(y) = x^*$, respectively, we let $\delta^*(x,y) = (x, g(x), \dots, g^{m_1}(x) = x^* = g^{m_2}(y), \dots, g(y), y)$. This yields the path ensemble $\Delta^*$. 
We provide a toy example on how to construct $g$ and any path $\delta^*(x,y)$ for each tuple $(x,y) \in \cX \times \cX$ associated with $g$ in Appendix~\ref{subsec:ex_construction_path}.

Next, we make a bound for the congestion parameter $\rho(\Delta^*)$. We let $\Lambda(u) = \{x \in \cX: u = g^{k}(x), k \in \bbN\}$ denote the ancestor set of $u$ with respect to $g$. If $(u,v) \in \delta^*(x,y)$ for some $x,y \in \cX$, we can easily verify that $x \in \Lambda(u)$ by the construction of $\delta^*$. This implies $\{(x,y) \in \cX \times \cX: (u, v) \in \delta^*(x,y) \} \subseteq \Lambda(u) \times \cX$. It follows that
\begin{align*}
    \rho(\Delta^*) & \leq \max_{(u, v) : v =g(u)} \frac{1}{\pi(u) \bP(u, v)} \sum_{(x,y) \in \Lambda(u) \times \cX} \pi(x)\pi(y) \\
    & = \max_{(u, v) : v = g(u)} \frac{1}{\pi(u) \bP(u, v)} \left(\sum_{x \in \Lambda(u)} \pi(x) \right) \left( \sum_{y \in \cX} \pi(y) \right) \\
    & = \max_{(u, v) : v = g(u)} \frac{\pi(\Lambda(u))}{\pi(u) \bP(u, v)} \\
    & \leq \max_{(u, v) : v = g(u)} \frac{1}{(1 - p^{-(t_2 - t_4)}) \bP(u,v)}.
\end{align*}
The last inequality holds by the following. Let $g^{-k}(u) = \{x \in \cX \colon g^{k}(x) = u, g^{k-1}(x) \neq u\}$ for $k \in \bbN$, then $\Lambda(u) = \uplus_{k \in \bbN} g^{-k}(u)$. By Condition (ii) in Theorem~\ref{thm:mixing}, we have $|g^{-k}(u)| \leq p^{kt_4}$ which yields,
\begin{align*}
    \frac{\pi(\Lambda(u))}{\pi(u)} = \sum_{k \in \bbN} \frac{\pi(g^{-k}(u))}{\pi(u)} \leq \sum_{k \in \bbN} p^{-k(t_2 -t_4)} =  \frac{1}{1 - p^{-(t_2 - t_4)}},
\end{align*}
where we use Condition (i) and the definition of $g$, and $t_2 > t_4$ in Theorem~\ref{thm:mixing}.

Finally, we show that $\bP(x, g(x)) \geq C' \frac{N}{p^{t_4}}$ for $x \neq x^*$ and for some universal constant $C' > 0 $. Recall that (see also \cite[Theorem 1]{liu2000multiple})
\begin{gather*}
    \bP(x,g(x)) = N \sum_{y_1, \dots, y_{N-1}} \sum_{x^\star_1, \dots, x^\star_{N-1}} \frac{w(g(x) \given x)}{w(g(x) \given x)+ \sum_{j=1}^{N-1} w(y_j \given x)} \min\left\{1, \frac{w(g(x) \given x) + \sum_{j=1}^{N-1} w(y_j \given x) }{w(x \given g(x))+ \sum_{j=1}^{N-1} w(x^\star_j \given g(x))} \right\}\times\\
    \quad \quad  \MH(x, g(x)) \MH(x,y_1) \cdots \MH(x,y_{N-1}) \MH(g(x),x^\star_1) \cdots \MH(g(x),x^\star_{N-1}) 
    \\
    \quad \quad \quad \quad \quad \geq \left( N \sum_{y_1, \dots, y_{N-1}} \frac{w(g(x) \given x)}{w(g(x)\given x)+ \sum_{j=1}^{N-1}w(y_j \given x)} \MH(x, g(x)) \MH(x,y_1) \cdots \MH(x,y_{N-1}) \right) \times  \\
    \quad \quad \quad \quad \quad \quad \left( \sum_{x^\star_1, \dots, x^\star_{N-1}} \min\left\{1, \frac{w(g(x)\given x) }{w(x \given g(x))+ \sum_{j=1}^{N-1} w(x^\star_j \given g(x))} \right\} \MH(g(x),x^\star_1) \cdots \MH(g(x),x^\star_{N-1}) \right), 
\end{gather*}
and we denote the first and the second terms of the right-hand side by
\begin{align}
    \bK(x,g(x)) &= N \sum_{y_1, \dots, y_{N-1}} \frac{w(g(x) \given x)}{w(g(x) \given x)+ \sum_{j=1}^{N-1}w(y_j \given x)} \MH(x, g(x)) \MH(x,y_1) \cdots \MH(x,y_{N-1}), \label{eq:proposal} \\
    \eta (x,g(x)) &= \sum_{x^\star_1, \dots, x^\star_{N-1}} \min\left\{1, \frac{w(g(x) \given x) }{w(x \given g(x))+ \sum_{j=1}^{N-1} w(x^\star_j \given g(x))} \right\} \MH(g(x),x^\star_1) \cdots \MH(g(x),x^\star_{N-1}). \label{eq:acceptance}
\end{align}
Hence, we have $\bP(x,g(x))\ge \bK(x,g(x)) \eta(x,g(x))$.
We remark that the formulation of $\bK(x,g(x))$ in \eqref{eq:proposal} is based on the exchangeability, where the $N$-th trial state $y_N=g(x)$ is assumed to be selected as a proposal state and its probability is multiplied by $N$. However, we will not utilize such exchangeability to calculate a lower bound of $\bK(x,g(x))$. Instead, we define an event $A$ that selects $g(x)$ as the proposal from Step 1 and Step 2 in Algorithm~\ref{alg:MTM}, so that $\bbP(A) = \bK(x,g(x))$. We aim to lower bound the $\mathbb{P}(A)$ by using the law of total probability. To this end, we introduce the event $F$ for Step 1 in Algorithm~\ref{alg:MTM} that we include the state $g(x)$ at least once among the $N$ trials while we don't sample any ``high'' posterior states of the neighborhood of $x$ for the rest of trials, i.e. they do not belong to the set $\mathcal{S}(x)$. Using conditional probability rule, the probability of the event $F$ is equal to
\begin{align}
    \bbP(F) =  \left(\frac{|\cN(x)| - |\mathcal{S}(x)| + 1}{|\cN(x)|} \right)^N \left(1 - \left(\frac{|\cN(x)| - |\mathcal{S}(x)| }{|\cN(x)| - |\mathcal{S}(x)| + 1} \right)^N\right),
\end{align}
where we take the probability over the uniform samples in Step 1 in Algorithm~\ref{alg:MTM}. 
Using the inequality 
$a^n - b^n = (a - b)(a^{n-1} + a^{n-2} b + \cdots + b^{n-1}) \geq (a - b) n b^{n-1}$ for any $a \geq b \geq 0$, 
we find that the lower bound of $\bbP(F)$ can be obtained by
\begin{align*}
    \bbP(F) &= \left(\left(1 - \frac{ |\mathcal{S}(x)| - 1}{|\cN(x)|} \right)^N - \left(1 - \frac{ |\mathcal{S}(x)| }{|\cN(x)|} \right)^N  \right) \\
    &\geq \frac{N}{|\cN(x)|}\left(1-\frac{|\mathcal{S}(x)|}{|\cN(x)|}\right)^{N-1} \\
    &\geq \frac{N}{p^{t_4}}\left(1-(N-1)\frac{s_0}{p^{t_3}}\right) \\
    & \geq \frac{N}{p^{t_4}}(1+o(1)),
\end{align*}
where Condition (ii) and Bernoulli's inequality are used in the second inequality, and Condition (iii) is used in the last inequality. 
We further define $F_k$ as the event $F$ with $k$ number of $g(x)$ among $N$ trials. (Note that since we sample trials with replacement in Step 1 of Algorithm~\ref{alg:MTM}, we may sample $g(x)$ multiple times.) Observe that $F = \uplus_{k =1 }^N F_k$. Given the event $F_k$, the probability to select $g(x)$ in (2) of Algorithm~\ref{alg:MTM} is upper bounded by
\begin{align*}\label{eq:lower_bound_proposal}
     \bbP(A \mid F_k) &= \frac{k w(g(x) \given x)}{k w(g(x) \given x) + \sum_{i = 1}^{N-k} w(y_j \given x)}  \\
     & = \left\{1 + k^{-1}\sum_{i = 1}^{N-k} \frac{w(y_j \given x)}{w(g(x) \given x)} \right \}^{-1} \\
     & \stackrel{(\star)}{\geq}  \left\{1 + N \frac{h(p^{t_1+t_4-t_3})}{h(p^{t_2+t_3-t_4})} \right \}^{-1}  = 1 + o(1).
\end{align*}
To see $(\star)$, we have used the fact that $h$ is a non-decreasing function, Condition (iii), and we have 
\begin{gather*}
        \frac{\pi(y_j)}{\pi(x)} \cdot \frac{\MH(y_j,x)}{\MH(x,y_j)} \leq p^{t_1}  p^{t_4-t_3}, \\ 
    \frac{\pi(g(x))}{\pi(x)} \cdot \frac{\MH(g(x),x)}{\MH(x,g(x))} \geq p^{t_2} p^{t_3-t_4}, 
\end{gather*}
from Condition (i) and (ii).
Note that the right-hand side of the inequality $(\star)$ does not depend on $k$. Using the law of total probability, we yield the lower bound of $\bbP(A) = \bK(x,g(x))$ by combining the previous results:
\begin{align}\label{eq:K}
     \bK(x,g(x)) &= \bbP(A|F) \bbP(F) + \bbP(A|F^c) \bbP(F^c) \geq \bbP(A|F) \bbP(F) \nonumber\\
         & = \sum_{k=1}^N \bbP(A|F_k) \bbP(F_k) \geq \sum_{k=1}^N (1+o(1)) \bbP(F_k) \nonumber\\
         & = (1+o(1)) \bbP(F) \geq \frac{N}{p^{t_4}} (1 + o(1)),
\end{align}

Similarly, we can calculate the lower bound of $\eta(x,g(x))$. We consider the event $G$ that we don't select any ``high'' posterior states of the neighborhood of $g(x)$ for $N-1$ trials, that is, any of them are not in the set $\mathcal{S}(g(x))$. A simple calculation yields 
\begin{align*}
    \bbP (G) \geq \left( \frac{|\cN(g(x))| - s_0}{|\cN(g(x))|}\right)^{N-1}.
\end{align*}
Under the event $G$, on the other hand,
\begin{align*}
     \frac{w(g(x) \given x) }{w(x \given g(x))+ \sum_{j=1}^{N-1} w(x^\star_j \given g(x))} 
    = & \left(\frac{w(x \given g(x))}{w(g(x) \given x)} + \sum_{j=1}^{N-1}  \frac{w(x^\star_j\given g(x))}{w(g(x) \given x)} \right)^{-1}  \\
    =& \left( \frac{\pi(x)\MH(x,g(x))}{\pi(g(x))\MH(g(x),x)} + \sum_{j=1}^{N-1}  \frac{w(x^\star_j \given g(x))}{w(g(x) \given x)} \right)^{-1}  \\
    \geq &  \left( p^{-t_2 -t_3 + t_4} + (N-1) \frac{h(p^{t_1 -t_3 +t_4})}{h(p^{t_2 +t_3 -t_4})} \right)^{-1} ,
\end{align*}
where the second equality is due to the property of the balancing function $h(u) = u h(1/u)$ and the last inequality follows from a similar argument as before using Conditions (i), (ii) and
\begin{align*}
    \frac{\pi(x^\star_j)}{\pi(g(x))} \cdot \frac{\MH(x^\star_j,g(x))}{\MH(g(x),x^\star_j)} \leq p^{t_1}  p^{t_4-t_3}, 
\end{align*}
by non-decreasing $h$.
Then for $x \neq x^*$, by Conditions (ii), (iii) and  we have
\begin{align}\label{eq:eta}
    \eta(x, g(x))  &\geq   \min \left\{1, \left( p^{-t_2 -t_3 + t_4} + (N-1) \frac{h(p^{t_1 -t_3 +t_4})}{h(p^{t_2 +t_3 -t_4})}  \right)^{-1}\right\} \left( \frac{|\cN(g(x))| - s_0}{|\cN(g(x))|}\right)^{N-1}
    \nonumber\\
    & \geq (1+o(1)) (1 - s_0/p^{t_3})^{N-1}\nonumber\\ 
    & \geq 1 -\frac{N s_0}{p^{t_3} } + o(1)\nonumber\\
    &= 1 + o(1).
\end{align}
Combining the lower bounds~\eqref{eq:K} and~\eqref{eq:eta} leads to $\mathbf{P}(x,g(x))\ge C'\frac{N}{p^{t_4}}$, which concludes the proof of the theorem.

\subsection{An example on a different weight function.}\label{subsec:ex}

The example below shows an undesirable behavior of the MTM algorithm if we use a weight function which is not in the class of~\eqref{eq:weight}. Here an undesirable behavior means that the acceptance probability is close to zero even when $\pi(x_{\mathrm{prop}}) \gg \pi(x_{\mathrm{curr}})$, where we denote $x_{\mathrm{curr}},x_{\mathrm{prop}}$ as the current state and the proposed state from (2) in Algorithm~\ref{alg:MTM}, respectively. 

\begin{example}
We consider the weight function $w(y \given x) = \pi(y)$. For the sake of simplicity, let $\vert \cN(x) \vert = p^{t_3}$ and for $x' \in \cN(x)$,
\begin{align*}
\frac{\pi(x')}{ \pi(x)} = 
    \begin{cases}
p^{t_2}, &\quad \quad \text{for }x' \in \mathcal{S}(x),\\
p^{t_1/2}, &\quad \quad \text{for }x' \not\in \mathcal{S}(x),
\end{cases}
\end{align*}
for all $x \neq x^*$. Let $x_j, x_j^{\star}$ be uniform samples from $\cN(x_{\mathrm{curr}})$ and $\cN(x_{\mathrm{prop}})$, respectively for $j \in [N-1]$. Assume $x_{\mathrm{prop}}\in\mathcal{S}(x_{\mathrm{curr}})$ to reflect $\pi(x_{\mathrm{prop}}) \gg \pi(x_{\mathrm{curr}})$. Although the conditions on Theorem~\ref{thm:mixing} are met (except for those related to a balancing function $h$), the acceptance probability from $x_{\mathrm{curr}}$ to $x_{\mathrm{prop}}$ is upper bounded as
\begin{align*}
    \alpha(x_{\mathrm{curr}}, x_{\mathrm{prop}}) &  = \min \left \{1,  \frac{ \sum_{i = 1}^{N-1} \pi(x_i) + \pi(x_{\mathrm{prop}})}{ \sum_{i = 1}^{N-1} \pi(x^{\star}_i) + \pi(x_{\mathrm{curr}})}\right\} \\ 
    & \leq \frac{N p^{t_2} \pi (x_{\mathrm{curr}})    }{ (1+ (N-1) p^{t_2 + t_1/2} )\pi (x_{\mathrm{curr}})}\\
    & \leq \frac{2}{p^{t_1/2}} = o(1).
\end{align*}
Notice that the acceptance probability $\alpha(x_{\mathrm{curr}}, x_{\mathrm{prop}}) = 1$ in the MH algorithm.  
\end{example}

\newpage 

\subsection{A toy example on path construction.}\label{subsec:ex_construction_path}

We provide a simple example on how to construct the canonical path ensemble $\Delta^*$ described in~\ref{subsec:pf_thm1}. Define $\cX = \{0,1\}^3$ as our state space and let $x^* = (1, 1, 0)$ be the mode of the distribution on $\cX$. 
Let the target distribution $\pi(x) \propto \exp (-d_{\mathrm{H}}(x, x^*))$, where $d_{\mathrm{H}}$ is a Hamming distance.
We specify a neighborhood of any $x \in \cX$ as $\cN(x) = \{y \in \cX: d_{\mathrm{H}}(y, x) = 1\}$, where $d_{\mathrm{H}}$ is a Hamming distance. In the left panel of Figure~\ref{fig:path}, neighboring states are linked to black undirected edges. We follow the rule described in~\eqref{eq:g} to define a transition function $g: \cX \rightarrow \cX$ and black directed edges indicate the defined moves by a function $g$. The height of each bar indicates $\pi(x)$ associated with the corresponding state $x$. In the right panel of Figure~\ref{fig:path}, we provide three examples to illustrate the three possible cases to construct a path from the defined transition $g$, described in~\ref{subsec:pf_thm1}. 
\begin{itemize}
    \item The red directed edges indicate a path from $(0, 0, 1) $ to $(1,0,0)$, which corresponds to the case (i), since $(1,0,0) = g^2((0,0,1))$,
    \item The green directed edge indicates a path from $(1,1 , 0) $ to $(0,1,0)$, which corresponds to the case (ii), since $(1,1,0) = g((0,1,0))$,
    \item The blue directed edges indicate a path from $(0,0,1)$ to $(1,1,1)$, which corresponds to the case (iii).
\end{itemize}

\begin{figure}[h]
    \centering
    \begin{subfigure}{0.49\textwidth}
    \includegraphics[width=\textwidth]{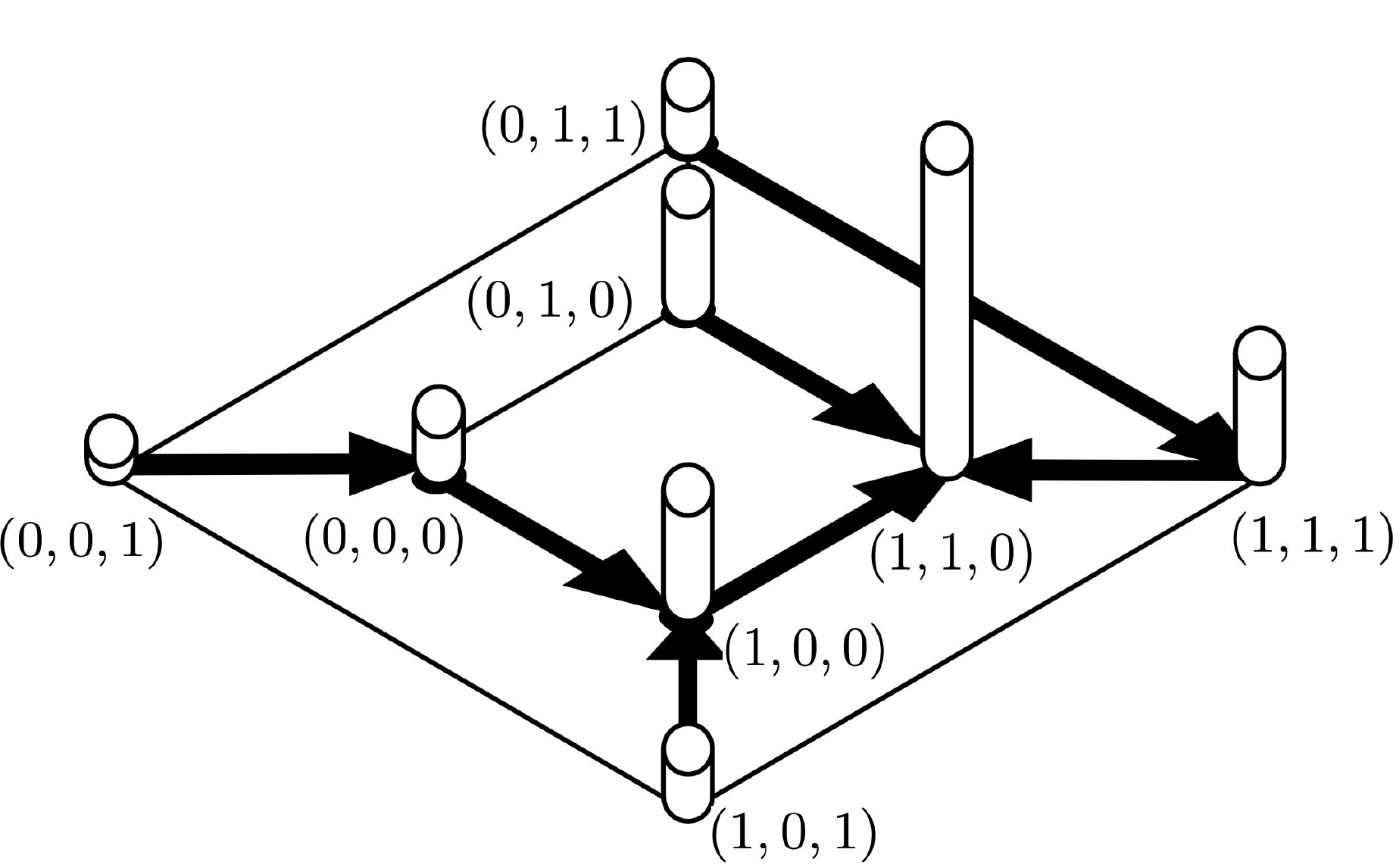}
    \end{subfigure}
    \hfill
    \begin{subfigure}{0.49\textwidth}
    \includegraphics[width=\textwidth]{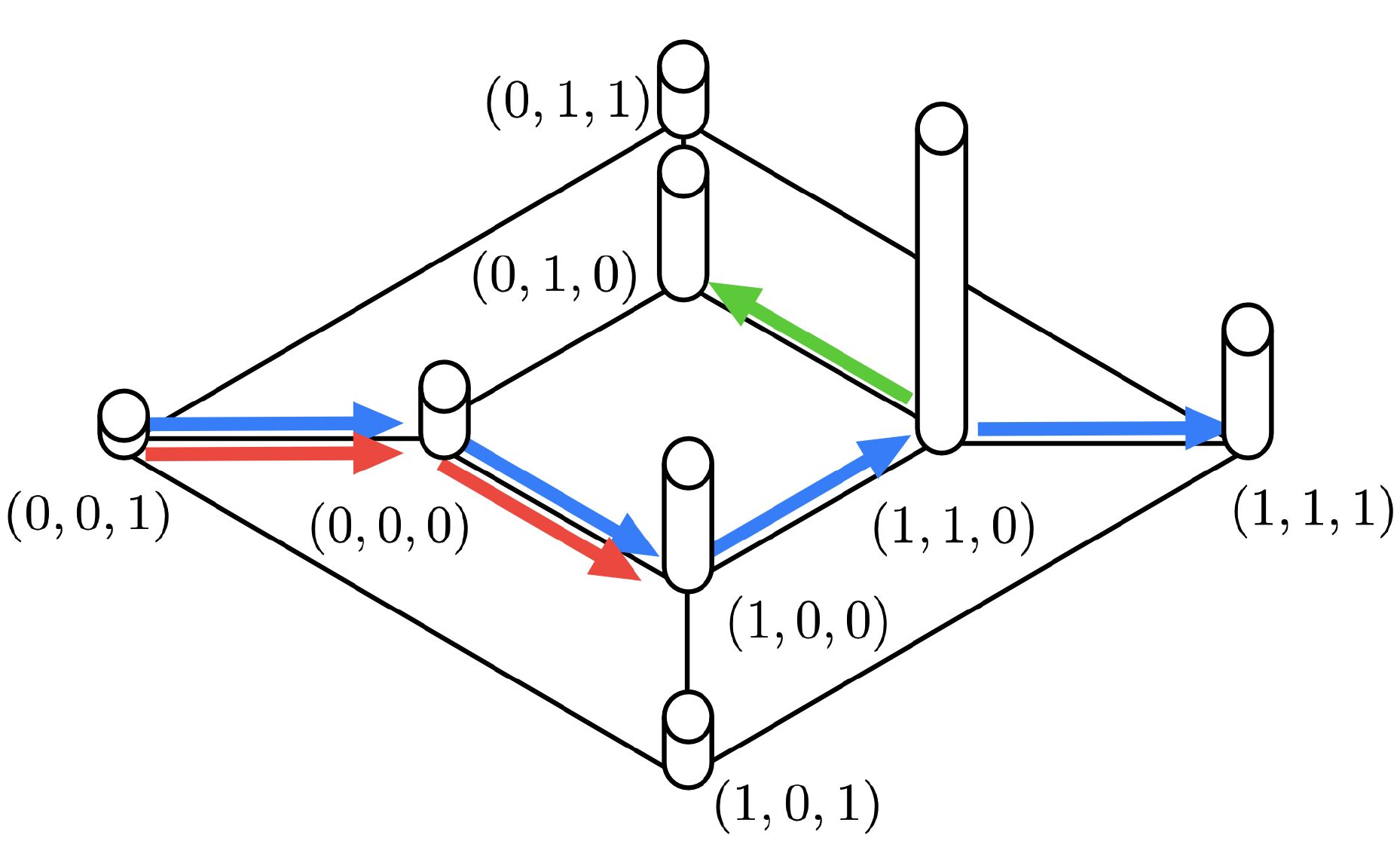}
    \end{subfigure}
    \caption{A toy example on path construction. Black undirected edges connect neighboring states and the target distribution $\pi$ is represented as the heights of the cylinders. (Left) Black directed edges indicate the defined moves by a function $g$. (Right) The colored paths exemplify the three possible cases of path construction.}
    \label{fig:path}
\end{figure}
\newpage

\section{Details of simulation studies}\label{sec:append.simulation}

The scope of this paper is to theoretically study the mixing time for the family of MTM algorithms, and hence we mainly focus on experiments to empirically verify our theoretical insights, that the MTM mixing time is smaller by a factor of the number of trials $N$ and that locally balanced weight functions tend to perform better under suitable assumptions. Nevertheless, in some experiments, we compare the MTM algorithm with the locally balanced MH algorithm (denoted as LBMH) \cite{zanella2020informed}, as it has been reported to outperform the other state-of-the-art methods. Before describing the details of simulation studies, here we briefly describe the locally balanced MH algorithm. 

Specification of LBMH requires balancing function $h$ and uninformed symmetric distribution $\mathbf{K}_{\mathrm{sym}}(x,\cdot)$ supported on $\mathcal{N}(x)$. LBMH chooses a proposal state $y$ from a pointwise informed proposal distribution
\begin{equation}
    Q_h(x,y) = (Z_h(x))^{-1}h\left(\pi(y)/\pi(x)\right) \mathbf{K}_{\mathrm{sym}}(x,y),
\end{equation}
where $Z_h(x) = \sum_{z\in \mathcal{N}(x)}h\left(\pi(z)/\pi(x)\right) \mathbf{K}_{\mathrm{sym}}(x,z)$ is a normalizing constant. Then, $y$ is accepted with probability $\alpha = \min\{1, \frac{\pi(y)Q_h(y,x)}{\pi(x)Q_h(x,y)}\} = \min\{1, \frac{Z_h(x)}{Z_h(y)}\}$, by defintion of balancing function $h(u) = uh(1/u)$ and symmetry of $\mathbf{K}_{\mathrm{sym}}$. Unlike MTM where a subset of $\mathcal{N}(x)$ is selected (with replacement) as a trial and choose a proposal among them, LBMH needs to evaluate $h(\pi(y)/\pi(x))$ for \textit{all} $y\in\mathcal{N}(x)$ to get a proposal state $y$ which can be viewed as an exhaustive search of $\mathcal{N}(x)$. In terms of computation, MTM requires calculating $2N-1$ weight functions at each iteration where $N$ can be chosen at one's disposal, LBMH requires calculating $|\mathcal{N}(y)|$ number of ratios to calculate $Z_h(y)$ at each iteration, where proposal probabilities $\{Q_h(x,y):y\in\mathcal{N}(x)\}$ and normalizing constant $Z_h(x)$ can be saved and reused from the previous iteration. Since random walk proposals in BVS and SBM examples are both symmetric, we compare MTM with LBMH by letting $\mathbf{K}_{\mathrm{sym}} = \MH$ with three different balancing functions: $h(u) = \sqrt{u}$, $h(u) = \min\{1,u\}$ and $h(u) = \max\{1,u\}$ (corresponding to $w_{\mathrm{sqrt}}, w_{\mathrm{min}}, w_{\mathrm{max}}$ respectively). 

\subsection{Details of Bayesian variable selection (BVS)}\label{sec:append.BVS}

After marginalizing out $\bm\beta$ and $\phi$, the posterior distribution $\pi(\gamma\given \bm{y})$ is written as \cite[][\S A.1]{yang2016computational}
\begin{equation}
\label{eq:bvsposterior}
    \pi(\gamma\given \bm{y}) = C\cdot \frac{1}{p^{\kappa|\gamma|}(1+\mathscr{G})^{|\gamma|/2}}\mathrm{SSR}(\gamma)^{-n/2}\ind(|\gamma|\le s_{\mathrm{max}}),
\end{equation}
where $\mathrm{SSR}(\gamma)=\bm{y}^{\top}\left(\mathbf{I}_{n}- \frac{\mathscr{G}}{\mathscr{G}+1}\bm{X}_{\gamma}\left(\bm{X}_{\gamma}^{\top} \bm{X}_{\gamma}\right)^{-1} \mathbf{X}_{\gamma}^{\top}\right) \bm{y}$ is a term having a similar role as a sum of squared residuals and $C$ is a normalizing constant.

\textbf{MCMC setup.} Hyperparameters are specified as $\mathscr{G} = p^3 = 5000^3$, $\kappa = 2$, and $s_{\mathrm{max}} = 100$. For each dataset, we run a chain of $10^5$ iteration for single-try MH, $2\times 10^4$ iteration for MTM with $N=5$, and $10^4$ iteration for MTM with $N=5,10,50,100,500,1000,2000,5000$ using four different weight functions. Algorithms are randomly initialized with state $\gamma_0$ such that $\gamma_0 \cap \gamma^* =\emptyset$ and $d_{\mathrm{H}}(\gamma_0, \gamma^*) = 20$ which implies $H = 20$ is the minimum required hitting iteration. 
For each simulated dataset, the true data generated model achieves the highest posterior probability ($\gamma^*=x^*$). All simulation studies are performed on a Linux cluster with Intel(R) Xeon(R) Gold 6132 CPU @ 2.60GHz and 96GB memory.

\begin{table}
    \small
  \caption{(BVS) Median of $H$, the number of iterations until the chain hit $\gamma^*$ over 50 replicates.  Entry with ``Fail'' indicates that chains never hit $\gamma^*$ in more than half of the replicated datasets.}
  \label{table:bvsH}
  \centering
  \begin{tabular}{ccc c c c c c c c c c c}
    \toprule
    & SNR &$N$ & 1 & 5 & 10 & 50 & 100 & 500 & 1000 & 2000 & 5000 & LBMH \\
    \midrule
    \multirow{8}{*}{ind.} & \multirow{4}{*}{$4$} & $w_{\mathrm{ord}}$ & \multirow{4}{*}{19414} & 3283 & 1742 & 392 & 203 & 100 & 211 & 3168 & Fail & N/A\\
     & & $w_{\mathrm{sqrt}}$ & & 3340 &1787 & 360 & 177 &  55 &  42 &  38 &  54 & Fail\\
     & & $w_{\mathrm{min}}$  & & 3365 &1948 & 354 & 180 &  50 &  33 &  28 &  24 & 20\\
     & & $w_{\mathrm{max}}$  & & 3246 & 1876 & 372 & 182 & 54 & 42 & 40 & 60 & Fail\\
     \cmidrule(lr){2-13}
     & \multirow{4}{*}{$2$} & $w_{\mathrm{ord}}$ & \multirow{4}{*}{20088} & 3684 & 1865 & 392 & 213 &  89 & 137 & 1020 & Fail & N/A\\
     & & $w_{\mathrm{sqrt}}$ & & 3666 & 1955 & 398 & 200 &  58 &  40 &  33 &  32 & 137 \\
     & & $w_{\mathrm{min}}$  & & 3928 & 2034 & 366 & 202 & 55 & 34 & 29 & 24 & 20\\
     & & $w_{\mathrm{max}}$  & & 3696 & 2000 & 418 & 229 & 62 & 44 & 36 & 34 & Fail\\
     \midrule
    \multirow{8}{*}{dep.} & \multirow{4}{*}{$4$} & $w_{\mathrm{ord}}$ & \multirow{4}{*}{21292} & 3898 & 1989 & 422 & 234 & 91 & 117 & 735 & Fail & N/A\\
     & & $w_{\mathrm{sqrt}}$ & & 3977 & 2256 & 394 & 209 & 64 & 44 & 36 & 45 & Fail\\
     & & $w_{\mathrm{min}}$  & & 4196 & 2065 & 412 & 226 & 51 & 35 & 30 & 24 & 20\\
     & & $w_{\mathrm{max}}$  & & 4360 & 2137 & 504 & 240 & 70 & 46 & 42 & 40 & Fail\\
     \cmidrule(lr){2-13}
    & \multirow{4}{*}{$2$} & $w_{\mathrm{ord}}$ & \multirow{4}{*}{66020} & 7724 & 6324 & 1033 & 528 & 150 & 145 & 404 & Fail & N/A\\
     & & $w_{\mathrm{sqrt}}$ & & 9458 & 4226 & 1088 & 660 & 180 & 97 & 59 & 72 & Fail\\
     & & $w_{\mathrm{min}}$  & & 8357 & 4363 & 1212 & 484 & 109 & 68 & 54 & 31 & 37\\
     & & $w_{\mathrm{max}}$  & & 11541 & 7057 & 6794 & 6782 & 3668 & 3124 & 3729 & 7246 & Fail\\
    \bottomrule
  \end{tabular}
\end{table}

\begin{table}
    \small
  \caption{(BVS) Median of $T_H$, wall-clock time (in seconds) until the chain hit $\gamma^*$ over 50 replicates.  Entry with ``Fail'' indicates that chains never hit $\gamma^*$ in more than half of the replicated datasets.}
  \label{table:bvsTH}
  \centering
  \begin{tabular}{ccc c c c c c c c c c c}
    \toprule
   & SNR &$N$ & 1 & 5 & 10 & 50 & 100 & 500 & 1000 & 2000 & 5000 & LBMH\\
    \midrule
    \multirow{8}{*}{ind.} & \multirow{4}{*}{$4$} & $w_{\mathrm{ord}}$ & \multirow{4}{*}{1.30} & 0.80 & 0.42 & 0.12 & 0.07 & 0.07 & 0.27 & 6.95 & Fail & N/A\\
     & & $w_{\mathrm{sqrt}}$ & & 0.81 & 0.46 & 0.11 & 0.07 & 0.04 & 0.05 & 0.09 & 0.30 & Fail\\
     & & $w_{\mathrm{min}}$  & & 0.89 & 0.53 & 0.12 & 0.07 & 0.04 & 0.04 & 0.06 & 0.11 & 0.07\\
     & & $w_{\mathrm{max}}$  & & 0.88 & 0.53 & 0.13 & 0.07 & 0.04 & 0.05 & 0.09 & 0.33 & Fail\\
     \cmidrule(lr){2-13}
    & \multirow{4}{*}{$2$} & $w_{\mathrm{ord}}$ & \multirow{4}{*}{1.25} & 0.82 & 0.43 & 0.11 & 0.07 & 0.06 & 0.15 & 1.93 & Fail & N/A\\
     & & $w_{\mathrm{sqrt}}$ & & 0.81 & 0.44 & 0.12 & 0.07 & 0.04 & 0.04 & 0.06 & 0.13 & 0.39\\
     & & $w_{\mathrm{min}}$  & & 1.01 & 0.55 & 0.11 & 0.08 & 0.04 & 0.04 & 0.06 & 0.11 & 0.07\\
     & & $w_{\mathrm{max}}$  & &0.92 & 0.51 & 0.13 & 0.08 & 0.05 & 0.05 & 0.06 & 0.13& Fail\\
     \midrule
    \multirow{8}{*}{dep.} & \multirow{4}{*}{$4$} & $w_{\mathrm{ord}}$ & \multirow{4}{*}{1.35} & 0.85 & 0.42 & 0.12 & 0.08 & 0.06 & 0.12 & 1.37 & Fail & N/A\\
     & & $w_{\mathrm{sqrt}}$ & & 0.87 & 0.50 & 0.11 & 0.07 & 0.04 & 0.05 & 0.06 & 0.17 & Fail\\
     & & $w_{\mathrm{min}}$  & & 1.04 & 0.55 & 0.13 & 0.08 & 0.04 & 0.04 & 0.06 & 0.10 & 0.06\\
     & & $w_{\mathrm{max}}$  & & 1.07 & 0.53 & 0.15 & 0.08 & 0.05 & 0.05 & 0.07 & 0.14 & Fail\\
     \cmidrule(lr){2-13}
    & \multirow{4}{*}{$2$} & $w_{\mathrm{ord}}$ & \multirow{4}{*}{3.38} & 1.67 & 1.29 & 0.26 & 0.16 & 0.09 & 0.14 & 0.70 & Fail & N/A\\
     & & $w_{\mathrm{sqrt}}$ & & 1.91 & 0.90 & 0.28 & 0.19 & 0.10 & 0.10 & 0.10 & 0.24 & Fail\\
     & & $w_{\mathrm{min}}$  & & 1.89 & 1.01 & 0.34 & 0.16 & 0.07 & 0.07 & 0.10 & 0.13 & 0.09\\
     & & $w_{\mathrm{max}}$  & & 2.58 & 1.66 & 1.86 & 1.97 & 2.10 & 2.78 & 5.74 & 22.36 & Fail\\
    \bottomrule
  \end{tabular}
\end{table}

Results from Table~\ref{table:bvsH} show that $H$ decreases roughly by a factor of $N$ until $N=100$, which confirms our theoretical findings, given that the model setting satisfies that the mixing time is equivalent to the hitting iteration up to constant factors~\cite{peres2015mixing}. When $N$ becomes larger, the performance of unscaled weight function $w_{\mathrm{ord}}$ deteriorates and never converges when $N=5000$. In contrast, locally balanced weight functions generally perform well even when $N$ is large. Table~\ref{table:bvsTH} suggests that choosing moderate $N$ is beneficial in terms of computational savings.
When the design matrix is correlated and SNR$=2$, the result suggests that the chain often stuck when we choose the weight function as $w_{\mathrm{max}}$. 
Since the shape of the posterior distribution becomes irregular when the SNR is intermediate \citep{yang2016computational} and design matrix is correlated, to get a more clear insight we further perform additional simulation study when the posterior distribution exhibits multimodality; see Appendix~\ref{sec:append.multimodal}.
Finally, under different settings of the design and SNR, the median $N$ estimated from Algorithm~\ref{alg:N} using $\psi = 0.9$ over 50 replicate datasets is $\hat{N}=349$ (indep, SNR=4), $ 501$ (indep, SNR=2), $328$ (dep, SNR=4) and $ 158$ (dep, SNR=2). 

In contrast to MTM, LBMH fails to converge to $\gamma^*$ when $h(u) = \sqrt{u}$ or $h(u) = \max\{1,u\}$. It is easier for MTM to escape from such local modes by randomly searching part of its neighborhood to select the proposal. The exhaustive search nature of LBMH makes it difficult to escape from the local mode since some high values of $\pi(y^*)/\pi(y)$, $y^*\in\mathcal{N}(y)$ involved in the denominator makes the acceptance ratio small. This phenomenon disappears when $h(u) = \min\{1,u\}$ is used. In terms of wall-clock hitting time $T_H$, LBMH is not as efficient as MTM with a smaller choice of $N$. 

In addition, we also consider the case when SNR $=0.5$ (very weak SNR) so that the null model $\gamma^* = \bm{0}$ receives the highest probability across all simulated datasets. For each replicated dataset, algorithms are randomly initialized with state $\gamma_0$ such that $d_\mathrm{H}(\gamma_0,\gamma^*) =10$ which implies $H=10$ is the minimum required hitting iteration. Table~\ref{table:bvsnull} provides the result similar to Tables \ref{table:bvsH} and \ref{table:bvsTH}, since the posterior distribution is unimodal with the peak at the null model $\gamma^* = \bm{0}$ due to the sparsity prior.
The median $N$ estimated from Algorithm~\ref{alg:N} using $\psi = 0.9$ over 50 replicate datasets is $\hat{N}=171$for independent design and is $\hat{N} = 212$ for dependent design.

\begin{table}[h]
    \small
  \caption{(BVS, very weak SNR $=0.5$) Median of $H$ and $T_H$ over 50 replicates.  Entry with ``Fail'' indicates that chains never hit $\gamma^*$ in more than half of the replicated datasets.}
  \label{table:bvsnull}
  \centering
  \begin{tabular}{ccc c c c c c c c c c}
    \toprule
    & SNR $=0.5$ &$N$ & 1 & 5 & 10 & 50 & 100 & 500 & 1000 & 2000 & 5000 \\
    \midrule
    \multirow{8}{*}{$H$} & \multirow{4}{*}{\shortstack{indep.}} & $w_{\mathrm{ord}}$ & \multirow{4}{*}{14358} & 2708 & 1309 & 281 & 142 & 50 & 46 & 106 & 8968\\
     & & $w_{\mathrm{sqrt}}$ & & 2788 & 1402 & 285 & 148 & 37 & 20 & 16 & 12 \\
     & & $w_{\mathrm{min}}$  & & 2680 & 1481 & 282 & 149 & 34 & 19 & 16 & 12 \\
     & & $w_{\mathrm{max}}$  & & 2476 & 1276 & 266 & 131 & 38 & 23 & 17 & 12\\
     \cmidrule(lr){2-12}
    & \multirow{4}{*}{dep.} & $w_{\mathrm{ord}}$ & \multirow{4}{*}{12596} & 2924 & 1432 & 304 & 142 & 43 & 42 & 104 & Fail\\
     & & $w_{\mathrm{sqrt}}$ & &2776 & 1532 & 278 & 146 & 35 & 21 & 15 & 12\\
     & & $w_{\mathrm{min}}$  & & 2836 & 1354 & 260 & 154 & 34 & 22 & 14 & 11\\
     & & $w_{\mathrm{max}}$  & & 2562 & 1289 & 270 & 140 & 32 & 22 & 15 & 12\\
     \midrule
    \multirow{8}{*}{$T_H$} & \multirow{4}{*}{indep.} & $w_{\mathrm{ord}}$ & \multirow{4}{*}{0.89} & 0.56 & 0.3 & 0.07 & 0.04 & 0.03 & 0.05 & 0.19 & 39.38\\
     & & $w_{\mathrm{sqrt}}$ & & 0.6 & 0.32 & 0.07 & 0.05 & 0.02 & 0.02 & 0.03 & 0.04\\
     & & $w_{\mathrm{min}}$  & & 0.68 & 0.36 & 0.09 & 0.05 & 0.02 & 0.02 & 0.03 & 0.04\\
     & & $w_{\mathrm{max}}$  & & 0.62 & 0.35 & 0.08 & 0.05 & 0.02 & 0.02 & 0.03 & 0.04 \\
     \cmidrule(lr){2-12}
    & \multirow{4}{*}{dep.} & $w_{\mathrm{ord}}$ & \multirow{4}{*}{0.67} & 0.57 & 0.29 & 0.07 & 0.04 & 0.03 & 0.04 & 0.17 & Fail\\
     & & $w_{\mathrm{sqrt}}$ & & 0.56 & 0.31 & 0.07 & 0.04 & 0.02 & 0.02 & 0.02 & 0.04\\
     & & $w_{\mathrm{min}}$  & & 0.66 & 0.33 & 0.07 & 0.05 & 0.02 & 0.02 & 0.02 & 0.04\\
     & & $w_{\mathrm{max}}$  & & 0.58 & 0.31 & 0.08 & 0.04 & 0.02 & 0.02 & 0.02 & 0.04\\
    \bottomrule
  \end{tabular}
\end{table}

\clearpage 

\subsection{Details of stochastic block model (SBM)}\label{sec:append.SBM}

After marginalizing out $\{Q_{uv}\}_{1\le u\le v\le K}$, the posterior distribution $\pi(\mathbf{z}\given \bm{A})$ is written as (see \cite[][\S 2.1]{legramanti2022extended} and \cite[][\S 2.2]{zhuo2021mixing})

\begin{equation}
\label{eq:sbmposterior}
    \pi(\mathbf{z}\given \bm{A}) = C\cdot \prod_{1\le u \le v \le K} B(\kappa_1 + m_{uv}, \kappa_2 + \overline{m}_{uv})\cdot \ind(\mathbf{z}\in S_\alpha),
\end{equation}
where $B(\kappa_1,\kappa_2) = \Gamma(\kappa_1)\Gamma(\kappa_2)/\Gamma(\kappa_1+\kappa_2)$ is a beta function, $C$ is a normalizing constant,
\[
m_{uv} = 
\begin{cases} \sum_{i,j} A_{ij}\ind(z_i=u, z_j=v) & \text{ if }u < v,\\
\sum_{i<j} A_{ij}\ind(z_i=u, z_j=u) & \text{ if } u=v,
\end{cases}
\] is the number of edges between blocks $u$ and $v$, and using the notation $n_u(\bfz) = \sum_i\ind(z_i=u)$, 
\[
\overline{m}_{uv} = 
\begin{cases}
n_u(\bfz)n_v(\bfz) - m_{uv}  & \text{ if }u < v,\\
n_u(\bfz)(n_u(\bfz)-1)/2 -m_{uu}  & \text{ if }u = v,
\end{cases}
\]
is the number of non-edges between blocks $u$ and $v$. We note that $\pi(\mathbf{z} \given \bm{A})$ is invariant of a label permutation.

\textbf{Data generation.} When $K=2$, there are two true clusters (blocks) of nodes each with 500 nodes. When $K=5$, there are five true clusters of nodes each with 200 nodes. We generated a graph from the homogeneous SBM and where within- and cross-community edge connection probabilities are $a$ and $b$ respectively. Specifically, for $K=2$ we set $(a,b) = (0.222,0.01)$ and $(a,b) = (0.07,0.01)$ so that $\mathrm{CH}\approx 10$ and $\mathrm{CH}\approx 2$, and for $K=5$ we set $(a,b) = (0.473,0.01)$ and $(a,b) = (0.13,0.01)$ so that $\mathrm{CH}\approx 10$ and $\mathrm{CH}\approx 2$. For each setting, we simulate 50 datasets.

\textbf{MCMC setup.} Hyperparameters are specified as $\kappa_1 = \kappa_2 = 1$, and $\alpha = 1000$ so that the size of the feasible set $S_\alpha$ is maximized. For each dataset, we run a chain of $10^5$ iteration for single-try MH, $5\times 10^4$ iteration for MTM with $N=5$, and $2\times 10^4$ iteration for MTM with $N=5,10,50,100,500,1000,2000,5000$ using four different weight functions. Algorithms are randomly initialized with state $\mathbf{z}_0$ such that $\tilde{d}_\mathrm{H}(\mathbf{z}_0,\mathbf{z}^*)=400$ which implies $H=400$ is the minimum required hitting iteration.  For each simulated dataset, the true data generated model achieves the highest posterior probability ($\bfz^*=x^*$).  All simulations are performed on a Linux cluster with Intel(R) Xeon(R) Gold 6132 CPU @ 2.60GHz and 96GB memory.

Results from Table \ref{table:sbmH} show that $H$ decreases roughly by a factor of $N$ until $N=10$ for locally balanced weight functions, but not for unscaled weight function $w_{\mathrm{ord}}$. Even when $N\ge 50$, MTM with $w_{\mathrm{ord}}$ never converges to the highest probability model, highlighting the necessity of the use of locally balanced weight function for the general model selection problems. When $N$ is very large, the performance of locally balanced weight functions generally deteriorates, which matches with our theoretical findings regarding the rate condition on $N$. Table~\ref{table:sbmTH} also suggests that a moderate choice of $N$ (in SBM case, around 10) is beneficial in terms of computation savings.
Finally, under different settings of ($K$, CH), the median $N$ estimated from Algorithm~\ref{alg:N} using $\psi = 0.9$ over 50 replicate datasets is $\hat{N}=15$ for ($K$, CH) = (2, 2), $\hat{N} = 8$ for (2, 10), $\hat{N}=5$ for (5, 2) and $\hat{N} = 4$ for (5, 10).

\begin{table}
    \small
  \caption{(SBM) Median of $H$, the number of iterations until the chain hit $\bfz^*$ over 50 replicates.  Entry with ``Fail'' indicates that chains never hit $\bfz^*$ in more than half of the replicated datasets.}
  \label{table:sbmH}
  \centering
  \begin{tabular}{ccc c c c c c c c c c c}
    \toprule
    & CH & $N$ & 1 & 5 & 10 & 50 & 100 & 500 & 1000 & 2000 & 5000 & LBMH\\
    \midrule
    \multirow{8}{*}{$K=2$} & \multirow{4}{*}{$\approx 10$} & $w_{\mathrm{ord}}$ & \multirow{4}{*}{11572} & 2495 & 5542 & Fail & Fail & Fail & Fail & Fail & Fail & N/A\\
     & & $w_{\mathrm{sqrt}}$ & & 1603 & 1136 & 692 & 644 & 602 & 610 & 637 & 684 & 740 \\
     & & $w_{\mathrm{min}}$  & & 1558 & 974 & 544 & 493 & 444 & 434 & 431 & 424 & 418\\
     & & $w_{\mathrm{max}}$  & & 1657 & 1142 & 762 & 708 & 726 & 830 & 1224 & 2842& 6364\\
     \cmidrule(lr){2-13}
    & \multirow{4}{*}{$\approx 2$} & $w_{\mathrm{ord}}$ & \multirow{4}{*}{13343} & 2722 & 3484 & Fail & Fail & Fail & Fail & Fail & Fail & N/A\\
     & & $w_{\mathrm{sqrt}}$ & & 1948 & 1432 & 874 & 818 & 808 & 703 & 700 & 728 & 682\\
     & & $w_{\mathrm{min}}$  & & 2244 & 1400 & 944 & 911 & 820 & 882 & 812 & 890 & 866\\
     & & $w_{\mathrm{max}}$  & & 1916 & 1354 & 851 & 774 & 701 & 696 & 681 & 680 & 709\\
     \midrule
    \multirow{8}{*}{$K=5$} & \multirow{4}{*}{$\approx 10$} & $w_{\mathrm{ord}}$ & \multirow{4}{*}{25328} & 5852 & 4614 & Fail & Fail & Fail & Fail & Fail & Fail & N/A\\
     & & $w_{\mathrm{sqrt}}$ & & 5400 & 3008 & 1210 & 987 & 858 & 874 & 992 & 1564 & Fail\\
     & & $w_{\mathrm{min}}$  & & 5376 & 2695 & 1127 & 907 & 719 & 690 & 675 & 660 & Fail\\
     & & $w_{\mathrm{max}}$  & & 5230 & 2977 & 1184 & 1022 & 905 & 960 & 1286 & 3895 & Fail \\
     \cmidrule(lr){2-13}
    & \multirow{4}{*}{$\approx 2$} & $w_{\mathrm{ord}}$ & \multirow{4}{*}{25883} & 6388 & 4422 & Fail & Fail & Fail & Fail & Fail & Fail& N/A\\
     & & $w_{\mathrm{sqrt}}$ & &5552 & 2885 & 1067 & 805 & 628 & 574 & 542 & 506 & Fail\\
     & & $w_{\mathrm{min}}$  & & 5426 & 3056 & 1168 & 966 & 802 & 775 & 752 & 740 & 1517\\
     & & $w_{\mathrm{max}}$  & & 5245 & 2904 & 1100 & 882 & 703 & 654 & 630 & 614 & Fail\\
    \bottomrule
  \end{tabular}
\end{table}

\begin{table}
    \small
  \caption{(SBM) Median of $T_H$, wall-clock time (in seconds) until the chain hit $\bfz^*$ over 50 replicates.  Entry with ``Fail'' indicates that chains never hit $\bfz^*$ in more than half of the replicated datasets.}
  \label{table:sbmTH}
  \centering
  \begin{tabular}{ccc c c c c c c c c c c}
    \toprule
    & CH &$N$ & 1 & 5 & 10 & 50 & 100 & 500 & 1000 & 2000 & 5000 & LBMH \\
    \midrule
    \multirow{8}{*}{$K=2$} & \multirow{4}{*}{$\approx 10$} & $w_{\mathrm{ord}}$ & \multirow{4}{*}{0.73} & 0.50 & 1.32 & Fail & Fail & Fail & Fail & Fail & Fail & N/A\\
     & & $w_{\mathrm{sqrt}}$ & & 0.33 & 0.27 & 0.39 & 0.62 & 1.64 & 2.75 & 5.04 & 13.49 & 1.95 \\
     & & $w_{\mathrm{min}}$  & & 0.41 & 0.29 & 0.36 & 0.53 & 1.29 & 2.03 & 3.52 & 8.27 & 1.09 \\
     & & $w_{\mathrm{max}}$  & &0.42 & 0.35 & 0.49 & 0.75 & 2.05 & 3.75 & 9.90 & 54.05 & 16.13\\
     \cmidrule(lr){2-13}
    & \multirow{4}{*}{$\approx 2$} & $w_{\mathrm{ord}}$ & \multirow{4}{*}{0.82} & 0.54 & 0.83 & Fail & Fail & Fail & Fail & Fail & Fail & N/A\\
     & & $w_{\mathrm{sqrt}}$ & & 0.4 & 0.35 & 0.48 & 0.75 & 1.49 & 2.15 & 3.78 & 9.30 & 1.28\\
     & & $w_{\mathrm{min}}$  & & 0.56 & 0.41 & 0.57 & 0.89 & 1.55 & 2.62 & 4.31 & 11.26 & 1.60\\
     & & $w_{\mathrm{max}}$  & & 0.47 & 0.39 & 0.51 & 0.75 & 1.29 & 2.07 & 3.57 & 8.35 & 1.35\\
     \midrule
    \multirow{8}{*}{$K=5$} & \multirow{4}{*}{$\approx 10$} & $w_{\mathrm{ord}}$ & \multirow{4}{*}{1.97} &1.35 & 1.39 & Fail & Fail & Fail & Fail & Fail & Fail & N/A\\
     & & $w_{\mathrm{sqrt}}$ & & 1.25 & 0.92 & 0.97 & 1.38 & 4.01 & 7.36 & 15.71 & 60.95 & Fail\\
     & & $w_{\mathrm{min}}$  & & 2.27 & 1.50 & 1.31 & 1.78 & 3.97 & 6.69 & 12.30 & 29.76 & Fail\\
     & & $w_{\mathrm{max}}$  & & 2.21 & 1.52 & 1.39 & 2.00 & 4.94 & 9.31 & 23.38 & 174.89 & Fail \\
     \cmidrule(lr){2-13}
    & \multirow{4}{*}{$\approx2$} & $w_{\mathrm{ord}}$ & \multirow{4}{*}{1.88} & 1.41 & 1.24 & Fail & Fail & Fail & Fail & Fail & Fail & N/A\\
     & & $w_{\mathrm{sqrt}}$ & & 1.22 & 0.84 & 0.81 & 1.05 & 2.31 & 3.84 & 6.89 & 15.90 & Fail\\
     & & $w_{\mathrm{min}}$  & & 1.35 & 0.97 & 0.91 & 1.31 & 2.99 & 5.17 & 9.61 & 23.64 & 6.52 \\
     & & $w_{\mathrm{max}}$  & & 1.33 & 0.89 & 0.83 & 1.16 & 2.61 & 4.36 & 7.93 & 19.18 & Fail\\
    \bottomrule
  \end{tabular}
\end{table}

Comparison of hitting iteration $H$ with LBMH gives an insight similar to the BVS example. When $K=5$, LBMH often gets stuck at a local mode and never converges to $\mathbf{z}^*$.  However when $K=2$, LBMH performs similarly to MTM with larger choices of $N$. We note that when $K=2$, the shape of posterior distribution can be significantly different from that of $K=5$, as the minimax rate and posterior contraction rate analysis are often treated separately when $K=2$ and $K\ge 3$  \citep{zhang2016minimax,zhuo2021mixing}. 
The comparison of wall-clock hitting time $T_H$ also suggests MTM with moderate choice of $N$ is much more efficient.

\clearpage

\subsection{Spatial clustering model (SCM)}\label{sec:append.SCM}
We consider a spatial clustering problem for a given set of spatial locations $\mathcal{S} = \{\mathbf{s}_1, \ldots, \mathbf{s}_p\} \subset \bbR^2$ where the responses $\zeta(\mathbf{s}_i)$'s are observed. The goal of an SCM is to identify a spatially contiguous partition on $\mathcal{S}$, denoted by $\mathcal{P} = \{\mathcal{S}_1, \ldots, \mathcal{S}_K\}$, where $\mathcal{S}_j$'s are disjoint subsets of $\mathcal{S}$ whose union is $\mathcal{S}$, such that the responses within a cluster $\{\zeta(\mathbf{s}): \mathbf{s} \in \mathcal{S}_j\}$  are identically distributed and have different means across clusters.

We follow \cite{luo2021bayesian} to adopt a probabilistic model for  $\mathcal{P}$ that utilizes a spanning tree graph $\mathcal{T}$ on $\mathcal{S}$ (with $p$ vertices and $p-1$ edges) as a ``spatial order'' of $\mathcal{S}$. The spanning tree $\mathcal{T}$ is chosen in a way that two locations connected by an edge are spatially proximate to each other. A partition $\mathcal{P}$ with $K$ clusters can be defined by removing $K\!-\!1$ edges from $\mathcal{T}$. Specifically, the SCM we consider can be written as
\begin{align}
    \zeta(\mathbf{s}_i) | \{\mu(\mathbf{s}_i)\}, \mathcal{P}, K, \sigma^2 &\stackrel{\text{ind}}{\sim} \mathsf{N}(\mu_j, \sigma^2), \quad \text{ with $\mu(\mathbf{s}_i) = \mu_j$ if $\mathbf{s}_i \in \mathcal{S}_j$, for $i \in [p]$}, \label{eq:bscc} \\
    \mu_j | \mathcal{P}, K, \sigma^2 &\stackrel{\text{iid}}{\sim} \mathsf{N}(0, \lambda^{-1}\sigma^2), \quad \text{ for $j \in [K]$}, \nonumber\\
    \sigma^2 &\sim \mathsf{InvGamma}(a_0/2, b_0/2), \nonumber\\
    \pi(\mathcal{P} | K) &\propto \ind\{\text{$\mathcal{P}$ can be obtained by removing $K-1$ edges from $\mathcal{T}$}\}, \nonumber\\
    \pi(K) &\propto (1 - c_0)^K,\quad K=1,\dots,p, \nonumber
\end{align}
where $a_0 > 0$, $b_0 > 0$, $0 \leq c_0 < 1$, and $\lambda > 0$ are hyperparameters. See Figure~\ref{fig:bscc_true} for an example of partition $\mathcal{P}$ obtained by cutting edges from the spanning tree $\mathcal{T}$. 

Thanks to the conjugate priors, $\mu_j$'s and $\sigma^2$ can be analytically marginalized out, and hence the inference problem boils down to drawing samples from the (discrete) posterior distribution $\pi(\mathcal{P} \given \mathrm{data})$.
Although \cite{luo2021bayesian} considered random spanning trees by assigning a prior distribution on $\mathcal{T}$, we stress that the main focus of this paper is the mixing time analysis on the posterior distribution of $\mathcal{P}$. Thus, following \cite{li2019spatial}, we fix $\mathcal{T}$ as the Euclidean minimum spanning tree in a Delaunay triangulation graph on $\mathcal{S}$, otherwise we can only sample from the conditional distribution $\pi(\mathcal{P} \given \mathcal{T},\mathrm{data})$ which complicates the mixing time analysis of our target distribution $\pi(\mathcal{P} \given \mathrm{data})$.

We consider the following proposal:
\begin{equation*}
    \mathbf{K}_{\mathrm{RW}}(\calP, \calP^\prime) = 1/(p-1) \ind_{\mathcal{N}_b(\calP) \cup \mathcal{N}_d(\calP)}(\calP^\prime),
\end{equation*}
where $\mathcal{N}_b(\calP)$ is the set of all possible partitions obtained by splitting a cluster in $\mathcal{P}$ into two clusters by selecting a cut-edge of $\mathcal{T}$ and $\mathcal{N}_d(\calP)$ is the set of all possible partitions obtained by merging two neighboring (with respect to $\mathcal{T}$) clusters in $\mathcal{P}$.
See \cite{luo2021bayesian,lee2021t} for detailed discussion on how to perform an appropriate split or merge on $\mathcal{P}$ given a spanning tree $\mathcal{T}$.

\textbf{Data generation}. We generate $p = 1000$ uniform locations $\bfs_i\iidsim \mathsf{Unif}([0,1]^2), i \in [p]$ and specify the true means $\{\mu(\mathbf{s}_i)\}$ as in Figure~\ref{fig:bscc_true}. Responses are generated according to \eqref{eq:bscc} with $\sigma = \sqrt{\text{Var}(\mu(\mathbf{s}))} / \text{SNR}$ and we simulate $50$ replicate datasets under $\text{SNR} \in \{3,10\}$ respectively.

\textbf{MCMC setup}. Following \cite{luo2021bayesian}, we initialize the chain using the estimates from the spatially clustered coefficient model of \citep{li2019spatial}. However, this initialization does not guarantee the same minimum number of iterations required to hit the true partition $\mathcal{P}^{\mathrm{true}}$ for different replicate datasets. For a fair comparison, throughout this subsection, we redefine $H$ as the number of \emph{extra} iterations until hit, which is the iterations until hit minus the minimum number of iterations required to reach $\mathcal{P}^{\mathrm{true}}$.  Hyperparameters are specified as $a_0 = b_0 = 1$, $c_0 = 0.5$, and $\lambda = 0.01$. We consider the number of trials $N \in \{5, 10, 100, 500, 1000\}$. For each replicate dataset, we run a chain of $10,000$ iterations for each MTM specification and a chain of $30,000$ for standard single-try MH. All simulation studies are performed on a Linux cluster with Intel(R) Xeon(R) Gold 6132 CPU @ 2.60GHz and 96GB memory.

Table~\ref{table:bscc_hit_iter} summarizes $H$ and $T_H$ of various weight functions and numbers of trials $N$. The distributions of $H$ for the setting of SNR = 10 are provided in Figure~\ref{fig:bscc_boxplot}. The results from single-try MH are also included as a baseline. When SNR = 10, the proposed locally balanced weight functions, especially $w_{\mathrm{sqrt}}$ and $w_{\mathrm{min}}$, considerably outperform the ordinary weight function $w_{\mathrm{ord}}$ and the single-try MH, in the sense that the proposed ones can reach $\mathcal{P}^{\mathrm{true}}$ by much fewer iterations when $N \in \{100, 500, 1000\}$. In contrast, the performance of $w_{\mathrm{ord}}$ deteriorates when $N \geq 100$ and it fails to reach $\mathcal{P}^{\mathrm{true}}$ when $N = 500$ or $1000$. 
For the proposed weight functions, the wall-clock time until hit $T_H$ is minimized when $N = 100$, since the benefit of having fewer iterations until hit is offset by the computational cost of extra trials when $N$ is large. 

When SNR = 3, the chains never visit the true partition, possibly because $\mathcal{P}^{\mathrm{true}}$ does not lead to the highest posterior probability. In this case, we redefine $H$ and $T_H$ to be the number of extra iterations and the wall-clock time, respectively, to reach the 0.99 Rand index neighborhood of $\mathcal{P}^{\mathrm{true}}$, defined as
\begin{equation*}
    \mathcal{N}_{\mathrm{Rand}}(\mathcal{P}^{\mathrm{true}}) \coloneqq \{\mathcal{P} : \mathrm{Rand}(\mathcal{P}, \mathcal{P}^{\mathrm{true}}) \geq 0.99\},
\end{equation*}
where $\mathrm{Rand}(\cdot, \cdot)$ is the Rand index \citep{rand1971objective} measuring the proportion of agreements between two partitions. The findings on $H$ and $T_H$ for reaching $\mathcal{N}_{\mathrm{Rand}}(\mathcal{P}^{\mathrm{true}})$ are similar to the ones when SNR = 10.

Finally, the median $N$ estimated from Algorithm~\ref{alg:N} using $\psi = 0.9$ over 50 replicate datasets is $\hat{N}=13$ when SNR = 10 and $\hat{N} = 21$ when SNR = 3.

\begin{figure}[h]
    \centering
    \begin{subfigure}[t]{0.3\textwidth}
    \caption{}
    \label{fig:bscc_true}
    \includegraphics[width=\textwidth]{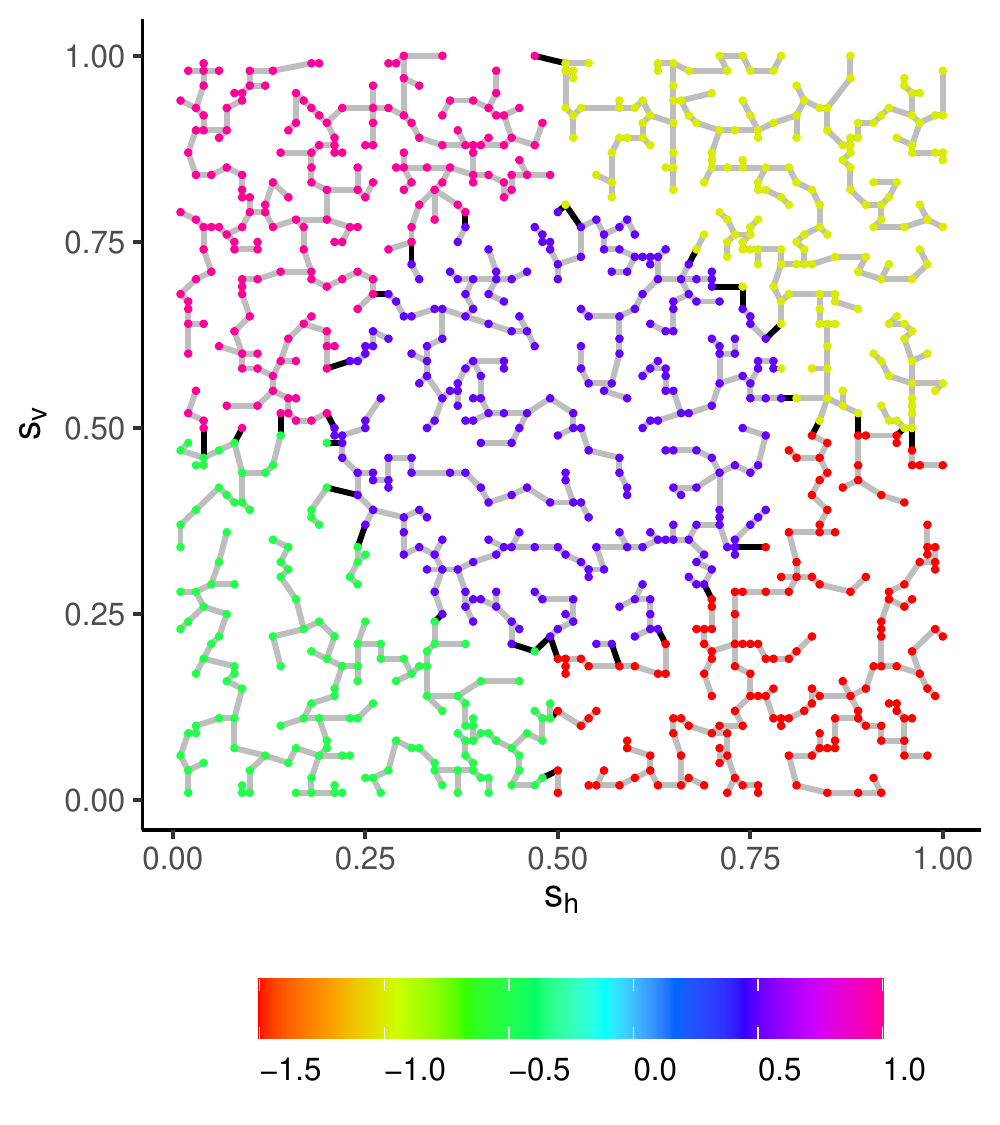}
    \end{subfigure}
    \hfill
    \begin{subfigure}[t]{0.68\textwidth}
    \caption{}
    \label{fig:bscc_boxplot}
    \includegraphics[width=\textwidth]{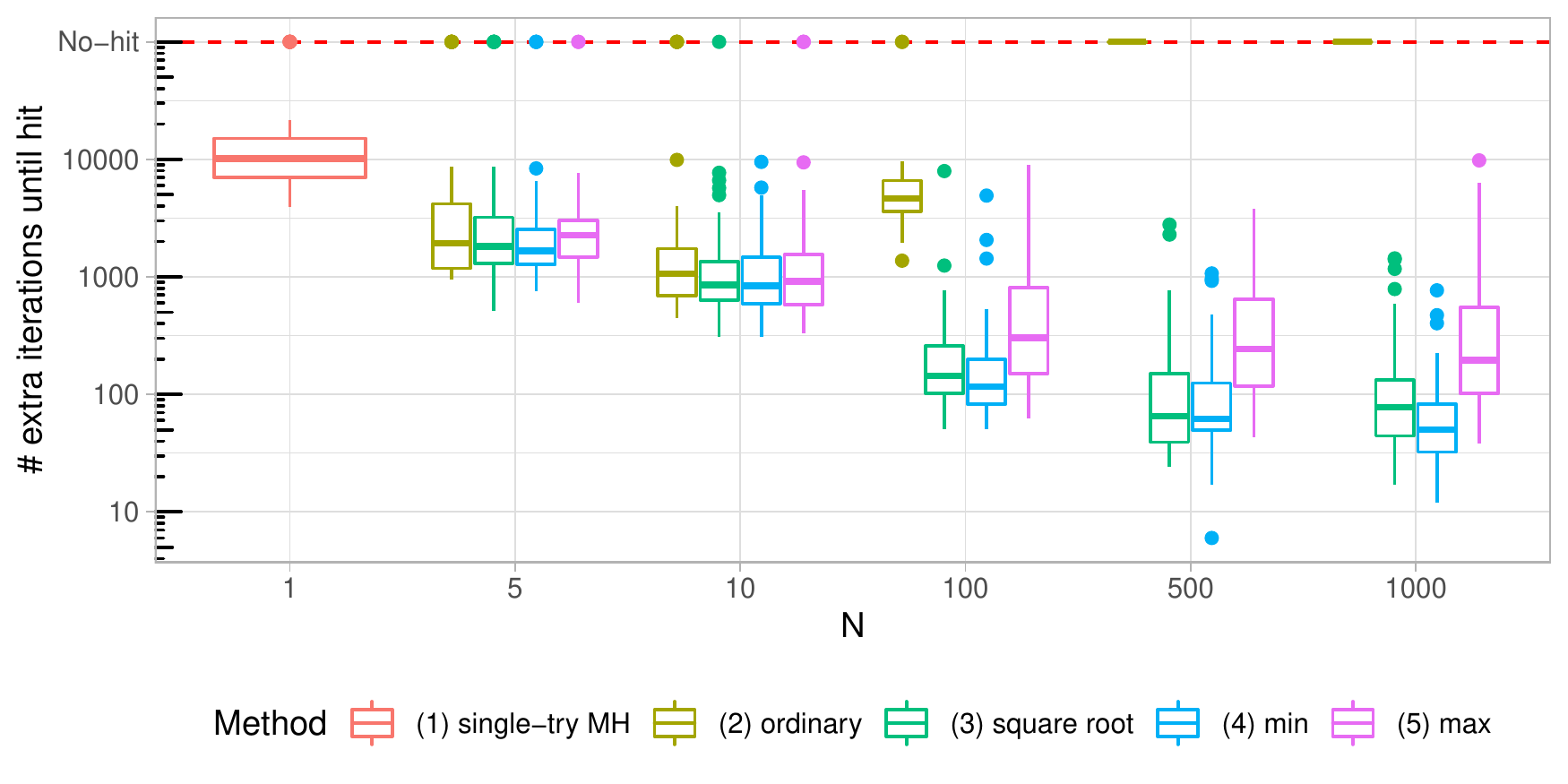}
    \end{subfigure}
    \caption{(a) True $\mu(\mathbf{s})$ and the Euclidean minimum spanning tree $\mathcal{T}$ on $\mathcal{S}$. Edges that should be removed in the true partition are marked in black. (b) Boxplot of numbers of extra iterations until hit for SCM under different numbers of trials and weight functions when SNR = 10.}
    \label{fig:bscc}
\end{figure}

\begin{table}[h]
    \small
  \caption{(SCM) Median of $H$ and $T_H$ (see text for definition; $T_H$ in seconds) over 50 replicates. Entry with ``Fail'' indicates that chains never hit the target state in more than half of the replicated datasets. }
  \label{table:bscc_hit_iter}
  \centering
  \begin{tabular}{c cc c c c c c c c}
    \toprule
   &  & $N$ & 1 & 5 & 10 & 100 & 500 & 1000  \\
    \midrule
    \multirow{8}{*}{$H$} & \multirow{4}{*}{SNR $=10$} & $w_{\mathrm{ord}}$ & \multirow{4}{*}{10124} & 1933 & 1058 & 4638 & Fail & Fail \\
     & & $w_{\mathrm{sqrt}}$ & & 1822 & 851 & 144 & 66 & 78 \\
     & & $w_{\mathrm{min}}$  & & 1661 & 840 & 117 & 62 & 50\\
     & & $w_{\mathrm{max}}$  & & 2260 & 918 & 304 & 244 & 195\\
    \cmidrule(lr){2-9}
    & \multirow{4}{*}{SNR $=3$} & $w_{\mathrm{ord}}$ & \multirow{4}{*}{23950} & 5083 & 2958 & 1046 & Fail & Fail \\
     & & $w_{\mathrm{sqrt}}$ & & 2860 & 2064 & 235 & 84 & 126 \\
     & & $w_{\mathrm{min}}$  & & 4443 & 2545 & 234 & 236 & 121 \\
     & & $w_{\mathrm{max}}$  & & 6470 & 2810 & 394 & 462 & 301 \\
    \midrule
     \multirow{8}{*}{$T_H$} & \multirow{4}{*}{SNR $=10$} & $w_{\mathrm{ord}}$ & \multirow{4}{*}{906.72} & 164.28 & 90.75 & 1291.13 & Fail & Fail \\
     & & $w_{\mathrm{sqrt}}$ & & 149.48 & 74.15 & 45.98 & 114.68 & 253.31 \\
     & & $w_{\mathrm{min}}$  & & 182.53 & 94.21 & 42.74 & 117.88 & 206.83 \\
     & & $w_{\mathrm{max}}$  & & 235.46 & 96.17 & 85.17 & 274.57 & 465.76 \\
     \cmidrule(lr){2-9}
    & \multirow{4}{*}{SNR $=3$} & $w_{\mathrm{ord}}$ & \multirow{4}{*}{2058.61} & 524.33 & 318.72 & 245.42 & Fail & Fail \\
    & & $w_{\mathrm{sqrt}}$ & & 330.36 & 229.80 & 62.23 & 113.15 & 299.45 \\
    & & $w_{\mathrm{min}}$  & & 486.18 & 264.62 & 58.71 & 233.03 & 288.35 \\
    & & $w_{\mathrm{max}}$  & & 721.71 & 303.91 & 97.47 & 438.06 & 628.02 \\
    \bottomrule
  \end{tabular}
\end{table}

\clearpage 

\subsection{MTM algorithm on multimodal target distributions}\label{sec:append.multimodal}

In this section, we analyze the performance of the MTM algorithm with different choices of weight functions and $N$ on the multimodal target distribution. Following \cite{zhou2021dimension}, we generate a multimodal dataset in the context of a Bayesian variable selection problem.

\textbf{Data generation}. We let sample size $n=1000$ and number of variables $p=5000$. Each row of design matrix is independently sampled from $\bm{x}_i\iidsim \mathsf{N}\left(0, \Sigma\right)$ for $i=1,\dots,n$ where $\Sigma=\operatorname{diag}\left(\Sigma_{20}, \ldots, \Sigma_{20}\right)$ is block-diagonal. Each block $\Sigma_{20}$ has dimension $20\times 20$, and $\left(\Sigma_{20}\right)_{j k}=\exp(-|j-k| / 3)$. We generate true coefficient $\bm\beta^{\mathrm{true}}$ by first sampling 100 indices $j_1,\dots,j_{100}$ uniformly at random (without replacement) from $[p]$ and let $\beta_{j_\ell} \iidsim \mathsf{N}(0,\sigma_\beta^2)$ for $\ell=1,\dots,100$, and $\beta_{k}=0$ if $k\not\in \{j_1,\dots,j_{100}\}$. Then the response vector $\bm{y}$ is generated from $\bm{y}\sim \mathsf{N}(\bm{X}\bm{\beta}^{\mathrm{true}},\mathbf{I}_n)$. We consider three settings of $\sigma_{\beta}=0.3,0.4,0.5$ to simulate the coefficients and data. For each setting, we simulate 20 datasets.

\textbf{MCMC setup}. We use the same BVS model described in Section~\ref{sec:simulation}. Hyperparameters are specified as $\mathscr{G} = p = 5000$, $\kappa = 1$, and $s_{\mathrm{max}} = 100$. For each dataset, we run a chain of 10,000 iteration for MTM with $N=50,100,500,1000,2000,5000$ using four different weight functions. The first 2000 iterations are discarded since the behavior of the chain (e.g. acceptance ratio) during the burn-in stage may be different from the behavior of the chain which entered stationarity; see also Figure~\ref{fig:traceplotmultimodal}. for trace plots. Algorithms are all initialized with null model $\gamma_0 = \bm{0}$. All simulation studies are performed on a Linux cluster with Intel(R) Xeon(R) Gold 6132 CPU @ 2.60GHz and 96GB memory.

Since the target distribution is no longer unimodal, hitting iteration $H$ and wall-clock hitting time $T_H$ are not appropriate metrics to compare mixing performance. Instead, we use three different metrics to evaluate the quality of the mixing: 1) acceptance ratio, 2) the number of unique states visited by the chain, denoted by \#(unique $\gamma$), and 3) ESS/Time, where ESS is the effective sample size calculated from the hamming distances $d_\mathrm{H}(\hat\gamma_{\mathrm{max}}, \gamma_t)$, $t=2001,\dots,10000$ from the maximum posterior state $\hat\gamma_{max}$ found in a chain, Time is wall-clock time usage, measured in seconds. The results are summarized in Table~\ref{table:multimodal}.

\begin{table}[h]
    \small
  \caption{Multimodal posterior simulation results based on the acceptance rate, the number of unique states visited by a chain, and the effective sample size divided by the running time. All statistics are based on chains with a length of 8000 (2000 burn-in), and are averaged over 20 datasets.}
  \label{table:multimodal}
  \centering
  \begin{tabular}{ccc  c c c c c c}
    \toprule
    & &$N$ & 50 & 100 & 500 & 1000 & 2000 & 5000 \\
    \midrule
    \multirow{12}{*}{$\sigma_\beta = 0.3$} & \multirow{4}{*}{Acc. Rate} & $w_{\mathrm{ord}}$ & 0.041 & 0.076 & 0.290 & 0.436 & 0.509 & 0.051\\
     & & $w_{\mathrm{sqrt}}$ & 0.039 & 0.069 & 0.191 & 0.257 & 0.314 & 0.385\\
     & &  $w_{\mathrm{min}}$ &0.042 & 0.075 & 0.299 & 0.430 & 0.576 & 0.717\\
     & & $w_{\mathrm{max}}$ & 0.008 & 0.011 & 0.021 & 0.031 & 0.043 & 0.067 \\
     \cmidrule(lr){2-9}
    & \multirow{4}{*}{\#(\text{unique }$\gamma$)} & $w_{\mathrm{ord}}$  & 233.4 & 425.5 & 1398.3 & 2052.6 & 2130.8 & 161.8\\
     & & $w_{\mathrm{sqrt}}$ &240.2 & 385.1 & 993.2 & 1314.7 & 1549.6 & 1885.2\\
     & & $w_{\mathrm{min}}$  & 255.1 & 408.4 & 1538.8 & 2056.4 & 2684.8 & 3168.8\\
     & & $w_{\mathrm{max}}$  & 53.0 & 71.0 & 118.1 & 158.9 & 211.8 & 311.0\\
     \cmidrule(lr){2-9}
    & \multirow{4}{*}{ESS/Time} & $w_{\mathrm{ord}}$ & 7.04 & 8.71 & 9.32 & 6.12 & 2.27 & 0.02\\
     & & $w_{\mathrm{sqrt}}$ &5.11 & 8.08 & 7.14 & 4.62 & 2.48 & 1.14 \\
     & & $w_{\mathrm{min}}$ &5.35 & 7.11 & 9.38 & 7.17 & 4.20 & 1.94\\
     & & $w_{\mathrm{max}}$ & 1.20 & 1.12 & 0.49 & 0.41 & 0.13 & 0.06\\
     \midrule
     \multirow{12}{*}{$\sigma_\beta = 0.4$} & \multirow{4}{*}{Acc. Rate} & $w_{\mathrm{ord}}$ &  0.036 & 0.069 & 0.260 & 0.389 & 0.417 & 0.017\\
     & & $w_{\mathrm{sqrt}}$ & 0.034 & 0.057 & 0.172 & 0.233 & 0.285 & 0.350 \\
     & & $w_{\mathrm{min}}$  & 0.036 & 0.064 & 0.261 & 0.404 & 0.541 & 0.674\\
     & & $w_{\mathrm{max}}$ & 0.007 & 0.009 & 0.018 & 0.026 & 0.040 & 0.059 \\
     \cmidrule(lr){2-9}
    & \multirow{4}{*}{\#(\text{unique }$\gamma$)} & $w_{\mathrm{ord}}$  & 211.4 & 392.6 & 1302.9 & 1782.8 & 1852.1 & 72.4\\
     & & $w_{\mathrm{sqrt}}$ & 205.8 & 316.1 & 932.5 & 1277.8 & 1470.2 & 1767.4\\
     & & $w_{\mathrm{min}}$  & 219.8 & 361.6 & 1310.4 & 1919.6 & 2565.6 & 3067.3\\
     & & $w_{\mathrm{max}}$  & 50.2 & 57.4 & 103.6 & 135.3 & 197.6 & 279.4\\
     \cmidrule(lr){2-9}
    & \multirow{4}{*}{ESS/Time} & $w_{\mathrm{ord}}$ & 6.13 & 8.06 & 6.31 & 6.58 & 0.52 & 0.03\\
     & & $w_{\mathrm{sqrt}}$ & 4.79 & 6.81 & 6.06 & 4.04 & 2.10 & 1.09 \\
     & & $w_{\mathrm{min}}$ & 5.42 & 7.73 & 7.26 & 5.07 & 3.83 & 1.87\\
     & & $w_{\mathrm{max}}$ &1.04 & 0.80 & 0.35 & 0.23 & 0.11 & 0.05\\
     \midrule
      \multirow{12}{*}{$\sigma_\beta = 0.5$} & \multirow{4}{*}{Acc. Rate} & $w_{\mathrm{ord}}$ &  0.028 & 0.051 & 0.209 & 0.337 & 0.263 & 0.001\\
     & & $w_{\mathrm{sqrt}}$ & 0.024 & 0.046 & 0.138 & 0.190 & 0.242 & 0.306 \\
     & & $w_{\mathrm{min}}$  &0.026 & 0.055 & 0.209 & 0.337 & 0.481 & 0.633\\
     & & $w_{\mathrm{max}}$ &0.005 & 0.007 & 0.015 & 0.024 & 0.036 & 0.056\\
     \cmidrule(lr){2-9}
    & \multirow{4}{*}{\#(\text{unique }$\gamma$)} & $w_{\mathrm{ord}}$  &146.2 & 261.6 & 924.7 & 1382.0 & 976.5 & 5.6\\
     & & $w_{\mathrm{sqrt}}$ &132.2 & 246.4 & 648.8 & 877.4 & 1054.0 & 1323.6  \\
     & & $w_{\mathrm{min}}$  &138.2 & 291.0 & 958.2 & 1445.6 & 2001.9 & 2444.1\\
     & & $w_{\mathrm{max}}$  &37.4 & 48.5 & 83.8 & 124.6 & 177.7 & 257.7\\
     \cmidrule(lr){2-9}
    & \multirow{4}{*}{ESS/Time} & $w_{\mathrm{ord}}$ &5.93 & 9.05 & 9.85 & 7.77 & 0.32 & 0.01\\
     & & $w_{\mathrm{sqrt}}$ & 6.75 & 6.83 & 7.33 & 5.05 & 3.27 & 1.17\\
     & & $w_{\mathrm{min}}$ & 6.73 & 6.45 & 8.73 & 5.15 & 5.41 & 2.54\\
     & & $w_{\mathrm{max}}$ & 1.97 & 1.00 & 0.53 & 0.39 & 0.15 & 0.07\\
     \bottomrule
  \end{tabular}
\end{table}

\begin{figure}[h]
    \centering
    \caption{Examples of MTM trace plot using different weight functions and the number of trials $N$. (Left) Simulated data with $\sigma_\beta = 0.3$, (Right) Simulated data with $\sigma_\beta = 0.5$. Each row corresponds to $N=50,500,5000$. All chains are initialized at the null model $\gamma_0 = \bm{0}$. }
    \label{fig:traceplotmultimodal}
    \includegraphics[width=\textwidth]{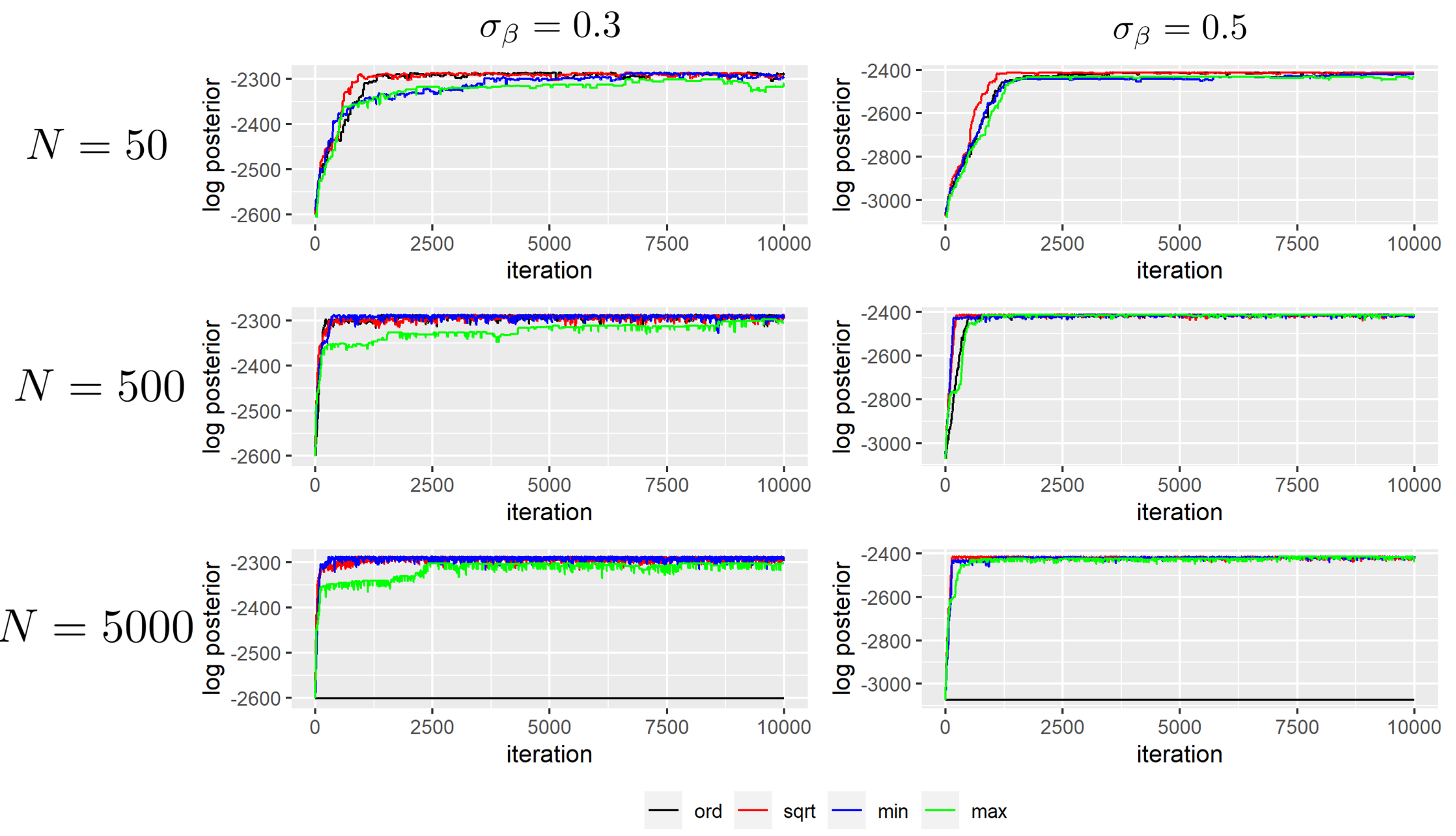}
\end{figure}

Figure~\ref{fig:traceplotmultimodal} shows that as $N$ increases, chains generally move faster towards the high posterior states which suggests that discarding the first 2000 samples is reasonable. One exception is when $N=5000$, the chain using ordinary weight function barely moves to another state, whereas the chain using locally balanced weight functions exhibits better mixing properties than $N=50$ and $N=500$.

Finally, Table~\ref{table:multimodal} summarizes the multimodal simulation results. Similar to the previous unimodal results, the performance of $w_{\mathrm{ord}}$ deteriorates as $N$ being large, especially when $\sigma_\beta$ is large.
When $N$ is moderate, the weighting functions $w_{\mathrm{ord}}$ and $w_{\mathrm{min}}$ has a better mixing property than $w_{\mathrm{sqrt}}$, while $w_{\mathrm{max}}$ being the worst. The inferior performance of $w_{\mathrm{max}}$ is also observed at Table~\ref{table:bvsH}. If the current state $x$ is one of the local modes, we may have $\max \{1, \pi(x^\star)/\pi(y)\} >\!\!> \max \{1, \pi(y) / \pi(x) \}$ where $y$ is the proposed state, and $x^\star$ is one of the trials of $y$ in Step 3 of Algorithm~\ref{alg:MTM}, which makes the acceptance probability very small so that the chain is stuck at $x$ when using $w_{\mathrm{max}}$. On the other hand, the magnitude of the difference between $h(\pi(x^\star)/\pi(y))$ and $h(\pi(y)/\pi(x))$ will be reduced if we use $w_{\mathrm{sqrt}}, w_{\mathrm{min}}$ so that the chain can traverse among local modes.
To summarize, the simulation results suggest that $w_{\mathrm{min}}$ would be the best choice of the weight function if the multimodality exists, since 
it not only traverses the multimodal posterior efficiently, but also is robust to large $N$.
We envision that there are a number of ways to improve the mixing under the multimodal posterior by combining it with techniques such as annealing or tempering \cite{casarin2013interacting}.

\section{Real data applications}\label{sec:appendix:real}

\subsection{GWAS dataset for Bayesian variable selection}\label{subsec:real:BVS}

We consider a genome-wide association study (GWAS) dataset on glaucoma studied in \cite{zhou2021dimension} with sample size $n=5418$ and number of genetic variants $p=7255$.  The response variable $\mathbf{y}\in\mathbb{R}^{5418}$ is the standardized cut-to-disk ratio measurements averaged over two eyes.
We use the BVS model described in Section~\ref{subsec:simul:BVS}, with hyperparameters $\mathscr{G} = 100$ and $\kappa = 0.8$. Since the ``true" state is not available, we compare the acceptance rate and the number of unique states visited, averaging over 5 chains. From Table~\ref{table:gwas}, it is clear that the performance of $w_{\mathrm{ord}}$ deteriorates significantly as $N$ grows whereas $w_{\mathrm{sqrt}}$, $w_{\mathrm{min}}$, $w_{\mathrm{max}}$ does not. 
We also report the posterior inclusion probabilities of the top 10 genetic variants in Tables \ref{table:gwas1}, \ref{table:gwas2}, \ref{table:gwas3}, \ref{table:gwas4}, \ref{table:gwas5}, and \ref{table:gwas6} in Appendix \ref{sec:addtable}. All results generally agree with the result of \cite{zhou2021dimension}, except that when we use $w_{\mathrm{ord}}$ with $N=5000$, the chain is stuck at local modes and fails to find the significant genetic variants. 

\vspace{-2mm}
\begin{table}[h]
    \small
    \caption{GWAS dataset analysis results, averaged over 5 chains with random seeds.}
  \centering
  \label{table:gwas}
  \begin{tabular}{cc  c c c c c c }
    \toprule
     &$N$ & 50 & 100 & 500 & 1000 & 2000 & 5000 \\
     & iteration & $10^6$ & $5\times 10^5$ & $10^5$ & $5\times 10^4$ & $2\times 10^4$ & $10^4$ \\
    \midrule
    \multirow{4}{*}{Acc. Rate} & $w_{\mathrm{ord}}$ & 0.4014 & 0.5081 & 0.1370 & 0.0471 & 0.0259 & 0.0085 \\
      & $w_{\mathrm{sqrt}}$ & 0.3407 & 0.4812 & 0.7571 & 0.8325 & 0.8777 & 0.9251 \\ 
      &  $w_{\mathrm{min}}$ & 0.4136 & 0.5851 & 0.8252 & 0.8797 & 0.9138 & 0.9455 \\ 
      & $w_{\mathrm{max}}$ & 0.2335 & 0.3404 & 0.6161 & 0.7199 & 0.7930 & 0.8698 \\
     \cmidrule(lr){1-8}
     \multirow{4}{*}{\#(\text{unique states})} & $w_{\mathrm{ord}}$  &  199442 & 126238 & 6796 & 1172 & 259 & 43 \\
      & $w_{\mathrm{sqrt}}$ & 169242 & 119396 & 37481 & 20588 & 8669 & 4563 \\
      & $w_{\mathrm{min}}$  & 205459 & 145372 & 40993 & 21848 & 9080 & 4696 \\ 
      & $w_{\mathrm{max}}$  & 115615 & 84124 & 30227 & 17591 & 7717 & 4211 \\ 
     \bottomrule
  \end{tabular}
\end{table}

\subsection{Single-cell RNA dataset for structure learning}\label{subsec:real:structure_learning}

We consider a gene expression dataset on Alzheimer's disease used in~\cite{chang2022order} with the sample size $n=1666$ and the number of genes $p=73$. The goal is to learn the underlying directed acyclic graph (DAG) model among the $p$ genes.  
Due to acyclicity, each DAG has at least one ordering of the nodes.
For example, the ordering for the DAG $a \rightarrow b \leftarrow c$ can be either $(a, c, b)$ or $(c, a, b)$. 
A popular Bayesian structure learning strategy is to use MCMC sampling to first learn the marginal posterior distribution on the order space and then find one or multiple best DAGs for each sampled ordering. 

We use an MTM implementation of the order MCMC sampler proposed in~\cite{chang2022order}, which aims to learn the posterior distribution on the order space $\mathbb{S}^p$, the permutation group on $\{1, \dots, p\}$.
The size of our model space $\mathbb{S}^p$ is equal to $73! \approx 4.5 \times 10^{105}$. 
For each weight function and each setting, we simulate 30 chains, initialized at $(1,\dots, 73)$. 
It is clear from Table~\ref{table:dag} that the acceptance probability with ordinary weight function $w_{\mathrm{ord}}$ significantly deteriorates, which is consistent with our theory. We can see this tendency more clearly in the log-posterior trace plots for all weight functions in Figure~\ref{fig:traj_dag}. 

\vspace{-2mm}
\begin{table}[h]
    \small
  \caption{The single-cell RNA database for Alzheimer’s disease analysis results, averaged over 30 chains with random seeds. The number in the parenthesis is the standard error.}
  \label{table:dag}
  \centering
  \begin{tabular}{cc  c c  c  c  c}
    \toprule
     &$N$ & 5 & 50  \\
     & iteration & $5\times 10^2$ & $2 \times 10^2$  \\
    \midrule
    \multirow{4}{*}{Acc. Rate} &
    $w_{\mathrm{ord}}$ &  0.7187 (0.004) & 0.0012 (0.000) \\
      & $w_{\mathrm{sqrt}}$ & 0.8029 (0.004) &  0.9186 (0.003) \\
      &  $w_{\mathrm{min}}$ &0.8329 (0.002) & 0.9506 (0.001) \\
      & $w_{\mathrm{max}}$ & 0.6643 (0.006) & 0.6806 (0.008) \\
     \cmidrule(lr){1-4}
     \multirow{4}{*}{\#(\text{unique orderings})} & 
     $w_{\mathrm{ord}}$  & 361.1 (3.1) & 1.333333 (0.1) \\
      & $w_{\mathrm{sqrt}}$ &  402.3 (2.2)  & 184.3 (0.8) \\
      & $w_{\mathrm{min}}$  & 416.7 (1.8) & 191.8 (0.6) \\
      & $w_{\mathrm{max}}$  &  332.0 (2.9)   & 136.2 (1.6)   \\
     \bottomrule
  \end{tabular}
\end{table}

\begin{figure}
    \centering
    \includegraphics[width=0.34\textwidth]{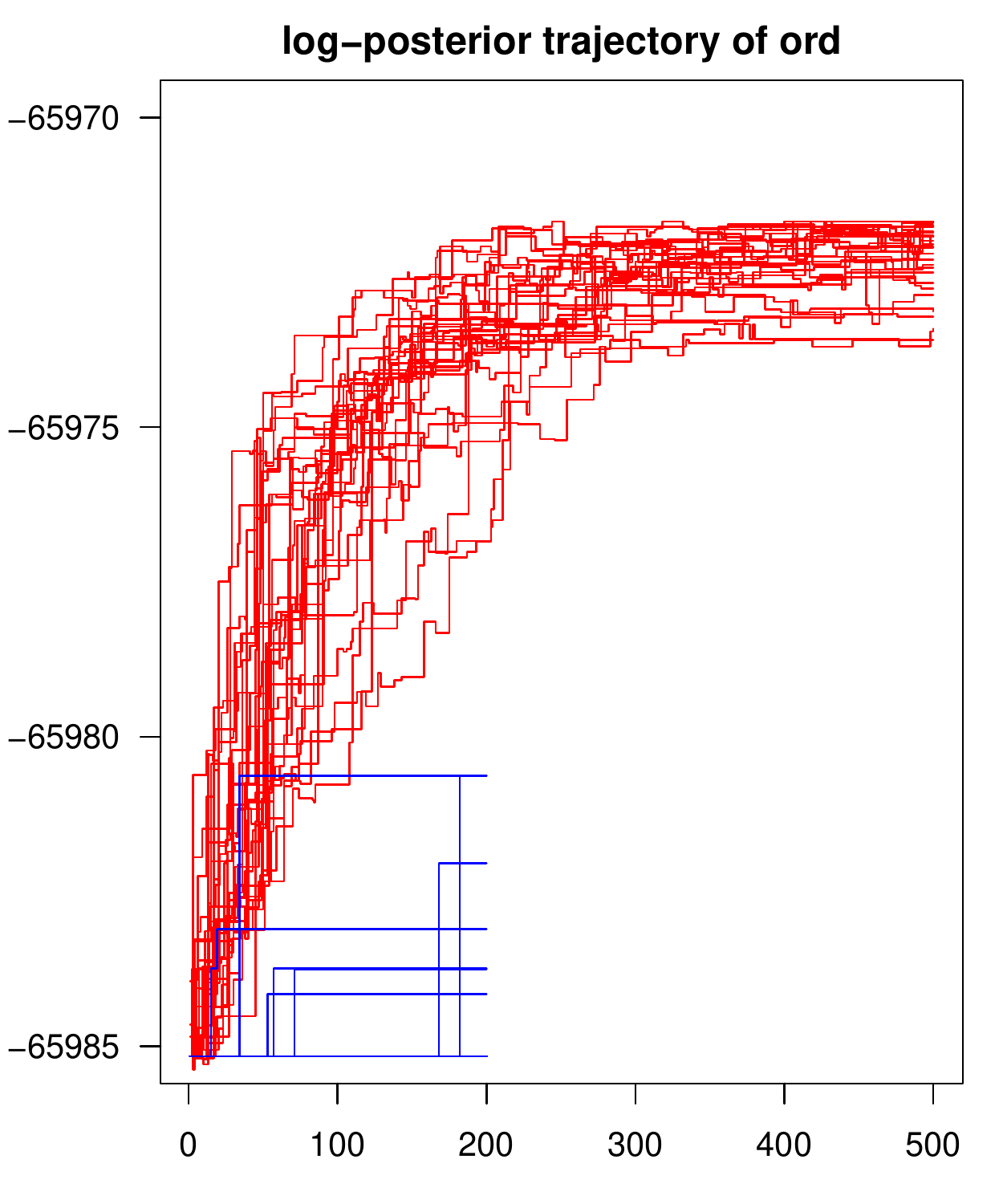}
    \includegraphics[width=0.34\textwidth]{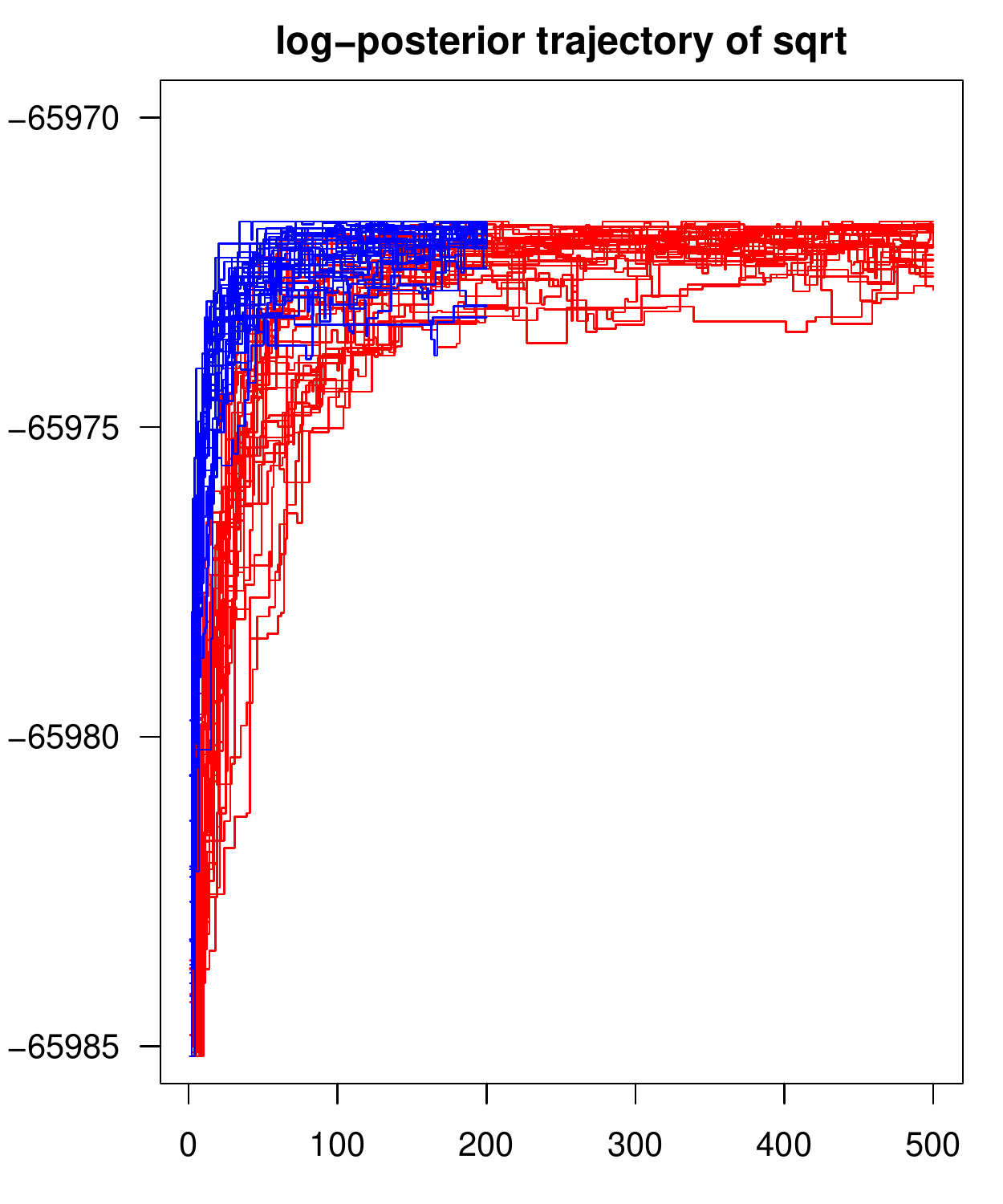}
    \includegraphics[width=0.34\textwidth]{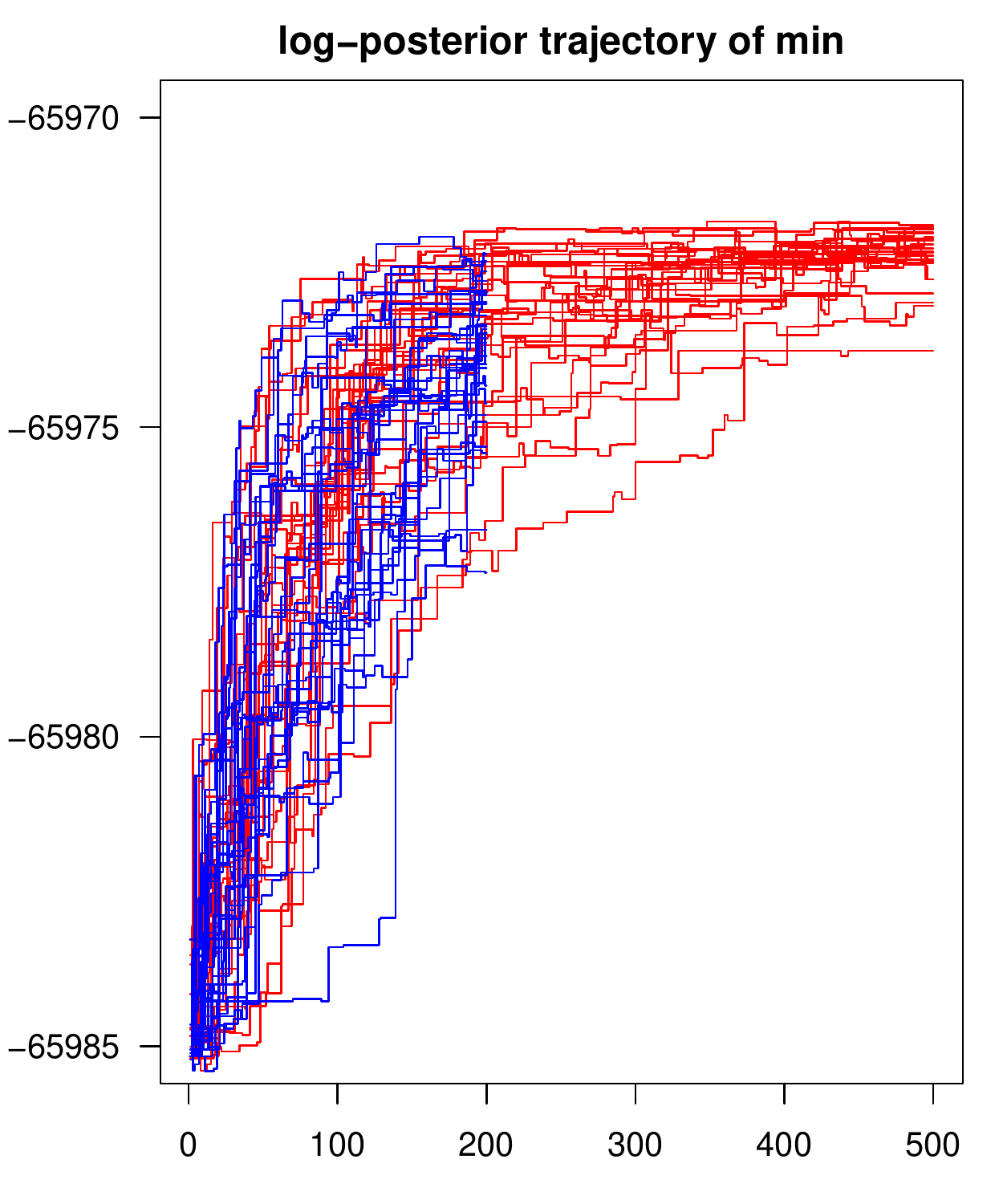}
    \includegraphics[width=0.34\textwidth]{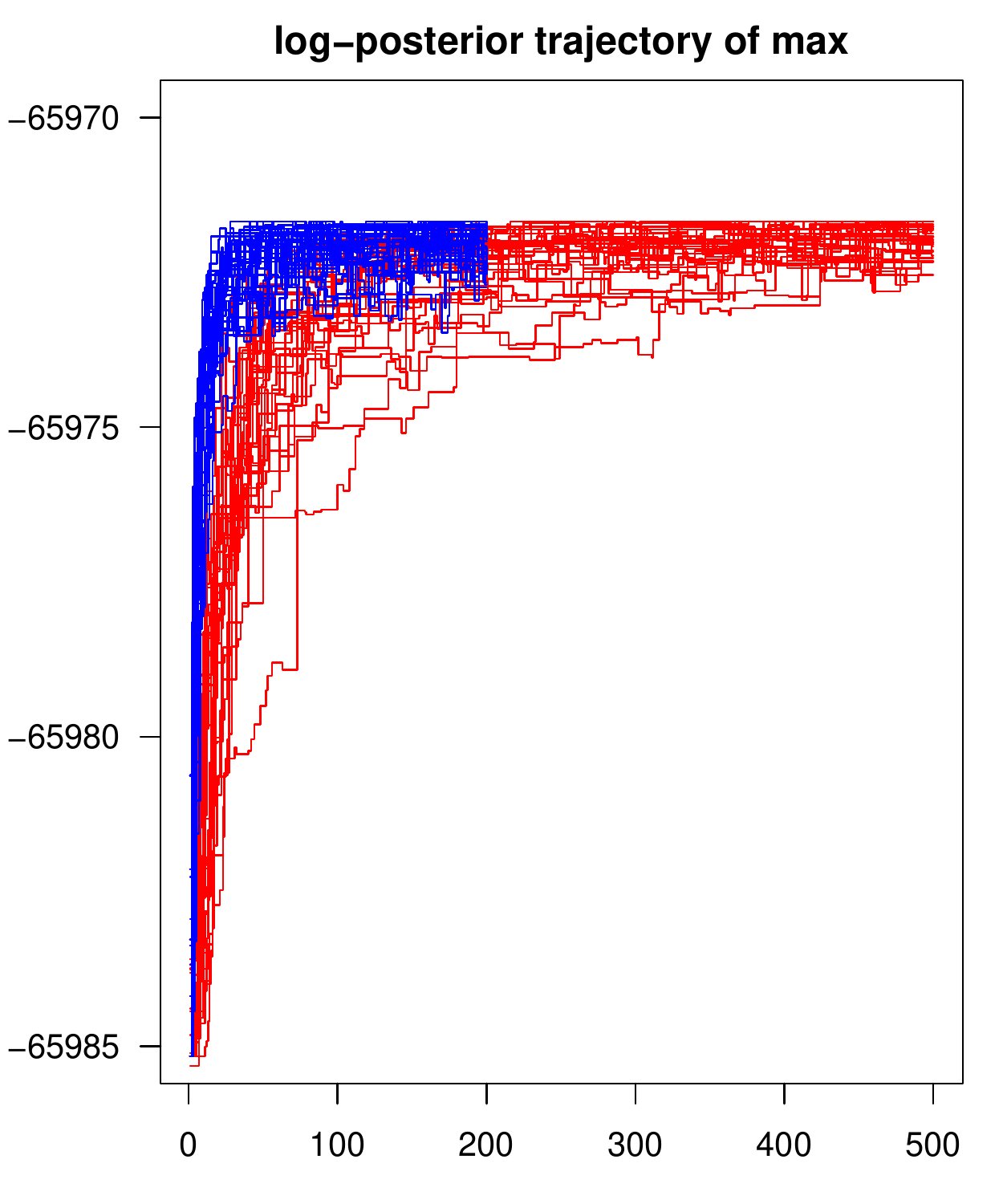}
    \caption{Log-posterior trace plots. Red trajectories indicate MTM with the number of trial $N = 5$ with 500 iterations, and blue trajectories indicate MTM with $N = 50$ with 200 iterations. }
    \label{fig:traj_dag}
\end{figure}

\clearpage 

\section{Additional discussion}\label{sec:adddiscussion}

\subsection{On parallelization (vectorization) }\label{subsec:parallelization}

As discussed in \textit{Remark 3} in Section \ref{subsec:weight}, the overall theoretical computational complexity of the MTM algorithm until the convergence remains the same as a usual MH algorithm. However, MTM enables parallel computations when evaluating $N$ weight functions and hence leads to a significant practical computational gain as evidenced by the reduced wall-clock hitting time reported in Table \ref{table:th}.
Under the random walk proposal $\MH$, the evaluation of weight functions is equivalent to the evaluation of target distribution at $N$ states $y_1,\dots,y_N$. 
Here we clarify that the scope of parallelism we consider is the instruction-level parallelism~\cite{rau1993instruction}, also called vectorization. 
Thanks to modern 
When task-level parallelism, assigning a set of independent tasks in parallel across several processors, is employed to the MTM algorithm within each MCMC iteration, it suffers from communication overhead unless the evaluation of target distributions takes extremely long.
Thanks to the optimized linear algebra libraries such as BLAS~\cite{blackford2002updated}, the easiest way to achieve instruction-level parallelism is to convert the problem of evaluating target distribution at multiple states $\pi(y_1),\dots,\pi(y_N)$ to a series of matrix multiplication problems. 

Here we outline the computational strategy to simultaneously calculate $\pi(y_1),\dots,\pi(y_N)$ for BVS and SBM. For BVS, since only one variable is added or deleted in the proposal, the Cholesky rank-1 update \citep{smith1996nonparametric,george1997approaches} is utilized to get $\pi(y_1),\dots,\pi(y_N)$ from $\pi(x)$. To be specific, assume $\gamma_1'\dots,\gamma_N'$ are obtained by adding a variable from $\gamma$. By \eqref{eq:bvsposterior}, the evaluation of $\pi(\gamma_j\given\bm{y})$, $j=1,\dots,N$ corresponding to evaluating $\mathrm{SSR}(\gamma_j)$, $j=1,\dots,N$ from $\mathrm{SSR}(\gamma)$ saved from the previous iteration. We refer \cite[][Appendix B]{zanella2019scalable} for details of vectorization procedure with Cholesky rank-1 update. 
For SBM, let $A_i\in\{0,1\}^{p}$ be the $i$th column of adjacency matrix and $Z\in\{0,1\}^{p\times K}$ be one-hot encoded partition matrix such that  $Z_{i,k}=1$ if $z_i = k$ and 0 otherwise. Since a node is assigned to another block one at a time, the calculation of $\pi(y_1)\dots,\pi(y_N)$ given current state $\pi(x)$ can be done by counting the change of the number of edges between blocks; see \eqref{eq:sbmposterior}. Letting $A_J\in\{0,1\}^{p\times N}$ where column $A_j$ corresponds to the $j$th proposal, the matrix-matrix multiplication $Z^\top A_J$ allows to calculate $\pi(y_1)\dots,\pi(y_N)$ simultaneously from $\pi(x)$. In addition, if the graph is sparse, then sparse matrix multiplication algorithms can be utilized for further speedup.

\subsection{On state space}\label{subsec:state_space}

Our state space of interest is finite (so discrete), but the proposed locally balanced MTM algorithm is also applicable to continuous state spaces which will be shown shortly.  
We choose to focus on the discrete case since the theory on continuous state spaces is usually developed under very different frameworks (and likewise, the theory on continuous spaces often cannot be readily applied to discrete ones). 
Indeed, developing MCMC theory or methodology on discrete spaces is often regarded as more challenging than on continuous ones~\citep[][Section 1]{zanella2020informed}, due to the lack of gradient information and a widely accepted theoretical framework supported by statistical theory (for comparison, on continuous spaces, one often assumes log-concavity or asymptotic normality of the target posterior distribution). 

To some extent, the proposed MTM method is conceptually similar to MALA (Metropolis adjusted Langevin algorithm) or HMC (Hamiltonian Monte Carlo) on continuous spaces in that MTM evaluates the ``gradient'' by a random search of neighboring states. This suggests that for continuous-state-space problems where the gradient of log-posterior cannot be easily evaluated (e.g. Bayesian inverse problems and Gaussian process regression models), the proposed MTM method can be quite useful.

We conclude this section with a simulation study that shows the weight function proposed in Proposition 2 can lead to an improved MTM algorithm on continuous spaces. Suppose our target distribution is the 10-dimensional Gaussian distribution $\mathsf{N}(0, \mathbf{I}_{10})$. We set our proposal distribution $q(\cdot | x ) = \mathsf{N}_{10}(x, 10^{-2} \mathbf{I}_{10})$, initialize the chains at $x_0 = (10, 10, \dots, 10)$, and run 10,000 iterations for each chain.
The result is summarized in Table~\ref{table:gaussian}, where for each setting we repeat the simulation 30 times. The advantage of the weight functions considered in Proposition~2 over $w_{\rm{ord}}$ is substantial.
We present the log-posterior traceplots in Figure~\ref{fig:conti_traj} and the MCMC sample trajectories in Figure~\ref{fig:conti_coord}.

\begin{table}[!h]
    \small
  \caption{Sampling from 10-d standard Gaussian distribution with 10,000 iterations using MTM. Averaged over 30 chains with random seeds.
  The number in the parenthesis is the standard error. }
  \label{table:gaussian}
  \centering
  \begin{tabular}{cc  c c  c  c  c}
    \toprule
     &$N$ & MH $(N=1)$ & 10 & 100 & 1,000  \\
    \midrule
    \multirow{4}{*}{Acc. Rate} &
    $w_{\mathrm{ord}}$ &   \multirow{4}{*}{0.8604 (0.001)}  & 0.7488 (0.002)   &  0.2971 (0.012)  &  0.0394 (0.008)  \\
      & $w_{\mathrm{sqrt}}$ &  & 0.9667 (0.000) &  0.9897 (0.000) &  0.9967 (0.000)  \\
      & $w_{\mathrm{min}}$ &  &  0.9656 (0.000) & 0.9890 (0.000) &  0.9963 (0.000) \\
      & $w_{\mathrm{max}}$ &  &  0.9593 (0.000)  &  0.9861  (0.000) &  0.9950 (0.000) \\
     \bottomrule
  \end{tabular}
\end{table}

\begin{figure}[!h]
    \centering
    \includegraphics[width=0.4\textwidth]{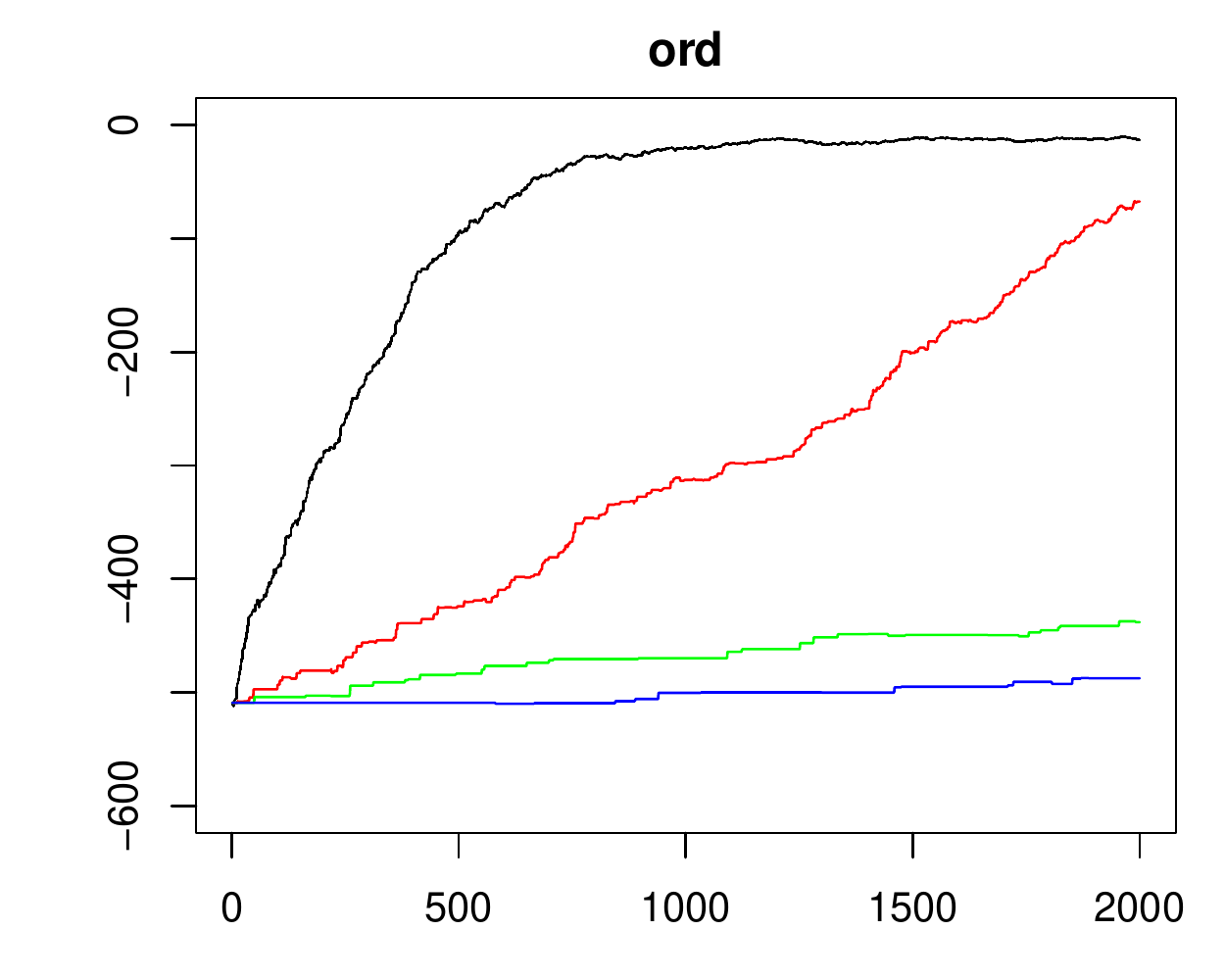}
    \includegraphics[width=0.4\textwidth]{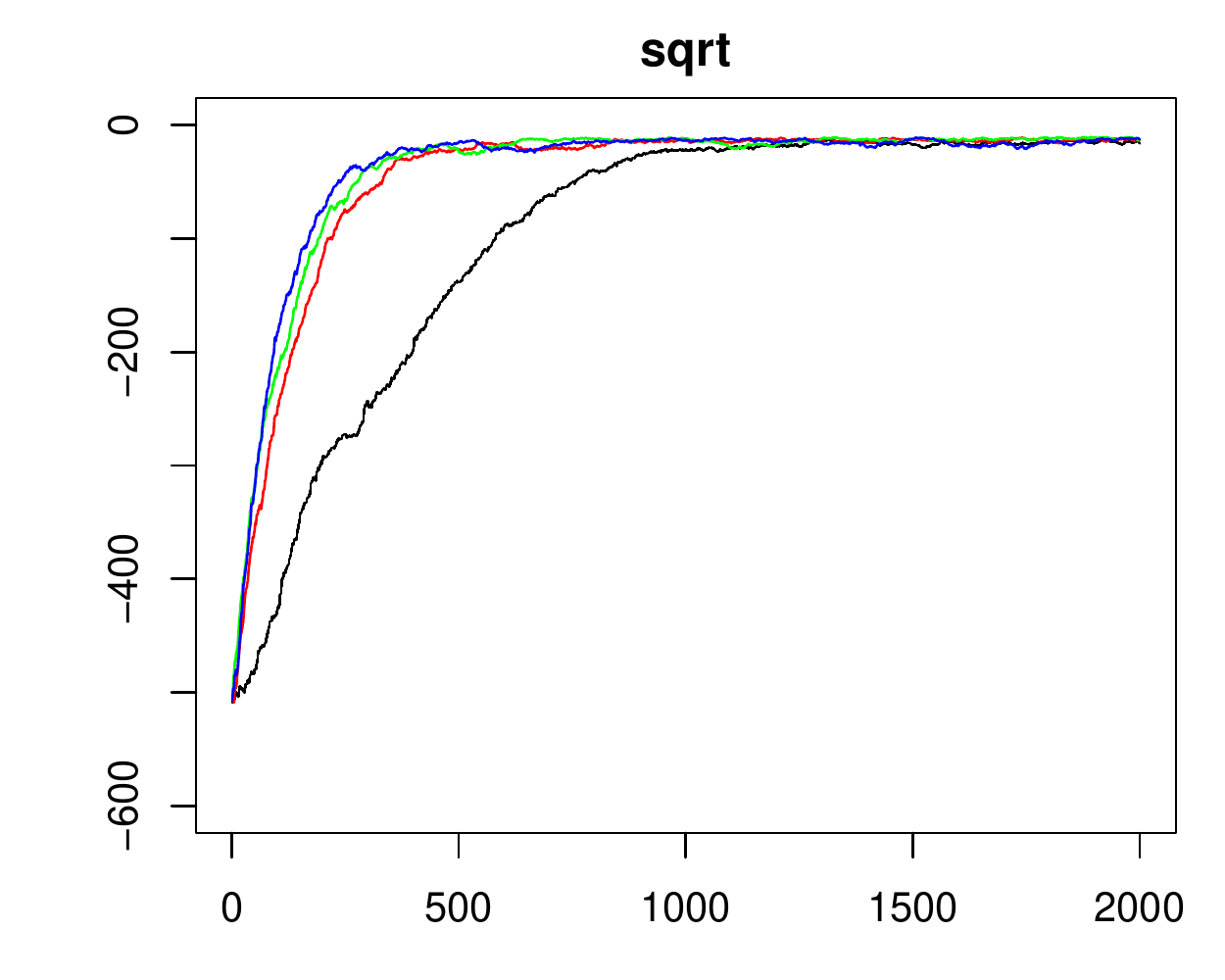}
    \includegraphics[width=0.4\textwidth]{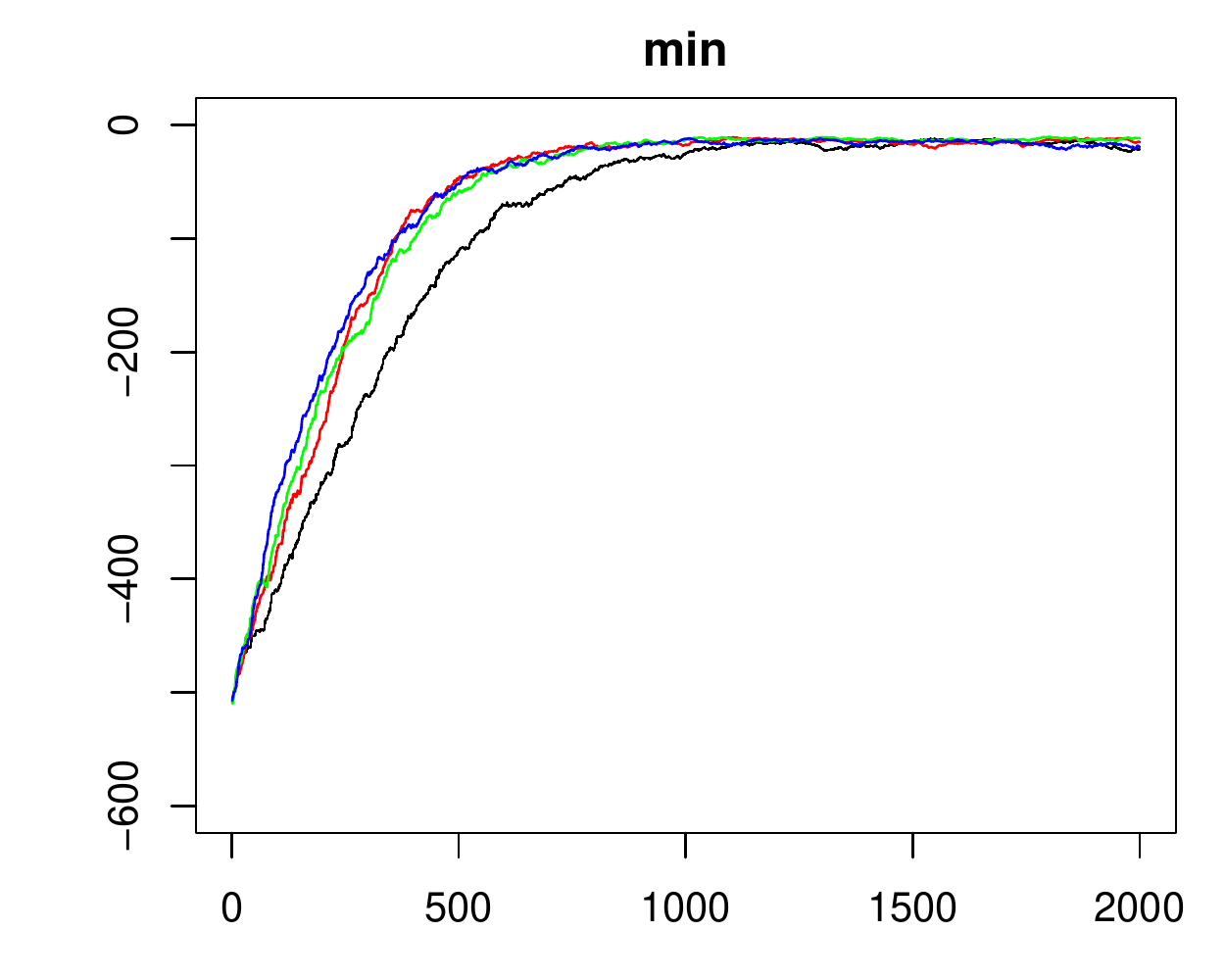}
    \includegraphics[width=0.4\textwidth]{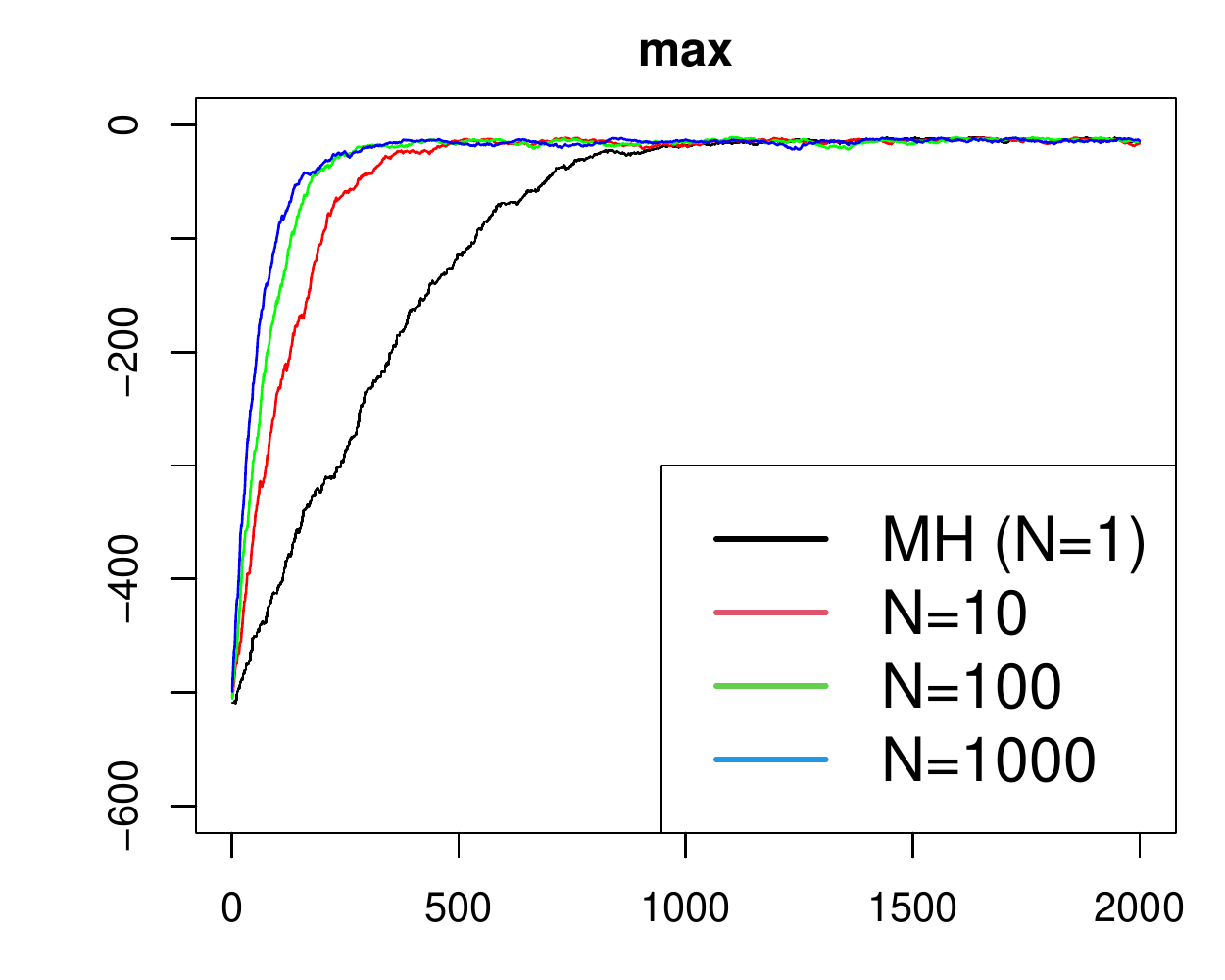}
    \caption{Log-posterior trace plots for 4 different weight functions $w_{\mathrm{ord}}, w_{\mathrm{sqrt}}, w_{\mathrm{min}}$ and $w_{\mathrm{max}}$. Different colors indicate a different number of trials as specified in the legend.}
    \label{fig:conti_traj}
\end{figure}

\begin{figure}[!h]
    \centering
    \includegraphics[width=0.5\textwidth]{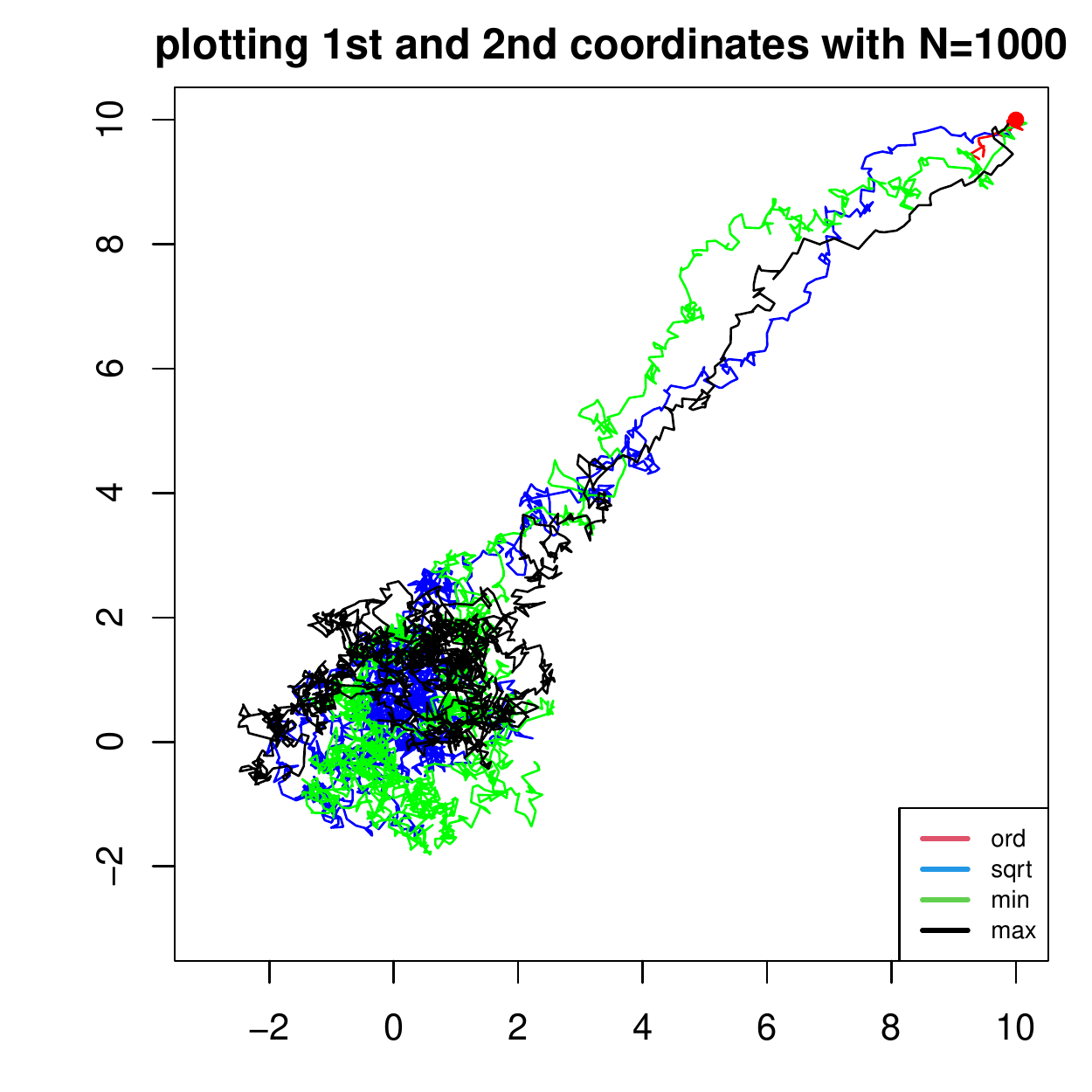}
    \caption{MCMC sample trajectories under 10-dimensional continuous target distribution $\mathsf{N}(\bm{0},\mathbf{I}_{10})$ with different weight functions, initialized at $x_0 = (10,10,\dots,10)$. Note that the chain with the ordinary weight function $w_{\mathrm{ord}}$ gets stuck at its early stage, whereas the chains with the other weight functions move to the region with a high posterior, which supports our claim.}
    \label{fig:conti_coord}
\end{figure}

\clearpage 

\section{Additional tables} \label{sec:addtable}

Here we report genetic variants with the top 10 highest posterior inclusion probabilities found from BVS model with MTM algorithm, under the different choices of $N=$ 50, 100, 500, 1000, 2000, 5000 and weight functions $w_{\mathrm{ord}}, w_{\mathrm{sqrt}}, w_{\mathrm{min}}$ and $w_{\mathrm{max}}$.

\begin{table}[h]
\centering
\caption{Genetic variants with top 10 posterior inclusion probability (PIP), obtained from MTM algorithm with $N=50$ and averaged over 5 chains. Blue are genetic variants reported by \cite{zhou2021dimension}.}
\label{table:gwas1}
\begin{tabular}{|r|ll|ll|ll|ll|}
  \hline
 & \multicolumn{2}{c|}{$w_{\mathrm{ord}}$} &  \multicolumn{2}{c|}{$w_{\mathrm{sqrt}}$} &  \multicolumn{2}{c|}{$w_{\mathrm{min}}$}&  \multicolumn{2}{c|}{$w_{\mathrm{max}}$}\\
 & Name & PIP &  Name & PIP  &  Name & PIP  & Name & PIP  \\
  \hline
1 & \tcb{rs1063192} & 0.987 & \tcb{rs10483727} & 0.997 & \tcb{rs1063192} & 0.998 & \tcb{rs1063192} & 0.982 \\ 
  2 & \tcb{rs653178} & 0.979 & \tcb{rs1063192} & 0.979 & \tcb{rs653178} & 0.976 & \tcb{rs10483727} & 0.976 \\ 
  3 & \tcb{rs10483727} & 0.972 & \tcb{rs653178} & 0.978 & \tcb{rs10483727} & 0.974 & \tcb{rs653178} & 0.972 \\ 
  4 & \tcb{rs2275241} & 0.908 & \tcb{rs2275241} & 0.914 & \tcb{rs2275241} & 0.915 & \tcb{rs2275241} & 0.903 \\ 
  5 & \tcb{rs319773} & 0.806 & \tcb{rs319773} & 0.798 & \tcb{rs4557053} & 0.802 & \tcb{rs319773} & 0.827 \\ 
  6 & \tcb{rs4557053} & 0.77 & \tcb{rs4557053} & 0.773 & \tcb{rs319773} & 0.777 & \tcb{rs4557053} & 0.752 \\ 
  7 & rs2369705 & 0.667 & rs2369705 & 0.678 & rs2369705 & 0.671 & rs2369705 & 0.667 \\ 
  8 & \tcb{rs10491971} & 0.619 & \tcb{rs10491971} & 0.639 & \tcb{rs10491971} & 0.63 & \tcb{rs10491971} & 0.633 \\ 
  9 & rs3177954 & 0.598 & rs3177954 & 0.604 & rs3177954 & 0.595 & rs3177954 & 0.602 \\ 
  10 & rs11087973 & 0.567 & rs11087973 & 0.581 & rs11087973 & 0.576 & rs3843894 & 0.533 \\ 
   \hline
\end{tabular}
\end{table}

\begin{table}[h]
\centering
\caption{Genetic variants with top 10 posterior inclusion probability (PIP), obtained from MTM algorithm with $N=100$ and averaged over 5 chains. Blue are genetic variants reported by \cite{zhou2021dimension}.}
\label{table:gwas2}
\begin{tabular}{|r|ll|ll|ll|ll|}
  \hline
 & \multicolumn{2}{c|}{$w_{\mathrm{ord}}$} &  \multicolumn{2}{c|}{$w_{\mathrm{sqrt}}$} &  \multicolumn{2}{c|}{$w_{\mathrm{min}}$}&  \multicolumn{2}{c|}{$w_{\mathrm{max}}$}\\
 & Name & PIP &  Name & PIP  &  Name & PIP  & Name & PIP  \\
  \hline
1 & \tcb{rs10483727} & 1 & \tcb{rs10483727} & 0.993 & \tcb{rs10483727} & 0.988 & \tcb{rs10483727} & 1 \\ 
  2 & \tcb{rs1063192} & 0.996 & \tcb{rs1063192} & 0.99 & \tcb{rs653178} & 0.97 & \tcb{rs1063192} & 0.982 \\ 
  3 & \tcb{rs653178} & 0.973 & \tcb{rs653178} & 0.973 & \tcb{rs1063192} & 0.953 & \tcb{rs653178} & 0.975 \\ 
  4 & \tcb{rs2275241} & 0.918 & \tcb{rs2275241} & 0.922 & \tcb{rs2275241} & 0.914 & \tcb{rs2275241} & 0.925 \\ 
  5 & \tcb{rs319773} & 0.795 & \tcb{rs4557053} & 0.806 & \tcb{rs319773} & 0.802 & \tcb{rs319773} & 0.793 \\ 
  6 & \tcb{rs4557053} & 0.795 & \tcb{rs319773} & 0.78 & \tcb{rs4557053} & 0.792 & \tcb{rs4557053} & 0.765 \\ 
  7 & rs2369705 & 0.673 & rs2369705 & 0.663 & rs2369705 & 0.662 & rs2369705 & 0.63 \\ 
  8 & \tcb{rs10491971} & 0.626 & \tcb{rs10491971} & 0.624 & \tcb{rs10491971} & 0.617 & rs11087973 & 0.604 \\ 
  9 & rs3177954 & 0.618 & rs3177954 & 0.62 & rs3177954 & 0.598 & \tcb{rs10491971} & 0.602 \\ 
  10 & rs11087973 & 0.583 & rs11087973 & 0.597 & rs11087973 & 0.587 & rs11040978 & 0.563 \\ 
   \hline
\end{tabular}
\end{table}

\begin{table}[h]
\centering
\caption{Genetic variants with top 10 posterior inclusion probability (PIP), obtained from MTM algorithm with $N=500$ and averaged over 5 chains. Blue are genetic variants reported by \cite{zhou2021dimension}.}
\label{table:gwas3}
\begin{tabular}{|r|ll|ll|ll|ll|}
  \hline
 & \multicolumn{2}{c|}{$w_{\mathrm{ord}}$} &  \multicolumn{2}{c|}{$w_{\mathrm{sqrt}}$} &  \multicolumn{2}{c|}{$w_{\mathrm{min}}$}&  \multicolumn{2}{c|}{$w_{\mathrm{max}}$}\\
 & Name & PIP &  Name & PIP  &  Name & PIP  & Name & PIP  \\
  \hline
1 & \tcb{rs1063192} & 1 & \tcb{rs1063192} & 1 & \tcb{rs1063192} & 1 & \tcb{rs10483727} & 1 \\ 
  2 & \tcb{rs653178} & 0.992 & \tcb{rs10483727} & 0.983 & \tcb{rs10483727} & 1 & \tcb{rs653178} & 0.925 \\ 
  3 & \tcb{rs2275241} & 0.943 & \tcb{rs653178} & 0.979 & \tcb{rs653178} & 0.973 & \tcb{rs2275241} & 0.915 \\ 
  4 & \tcb{rs319773} & 0.857 & \tcb{rs2275241} & 0.922 & \tcb{rs2275241} & 0.887 & \tcb{rs319773} & 0.839 \\ 
  5 & \tcb{rs10483727} & 0.8 & \tcb{rs319773} & 0.791 & \tcb{rs319773} & 0.792 & \tcb{rs4557053} & 0.828 \\ 
  6 & \tcb{rs4557053} & 0.787 & \tcb{rs4557053} & 0.774 & \tcb{rs4557053} & 0.791 & \tcb{rs1063192} & 0.799 \\ 
  7 & \tcb{rs10491971} & 0.658 & \tcb{rs10491971} & 0.67 & rs2369705 & 0.65 & rs2369705 & 0.764 \\ 
  8 & rs3843894 & 0.579 & rs2369705 & 0.647 & rs3177954 & 0.622 & \tcb{rs10491971} & 0.757 \\ 
  9 & \tcb{rs587409} & 0.553 & rs3177954 & 0.624 & \tcb{rs10491971} & 0.615 & rs12133371 & 0.617 \\ 
  10 & rs3177954 & 0.546 & rs11087973 & 0.613 & rs11087973 & 0.561 & rs11087973 & 0.605 \\
   \hline
\end{tabular}
\end{table}

\begin{table}[h]
\centering
\caption{Genetic variants with top 10 posterior inclusion probability (PIP), obtained from MTM algorithm with $N=1000$ and averaged over 5 chains. Blue are genetic variants reported by \cite{zhou2021dimension}.}
\label{table:gwas4}
\begin{tabular}{|r|ll|ll|ll|ll|}
  \hline
 & \multicolumn{2}{c|}{$w_{\mathrm{ord}}$} &  \multicolumn{2}{c|}{$w_{\mathrm{sqrt}}$} &  \multicolumn{2}{c|}{$w_{\mathrm{min}}$}&  \multicolumn{2}{c|}{$w_{\mathrm{max}}$}\\
 & Name & PIP &  Name & PIP  &  Name & PIP  & Name & PIP  \\
  \hline
1 & \tcb{rs653178} & 0.959 & \tcb{rs1063192} & 0.991 & \tcb{rs1063192} & 1 & \tcb{rs10483727} & 1 \\ 
  2 & \tcb{rs2275241} & 0.952 & \tcb{rs653178} & 0.982 & \tcb{rs10483727} & 1 & \tcb{rs653178} & 0.999 \\ 
  3 & \tcb{rs319773} & 0.918 & \tcb{rs10483727} & 0.939 & \tcb{rs653178} & 0.994 & \tcb{rs2275241} & 0.886 \\ 
  4 & \tcb{rs10483727} & 0.8 & \tcb{rs2275241} & 0.927 & \tcb{rs2275241} & 0.918 & rs2369705 & 0.842 \\ 
  5 & rs2369705 & 0.737 & \tcb{rs4557053} & 0.866 & \tcb{rs319773} & 0.791 & \tcb{rs319773} & 0.841 \\ 
  6 & \tcb{rs10491971} & 0.713 & \tcb{rs319773} & 0.859 & \tcb{rs4557053} & 0.779 & \tcb{rs1063192} & 0.804 \\ 
  7 & \tcb{rs4557053} & 0.704 & rs2369705 & 0.651 & \tcb{rs10491971} & 0.652 & rs3177954 & 0.787 \\ 
  8 & rs3177954 & 0.688 & \tcb{rs10491971} & 0.637 & rs2369705 & 0.646 & \tcb{rs10491971} & 0.768 \\ 
  9 & rs12133371 & 0.615 & rs3177954 & 0.6 & rs3177954 & 0.613 & rs12133371 & 0.656 \\ 
  10 & \tcb{rs1063192} & 0.6 & rs11087973 & 0.574 & rs11087973 & 0.588 & rs3843894 & 0.65 \\
   \hline
\end{tabular}
\end{table}

\begin{table}[h]
\centering
\caption{Genetic variants with top 10 posterior inclusion probability (PIP), obtained from MTM algorithm with $N=2000$ and averaged over 5 chains. Blue are genetic variants reported by \cite{zhou2021dimension}.}
\label{table:gwas5}
\begin{tabular}{|r|ll|ll|ll|ll|}
  \hline
 & \multicolumn{2}{c|}{$w_{\mathrm{ord}}$} &  \multicolumn{2}{c|}{$w_{\mathrm{sqrt}}$} &  \multicolumn{2}{c|}{$w_{\mathrm{min}}$}&  \multicolumn{2}{c|}{$w_{\mathrm{max}}$}\\
 & Name & PIP &  Name & PIP  &  Name & PIP  & Name & PIP  \\
  \hline
1 & \tcb{rs2275241} & 1 & \tcb{rs653178} & 0.983 & \tcb{rs1063192} & 0.998 & \tcb{rs1063192} & 1 \\ 
  2 & \tcb{rs653178} & 1 & \tcb{rs10483727} & 0.922 & \tcb{rs653178} & 0.971 & \tcb{rs10483727} & 0.997 \\ 
  3 & \tcb{rs319773} & 0.819 & \tcb{rs2275241} & 0.857 & \tcb{rs2275241} & 0.916 & \tcb{rs653178} & 0.986 \\ 
  4 & rs11087973 & 0.805 & \tcb{rs1063192} & 0.829 & \tcb{rs319773} & 0.794 & \tcb{rs2275241} & 0.948 \\ 
  5 & \tcb{rs1063192} & 0.8 & \tcb{rs4557053} & 0.824 & \tcb{rs4557053} & 0.771 & \tcb{rs319773} & 0.87 \\ 
  6 & \tcb{rs4557053} & 0.755 & \tcb{rs10491971} & 0.775 & \tcb{rs10483727} & 0.724 & rs2369705 & 0.675 \\ 
  7 & rs1460509 & 0.737 & \tcb{rs319773} & 0.763 & \tcb{rs10491971} & 0.714 & \tcb{rs4557053} & 0.634 \\ 
  8 & rs3843894 & 0.729 & rs2369705 & 0.697 & rs3177954 & 0.672 & rs3177954 & 0.527 \\ 
  9 & rs3858886 & 0.648 & rs11040978 & 0.651 & rs2369705 & 0.628 & rs2567344 & 0.519 \\ 
  10 & \tcb{rs10483727} & 0.61 & \tcb{rs587409} & 0.612 & rs3843894 & 0.576 & rs11634375 & 0.472 \\ 
   \hline
\end{tabular}
\end{table}
\begin{table}[h]
\centering
\caption{Genetic variants with top 10 posterior inclusion probability (PIP), obtained from MTM algorithm with $N=5000$ and averaged over 5 chains. Blue are genetic variants reported by \cite{zhou2021dimension}.}
\label{table:gwas6}
\begin{tabular}{|r|ll|ll|ll|ll|}
  \hline
 & \multicolumn{2}{c|}{$w_{\mathrm{ord}}$} &  \multicolumn{2}{c|}{$w_{\mathrm{sqrt}}$} &  \multicolumn{2}{c|}{$w_{\mathrm{min}}$}&  \multicolumn{2}{c|}{$w_{\mathrm{max}}$}\\
 & Name & PIP &  Name & PIP  &  Name & PIP  & Name & PIP  \\
  \hline
1 & rs2151280 & 0.4 & \tcb{rs1063192} & 1 & \tcb{rs1063192} & 1 & \tcb{rs10483727} & 1 \\ 
  2 & \tcb{rs10483727} & 0.367 & \tcb{rs10483727} & 1 & \tcb{rs10483727} & 1 & \tcb{rs653178} & 0.997 \\ 
  3 & rs12457539 & 0.29 & \tcb{rs653178} & 0.983 & \tcb{rs653178} & 0.972 & \tcb{rs319773} & 0.901 \\ 
  4 & rs10508818 & 0.255 & \tcb{rs319773} & 0.804 & \tcb{rs2275241} & 0.821 & rs3177954 & 0.844 \\ 
  5 & rs3858886 & 0.231 & rs2369705 & 0.791 & \tcb{rs4557053} & 0.732 & \tcb{rs2275241} & 0.668 \\ 
  6 & rs7995962 & 0.207 & \tcb{rs2275241} & 0.772 & \tcb{rs319773} & 0.709 & rs12457539 & 0.649 \\ 
  7 & rs12125527 & 0.2 & rs3177954 & 0.732 & rs3177954 & 0.662 & \tcb{rs1063192} & 0.6 \\ 
  8 & rs2738755 & 0.2 & \tcb{rs587409} & 0.652 & rs12133371 & 0.593 & \tcb{rs587409} & 0.6 \\ 
  9 & rs6661853 & 0.2 & \tcb{rs10491971} & 0.636 & rs11087973 & 0.569 & rs4924156 & 0.51 \\ 
  10 & rs9869577 & 0.2 & \tcb{rs4557053} & 0.611 & \tcb{rs587409} & 0.565 & rs4236601 & 0.498 \\ 
   \hline
\end{tabular}
\end{table}

\end{document}